\icmltitlerunning{Private Counting from Anonymous Messages}
\newcommand{\Procedure}[2]{\STATE {{\bf procedure} {\sc #1}}} 
\newcommand{\State}{\STATE \hspace*{1em}}
\newcommand{\Return}{{\bf return \/}}
\newcommand{\EndProcedure}{}
\newcommand{\nc}{\newcommand}
\newcommand{\DMO}{\DeclareMathOperator}
\newcommand{\poly}{\text{poly}}
\nc{\MS}{\mathcal{S}}
\nc{\MP}{\mathcal{P}}
\nc{\MR}{\mathcal{R}}
\nc{\cM}{\mathcal{M}}
\nc{\MZ}{\mathcal{Z}}
\newcommand{\cX}{\mathcal{X}}
\DMO{\Binom}{Binom}
\newcommand{\E}{\mathbb{E}}
\DMO{\Var}{Var}
\newcommand{\cA}{\mathcal{A}}
\newcommand{\ba}{\mathbf{a}}
\newcommand{\bo}{\mathbf{o}}
\newcommand{\bx}{\mathbf{x}}
\newcommand{\by}{\mathbf{y}}
\newcommand{\R}{\mathbb{R}}
\newcommand{\N}{\mathbb{N}}
\newcommand{\diff}{\text{diff}}
\newcommand{\hZ}{\hat{Z}}
\newcommand{\Zz}{\Z_{\geq 0}}
\nc{\BN}{\mathbb{N}}
\newcommand{\Z}{\mathbb{Z}}
\nc{\BZ}{\mathbb{Z}}
\newcommand{\cR}{\mathcal{R}}
\newcommand{\bone}{\mathbf{1}}
\newcommand{\eps}{\varepsilon}
\renewcommand{\epsilon}{\varepsilon}
\nc{\hy}{\hat{y}}
\nc{\ep}{\eps}
\newcommand{\cD}{\mathcal{D}}
\DeclareMathOperator{\supp}{supp}
\DeclareMathOperator{\DLap}{DLap}
\DeclareMathOperator{\Bin}{Bin}
\DeclareMathOperator{\Ber}{Ber}
\DeclareMathOperator{\Poi}{Poi}
\DeclareMathOperator{\NB}{NB}
\DeclareMathOperator{\DCP}{DCP}
\newtheorem{theorem}{Theorem}
\newtheorem{observation}[theorem]{Observation}
\newtheorem{lemma}[theorem]{Lemma}
\newtheorem{definition}[theorem]{Definition}
\newtheorem{corollary}[theorem]{Corollary}
\newtheorem{remark}[theorem]{Remark}
\begin{document}

\twocolumn[
\icmltitle{Private Counting from Anonymous Messages: \\
Near-Optimal Accuracy with Vanishing Communication Overhead
}



\icmlsetsymbol{equal}{*}

\begin{icmlauthorlist}
\icmlauthor{Badih Ghazi}{googlemtv}
\icmlauthor{Ravi Kumar}{googlemtv}
\icmlauthor{Pasin Manurangsi}{googlemtv}
\icmlauthor{Rasmus Pagh}{ituc,googlemtv}
\end{icmlauthorlist}

\icmlaffiliation{googlemtv}{Google Research, Mountain View}

\icmlaffiliation{ituc}{IT University of Copenhagen}

\icmlcorrespondingauthor{Badih Ghazi}{badihghazi@gmail.com}
\icmlcorrespondingauthor{Ravi Kumar}{ravi.k53@gmail.com}
\icmlcorrespondingauthor{Pasin Manurangsi}{pasin@google.com}
\icmlcorrespondingauthor{Rasmus Pagh}{pagh@itu.dk}

\icmlkeywords{Differential Privacy, Summation, Communication, Shuffled Model, Encode--Shuffle--Analyze}
\vskip 0.3in
]

\printAffiliationsAndNotice{\icmlEqualContribution}

\begin{abstract}


Differential privacy (DP) is a formal notion for quantifying the privacy loss of algorithms.  Algorithms in the central model of DP achieve high accuracy but make the strongest trust assumptions whereas those in the local DP model make the weakest trust assumptions but incur substantial accuracy loss. The shuffled DP model~\cite{bittau17, erlingsson2019amplification, CheuSUZZ19} has recently emerged as a feasible middle ground between the central and local models, providing stronger trust assumptions than the former while promising higher accuracies than the latter.
In this paper, we obtain practical communication-efficient algorithms in the shuffled DP model for two basic aggregation primitives used in machine learning: 1) binary summation, and 2) histograms over a moderate number of buckets.  Our algorithms achieve accuracy that is arbitrarily close to that of central DP algorithms with an expected communication per user essentially matching what is needed without any privacy constraints! 
We demonstrate the practicality of our algorithms by experimentally 
comparing their performance to several widely-used protocols such as Randomized Response~\cite{warner1965randomized} and RAPPOR~\cite{erlingsson2014rappor}.

\end{abstract}

\section{Introduction}

Motivated by the need for scalable, distributed privacy-preserving machine learning, there has been an intense interest, both in academia and industry, on designing algorithms with low communication overhead and high accuracy while protecting potentially sensitive, user-specific information. While many notions of privacy have been  proposed, \emph{differential privacy} (DP)~\cite{dwork2006calibrating,dwork2006our} has become by far the most popular and well-studied candidate, leading to several real-world deployments at companies such as Google~\cite{erlingsson2014rappor,CNET2014Google}, Apple~\cite{greenberg2016apple,dp2017learning}, and Microsoft~\cite{ding2017collecting}, and in government agencies such as the U.S.~Census Bureau \cite{abowd2018us}. Most research has focused on the \emph{central} model of DP where a curator, who sees the raw user data, is required to release a private data structure. 
While many accurate DP  algorithms have been discovered in this framework, the requirement that the curator observes the raw data constitutes a significant obstacle to deployment in many industrial settings where the users do not necessarily trust the central authority. To circumvent this limitation, several works have studied the \emph{local} model of DP~\cite{kasiviswanathan2008what} (also~\cite{warner1965randomized}), which enforces the more stringent constraint that each message sent from a user device to the server is  private. While requiring near-minimal trust assumptions, the local model turns out to inherently suffer from large estimation errors. For numerous basic tasks, including binary summation and histograms that we study in this work, errors are at least on the order of $\sqrt{n}$, where $n$ is the number of users~\cite{beimel2008distributed,ChanSS12}.

{\bf Shuffled Privacy Model.} The shuffled (aka.~anonymous) model of privacy has recently generated significant interest as a potential compromise between the central and local frameworks: having trust assumptions better than the former but enabling estimation accuracies higher than the latter. While the shuffled model was originally studied in the field of cryptography by~\citet{ishai2006cryptography} in their work on cryptography from anonymity, it was first suggested as a framework for privacy-preserving computations by~\citet{bittau17} in their Encode-Shuffle-Analyze architecture. 
This setting only requires the multiset of \emph{anonymized} messages that are transmitted by the different users to be private. Equivalently, this corresponds to the setup where a trusted shuffler \emph{randomly permutes} all incoming messages from the users before passing them to the analyzer. This is illustrated in Figure~\ref{fig:esa}.
We point out that several efficient cryptographic implementations of the shuffler have been considered including mixnets, onion routing, secure hardware, and third-party servers (see, e.g., \cite{ishai2006cryptography, bittau17} for more details). As in all previous work on the shuffled model, we  treat the shuffler as a black box.

\begin{figure}[h]
\centering
\includegraphics[width=0.5\textwidth]{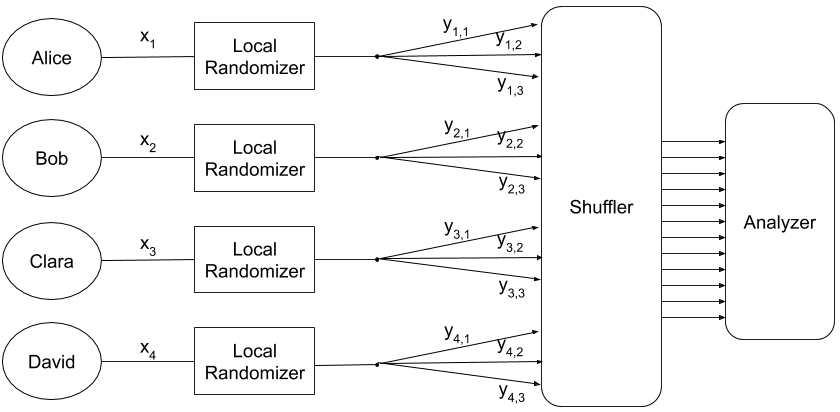}
\caption{The Shuffled Model.}
\label{fig:esa}
\end{figure}

DP in the shuffled model was first formally investigated, independently, by~\citet{erlingsson2019amplification} and~\citet{CheuSUZZ19}. Several recent works have aimed to determine the optimal trade-offs between communication, accuracy, and privacy in this model for various algorithmic tasks~\cite{BalleBGN19,ghazi2019scalable,anon-power, ghazi2019private, balle_merged, balcer2019separating}.

{\bf Summation.} One of the most basic distributed computation problems is \emph{summation} (aka.~aggregation) where the goal of the analyzer is to estimate the sum of the user inputs. In machine learning, and specifically in the nascent field of federated learning \cite{konevcny2016federated} (see, e.g., \cite{kairouz2019advances} for a recent survey), private summation enables private Stochastic Gradient Descent (SGD), which in turn allows the private training of deep neural networks that are guaranteed not to overfit to any user-specific information. Moreover, summation is perhaps the most primitive functionality in database systems in general, and in private implementations in particular (see, e.g., \cite{kotsogiannis2019privatesql,wilson2019differentially,SYKM17}).

A notable special case is \emph{binary} summation (aka. \emph{counting query}) where each user holds a bit as an input and the goal of the analyzer is to estimate the number of users whose input equals $1$. The vector version of this problem captures, e.g., the case where gradients have been quantized to bits in order to reduce the communication cost  (e.g., the $1$-bit SGD of~\citet{seide20141}). As observed in~\citet{BlumDMN05}, binary summation is of particular interest in ML since it is sufficient for implementing any learning algorithm based on \emph{statistical queries}~\cite{kearns1998efficient}, which includes most of the known PAC-learning algorithms.

Several recent works have studied private summation in the shuffled model \cite{CheuSUZZ19,BalleBGN19,ghazi2019scalable,balle_merged,ghazi2019private,pure-dp-shuffled}. These results achieve DP with parameters $\varepsilon$, $\delta$ (defined in Section~\ref{sec:preliminaries}). 
\citet{CheuSUZZ19} showed that the standard Randomized Response (which goes back to~\citet{warner1965randomized} in the local DP case) is $(\epsilon, \delta)$-DP and incurs a squared error of $O(\frac{1}{\epsilon^2} \cdot \log \frac{1}{\delta})$ with high probability. All mentioned works have also studied \emph{real} summation, culminating in an $(\epsilon,\delta)$-DP protocol in the shuffled model with error arbitrarily close to a discrete Laplace random variable with parameter $1/\epsilon$, and where each user sends $O(1 + \frac{\log(1/\delta)}{\log n})$ messages of $O(\log{n})$ bits each \cite{ghazi2019private, balle_merged}.

{\bf Histograms.} A generalization of the binary summation problem is that of computing \emph{histograms} (aka.~\emph{frequency oracles} or \emph{frequency estimation}), where each user holds an element from some finite set $[B] := \{1,\dots,B\}$ and the goal of the analyzer is to estimate for all $j \in [B]$, the number of users holding element $j$ as input. Computing histograms is fundamental in data analytics and is well-studied in DP (e.g.,~\cite{KBR16,AS19,Suresh19}), as private histogram procedures can be used as a black-box to solve important algorithmic problems such as \emph{heavy hitters} (e.g., \cite{bassily2017practical}) as well as unsupervised machine learning tasks such as \emph{clustering} (e.g., \cite{stemmer2020locally}); furthermore, computing histogram is intimately related to \emph{distribution estimation} (e.g.,~\cite{KBR16,AS19}). In central DP, the smallest possible estimation error for histograms is known to be $\Theta(\min(\frac{\log(1/\delta)}{\epsilon}, \frac{\log {B}}{\epsilon}, n))$ (e.g., Section 7.1 in~\citet{Vadhan-tutorial}). On the other hand, the smallest possible error in local DP is $\Theta(\min(\frac{\sqrt{n \log{B}}}{\epsilon}, n))$ provided $\delta < 1/n$~\cite{bassily2015local}. In the shuffled DP setting and for $\epsilon$ a constant and $\delta$ inverse-polynomial in $n$, the tight estimation error for single-message protocols (where each user sends a single message) is $\tilde \Theta( \min \{ n^{1/4}, \sqrt{B} \})$, whereas multi-message protocols with both error and per-user communication that are logarithmic in $B$ and $n$ are known~\cite{anon-power, erlingsson2020encode}. Recently,~\citet{balcer2019separating} obtained a protocol with error independent of $B$ but logarithmic in $n$, albeit with a per-user communication of $O(B)$ messages each consisting of $O(\log{B})$ bits.

Two recent works \cite{wang2019practical, erlingsson2020encode} studied private histograms in extensions of the shuffled model to multiple shufflers. \citet{wang2019practical} uses Randomized Response whereas~\citet{erlingsson2020encode} uses a fragmented version of RAPPOR \cite{erlingsson2014rappor}.

In this work we focus on the regime where $B  \ll n$, which captures numerous practical scenarios since the number of buckets is typically small compared to the population size.

{\bf Subsequent Works.} Since a conference version of this work~\cite{conference-version} was published, there have been numerous works exploring the shuffled model. Most relevant to us are our follow-up work~\cite{icml21-real-sum} which extends the techniques in this paper to work with real-value summation, and the work of Cheu and Zhilyaev~\cite{Cheu-Zhilyaev-histogramfake} that gave a different protocol for histogram in the shuffled model where each user sends two messages (assuming $n \gg \log B$). Several works have also extended the study of shuffled model beyond simple aggregation tasks; for example,~\cite{BalcerCJM21,ChenG0M21} considers the count distinct problem whereas~\cite{one-round-clustering} considers clustering problems.

\subsection{Main Results}
For the binary summation problem, we give the first  private protocol in the shuffled model achieving mean squared error (MSE) arbitrarily close to the central performance of the Discrete Laplace mechanism while having an \emph{expected} communication per user of $1+o(1)$ messages of 1 bit each.
\begin{theorem}[\bf Binary Summation Protocol]\label{th:bin_agg_nearly_one_bit}
For every $\epsilon \le O(1)$ and every $\delta, \gamma \in (0, 1/2)$, there is an $(\epsilon, \delta)$-DP protocol for binary summation in the multi-message shuffled model with error equal to a Discrete Laplace random variable with parameter $(1-\gamma)\epsilon$ and with an expected communication per user of $1+O\left(\frac{\log^2(1/\delta)}{\gamma \epsilon^2 n}\right)$ bits.
\end{theorem}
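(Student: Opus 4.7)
I would design a multi-message protocol in which every user always sends one bit of data (their input $x_i$) and, with low probability, a small number of extra ``noise'' bits, so that the analyzer can extract an estimate of $S$ with error distributed exactly as $\text{DLap}((1-\gamma)\epsilon)$. The construction has three pieces: a per-user noise decomposition, a protocol with post-processing, and a privacy analysis.

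For the decomposition I would use the identity $\text{DLap}(a) = G_1 - G_2$ for i.i.d.\ $G_1, G_2 \sim \text{Geom}(1-e^{-a})$ at $a = (1-\gamma)\epsilon$, together with infinite divisibility of the negative binomial: if each user independently samples $A_i, B_i \sim \text{NB}(r/n, 1-e^{-a})$ for a shape parameter $r$ to be chosen, then $A := \sum_i A_i$ and $B := \sum_i B_i$ are i.i.d.\ $\text{NB}(r, 1-e^{-a})$ while each individual $A_i, B_i$ equals $0$ with probability close to $1$. The protocol has user $i$ transmit $x_i$ followed by $A_i$ extra ones and $B_i$ extra zeros; the analyzer sees the pair of counts $(N_1, N_0) = (S + A, \, n - S + B)$, samples fresh local randomness $G^\star$ (which is free post-processing), and releases $\hat S = N_1 - G^\star$, choosing $G^\star$ so that $\hat S - S \sim \text{DLap}((1-\gamma)\epsilon)$ exactly (for $r = 1$, $G^\star \sim \text{Geom}(1-e^{-a})$ works immediately via the $\text{Geom}-\text{Geom}$ decomposition).

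The main obstacle is the $(\epsilon, \delta)$-DP analysis for the view $(N_1, N_0)$. A direct likelihood-ratio calculation between neighboring inputs (which shift $(N_1, N_0)$ by $(\pm 1, \mp 1)$) yields the product of one factor from each count; for $r = 1$ the $e^{-a}$ contribution from $N_1$ cancels exactly against the $e^a$ contribution from $N_0$ on the common support, so the only privacy loss stems from the boundary events $\{A = 0\}$ and $\{B = 0\}$. These have bare probability $\approx 2(1-e^{-a})$, which is too large to absorb into $\delta$ in the interesting regime. The fix is to inflate $r$ so the aggregate noise is concentrated well away from $0$: setting $r$ so that $\Pr[\min(A,B) \geq T] \geq 1-\delta$ for a threshold $T = \Theta(\log(1/\delta)/(\gamma\epsilon))$ makes the effective likelihood ratio bounded by $e^\epsilon$ everywhere except on a $\delta$-probability event, with the $\gamma\epsilon$ gap between $a$ and $\epsilon$ absorbing the slack in the boundary terms that no longer cancel exactly once $r > 1$; a matching adjustment to the analyzer's $G^\star$ (essentially convolving a $\text{Geom}$ with the compensator for the extra negative-binomial shape) is needed to keep the error marginal exactly DLap. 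Finally, the per-user expected communication is $1 + E[A_i + B_i] \cdot L$ bits, where $L = O(\log(1/\delta)/\epsilon)$ is the length of each encoded noise message and $E[A_i] = \Theta(\log(1/\delta)/(\gamma\epsilon n))$ from the chosen $r$, giving the stated $1 + O(\log^2(1/\delta)/(\gamma\epsilon^2 n))$. The most delicate step is simultaneously meeting all three constraints---$\delta$-bounded boundary mass, exact $\text{DLap}$ marginal for the error, and the stated communication---which is where the construction must be carefully tuned.
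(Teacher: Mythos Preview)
Your plan has a genuine gap at precisely the step you flag as ``most delicate.'' With $A\sim\NB(r,e^{-a})$ and $a=(1-\gamma)\eps$, the compensator $G^\star$ you need (independent of $A$ with $A-G^\star\sim\DLap(a)$) does not exist once $r>2$: equating characteristic functions forces
\[
\phi_{G^\star}(t)=\frac{(1-pe^{-it})^{r-1}}{(1-p)^{r-2}(1-pe^{it})},\qquad p=e^{-a},
\]
and $|\phi_{G^\star}(\pi)|=\big(\tfrac{1+p}{1-p}\big)^{r-2}>1$, so this is not a valid characteristic function. Since your privacy fix requires $r=\Theta(\log(1/\delta))$, you cannot simultaneously get exact $\DLap((1-\gamma)\eps)$ error and $(\eps,\delta)$-DP by analyst-side post-processing. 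Relatedly, your communication accounting drifts: you first say messages are single bits (so $L=1$), but then multiply by a per-message length $L=O(\log(1/\delta)/\eps)$ to reach the target bound; one of these has to give.

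The idea you are missing is the paper's \emph{correlated} third noise source. The paper sets $\cD^1=\cD^2=\NB(1,e^{-(1-\gamma)\eps})$ and a separate mask $\cD^3=\NB(r,p)$ with its own (much flatter) parameters, and has each user send $x+Z^1+Z^3$ increments and $Z^2+Z^3$ decrements. Because the \emph{same} $Z^3$ is added to both counts, it cancels in the analyzer's difference $U_{+1}-U_{-1}$, so the error is exactly $Z^1-Z^2\sim\DLap((1-\gamma)\eps)$ regardless of how large $Z^3$ is. Privacy is then obtained by making $\cD^3$ alone $(\gamma\eps,\cdot)$-DP for $\Delta$-summation with $\Delta=\Theta(\log(1/\delta)/\eps)$ (Theorem~\ref{thm:dp-neg-noise} and Theorem~\ref{thm:nb-privacy}), which decouples the ``error'' noise ($r=1$) from the ``privacy'' noise (large $r$, different $p$). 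Your two-noise scheme with independent $A,B$ tied to the same parameter $e^{-a}$ cannot achieve this decoupling; adding the shared $Z^3$ is what makes the construction go through.
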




We extend Theorem~\ref{th:bin_agg_nearly_one_bit} to a protocol for histograms that, with a moderate number of buckets, has error arbitrarily close to the central DP performance of the Discrete Laplace mechanism while using essentially minimal communication.
\begin{corollary}[\bf Histogram Protocol]\label{th:histograms_single_bit}
For every $\epsilon \le O(1)$ and every $\delta, \gamma \in (0, 1/2)$, there is an $(\epsilon, \delta)$-DP protocol for histograms on sets of size $B$ in the multi-message shuffled model, with error equal to a vector of independent Discrete Laplace random variables each with parameter $\frac{(1-\gamma)\epsilon}{2}$ and with an expected number of messages sent per user equal to $1+O\left(\frac{B\log^2(1/\delta)}{\gamma \epsilon^2 n}\right)$, each consisting of $\lceil\log{B}\rceil + 1$ bits.
\end{corollary}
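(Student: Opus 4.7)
The plan is to decompose the histogram problem into $B$ parallel binary-summation instances, one per bucket, and invoke Theorem~\ref{th:bin_agg_nearly_one_bit} on each with privacy parameters $(\epsilon/2, \delta/2)$ and accuracy parameter $\gamma$. For bucket $j$, the effective binary input of user $i$ is $\mathbf{1}[x_i = j]$, and the target output is the sum of these bits across all users.

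The protocol I would design works as follows. Each user $i$ with input $x_i \in [B]$ transmits one real message of the form $(x_i, 1)$, encoded in $\lceil \log B \rceil + 1$ bits, and in addition, for every bucket $j \in [B]$, generates independent noise messages using the noise-generation scheme of Theorem~\ref{th:bin_agg_nearly_one_bit} (instantiated at privacy level $(\epsilon/2, \delta/2)$ and accuracy $\gamma$), tagging each such message with $j$. After the shuffler permutes all messages, the analyzer partitions them by bucket id and applies the analyzer from Theorem~\ref{th:bin_agg_nearly_one_bit} to each per-bucket sub-multiset.

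For the analysis, a neighboring change corresponds to one user switching from bucket $j$ to bucket $j'$, which only alters the real-message portion of two bucket sub-views (bucket $j$ loses a ``$1$'' and bucket $j'$ gains one), while all noise contributions and all other sub-views are distributed identically. Since the $B$ sub-views are mutually independent by construction, the likelihood ratio between neighbors factors into $B$ per-bucket ratios, of which $B - 2$ are identically $1$; applying the $(\epsilon/2, \delta/2)$-DP guarantee of Theorem~\ref{th:bin_agg_nearly_one_bit} on each of the two remaining factors then yields overall $(\epsilon, \delta)$-DP. Accuracy and communication follow from invoking Theorem~\ref{th:bin_agg_nearly_one_bit} per bucket: the per-bucket error is an independent $\DLap((1-\gamma)\epsilon/2)$ random variable, and the expected message count is $1$ real message plus $O\bigl(\log^2(1/\delta)/(\gamma \epsilon^2 n)\bigr)$ noise messages per bucket, totaling $1 + O\bigl(B \log^2(1/\delta)/(\gamma \epsilon^2 n)\bigr)$ messages of $\lceil \log B \rceil + 1$ bits each.

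The main obstacle to verify is that restricting to bucket $j$'s sub-multiset, which contains only the real ``$1$''-messages from users actually in that bucket (and not ``$0$''-messages from the other users) together with the noise messages tagged $j$, is still covered by Theorem~\ref{th:bin_agg_nearly_one_bit}. The resolution is that the binary-summation protocol of Theorem~\ref{th:bin_agg_nearly_one_bit} depends only on the aggregate sum of the messages in its multiset, and omitting bits of value $0$ has no effect on that sum, on the per-user noise distribution, or on the sensitivity of the sum under a unit change in one user's bit; hence the privacy and accuracy guarantees of Theorem~\ref{th:bin_agg_nearly_one_bit} apply verbatim within each bucket.
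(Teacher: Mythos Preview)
Your proposal is correct and follows essentially the same route as the paper: reduce the histogram problem to $B$ parallel binary-summation instances (one per bucket, with each user's effective bit being $\mathbf{1}[x_i=j]$), run the protocol of Theorem~\ref{th:bin_agg_nearly_one_bit} at level $(\eps/2,\delta/2,\gamma)$ on each bucket with bucket-tagged messages, and obtain $(\eps,\delta)$-DP via two-fold composition (a single-user change affects exactly two buckets while all other per-bucket views are identically distributed and independent). The paper's proof is structurally identical, invoking Basic Composition explicitly for the privacy step and carrying out the same per-bucket accuracy and communication accounting.
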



For the standard setting of constant $\epsilon$ and $\delta$ inverse-polynomial in $n$ and for an arbitrarily small positive constant~$\gamma$, the expected communication per user in Theorem~\ref{th:bin_agg_nearly_one_bit} is $1+o(1)$ bits. Note that $1$ bit of communication per user is required for accurate estimation of the binary summation even in the absence of any privacy constraints.  Likewise, the expected communication per user in Corollary~\ref{th:histograms_single_bit} is $\lceil\log{B}\rceil + 1 + o(1)$ bits.  Here again, $\log{B}$ bits of communication per user is required for accurate estimation of the histogram even in the absence of any privacy constraints.

A natural question in the context of Theorem~\ref{th:bin_agg_nearly_one_bit} and Corollary~\ref{th:histograms_single_bit} is whether the same accuracy and communication can be achieved by a \emph{single-message} protocol in the shuffled model. For histograms, this is impossible given the $\tilde \Omega( \min \{ n^{1/4}, \sqrt{B} \})$ lower bound of~\citet{anon-power} on the $\ell_{\infty}$-error of any single-message protocol whereas the expected $\ell_{\infty}$ error in Corollary~\ref{th:histograms_single_bit} is at most $O(\frac{\log{B}}{\epsilon})$. We prove that this is also impossible for binary summation:
\begin{theorem}[\bf Binary Summation Lower Bound]\label{th:bin_sum_lb_intro}
Let $\delta = 1/n^{\Omega(1)}$ and $\epsilon \le O(1)$. Then, any $(\epsilon, \delta)$-DP protocol for Binary Summation in the single-message shuffled model should incur an expected squared error of at least $\Omega(\log{n})$.
\end{theorem}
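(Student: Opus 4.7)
The plan is to convert the single-message shuffled $(\epsilon,\delta)$-DP guarantee into an upper bound on the chi-squared divergence between neighbouring output distributions, and then use the Hammersley--Chapman--Robbins (HCR) inequality to translate that into an MSE lower bound. Fix any single-message protocol with randomizer $R:\{0,1\}\to M$ and analyzer $A$; write $p_b(m)=\Pr[R(b)=m]$ and $L(m)=p_1(m)/p_0(m)$. Let $D_T$ denote the input with $T$ ones and $n-T$ zeros; the shuffler outputs the multiset (equivalently the histogram $h$) of $n$ independent messages.

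The first step is a clean combinatorial identity. Grouping ordered $n$-tuples by which $T$ positions carry the $p_1$-samples, one computes
\[
\frac{P_{D_T}(h)}{P_{D_0}(h)} \;=\; \binom{n}{T}^{-1}\!\!\sum_{\substack{S\subseteq[n]\\|S|=T}} \prod_{i\in S} L(m_i),
\]
where $m_1,\dots,m_n$ are the elements of $h$. Under $P_{D_0}$ this is a symmetric $U$-statistic of order $T$ with mean $1$ and first-order variance component $\Var_{p_0}[L]=\chi^2(p_1\|p_0)$, so the Hoeffding decomposition gives, to leading order in $n$,
\[
\chi^2\!\left(P_{D_T}\,\|\,P_{D_0}\right) \;=\; \frac{T^2}{n}\,\chi^2(p_1\|p_0).
\]
For $T=1$ this just says that the likelihood ratio is the affine statistic $W(h)=\tfrac1n\sum_m h_m L(m)$, whose variance under the multinomial $P_{D_0}$ is exactly $\chi^2(p_1\|p_0)/n$ --- the crucial ``low variance'' forced by single-message shuffling.

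The second step upper-bounds $\chi^2(p_1\|p_0)$ using DP. Because $W$ is an affine function of a multinomial vector, a Bernstein / Berry--Esseen argument makes it subgaussian around mean $1$ with variance $\chi^2(p_1\|p_0)/n$. The $(\epsilon,\delta)$-DP hypothesis implies the hockey-stick bound $\int(W-e^\epsilon)_+\,dP_{D_0}\le\delta$, and combining this with the subgaussian tail forces
\[
\chi^2(p_1\|p_0) \;\le\; C\,\frac{n\epsilon^2}{\log(1/\delta)}
\]
for an absolute constant $C$. \emph{I expect this step to be the main obstacle}: $(\epsilon,\delta)$-DP does not bound chi-squared in general, and the argument really has to exploit the Gaussian-like concentration coming from shuffling $n$ i.i.d.\ randomizer outputs. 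Technical subtleties include coordinates where $p_0(m)=0$ (restrict to $\supp(p_0)$; if $p_1$ has noticeable mass outside, DP is violated outright) and messages where $L(m)$ is huge while $p_0(m)$ is exponentially small (handle via truncation plus a union bound over such rare messages).

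For the final step, suppose an estimator $\hat k$ has expected squared error at most $\sigma^2$ on every input, and set $T:=\lceil 4\sigma\rceil$. The bias bounds $|\E_{D_0}[\hat k]|\le\sigma$ and $|\E_{D_T}[\hat k]-T|\le\sigma$ give $\E_{D_T}[\hat k]-\E_{D_0}[\hat k]\ge T/2$, and HCR then yields
\[
\sigma^2 \;\ge\; \Var_{P_{D_0}}(\hat k) \;\ge\; \frac{(T/2)^2}{\chi^2(P_{D_T}\|P_{D_0})} \;=\; \frac{n}{4\,\chi^2(p_1\|p_0)} \;\ge\; \frac{\log(1/\delta)}{4C\epsilon^2}.
\]
Note that the $T^2$'s cancel, so the bound is insensitive to the size of $\sigma$; for $\delta=1/n^{\Omega(1)}$ and $\epsilon\le O(1)$ the right-hand side is $\Omega(\log n)$, proving the theorem.
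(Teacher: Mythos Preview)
Your step~3 is where the argument breaks. You claim that shuffled $(\epsilon,\delta)$-DP alone forces $\chi^2(p_1\|p_0)\le Cn\epsilon^2/\log(1/\delta)$, via subgaussian concentration of the likelihood ratio $W=\tfrac1n\sum_i L(m_i)$. But $W$ need not concentrate at the scale of its variance, and neither Bernstein nor Berry--Esseen rescues this when $L$ is unbounded. Concretely, take a two-message randomizer with $p_0(b)=1/n^{3}$, $p_1(b)=1/n$ (and $p_0(a)=1-1/n^{3}$, $p_1(a)=1-1/n$). Then $L(b)\approx n^{2}$ and $\chi^2(p_1\|p_0)\approx n$. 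Under $P_{D_0}$ the count of $b$'s is $\Bin(n,1/n^{3})$, equal to $0$ except with probability $\approx 1/n^{2}$; under $P_{D_1}$ it is $0$ except with probability $\approx 1/n$. One checks $d_\epsilon(P_{D_1}\|P_{D_0})\approx 1/n$, so this randomizer satisfies the $(\epsilon,\delta)$-DP constraint on the pair $D_0,D_1$ with $\delta=1/n$, yet $\chi^2(p_1\|p_0)\approx n\gg n\epsilon^2/\log n$. Here $W\approx 1$ with probability $1-1/n^{2}$ and $W\approx 1+n$ with probability $1/n^{2}$: its variance is $O(1)$ but its subgaussian norm is $\Theta(n)$, so the concentration step cannot deliver the bound you need. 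This randomizer is of course useless for estimation, but your argument never invokes the MSE hypothesis until step~4, so the $\chi^2$ bound in step~3 would have to hold for \emph{every} $(\epsilon,\delta)$-DP randomizer---and it does not.

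The paper's proof proceeds in the opposite order: it uses the accuracy assumption \emph{first}, then DP. Via a reduction of Balle--Bell--Gasc\'on--Nissim one may assume the randomizer outputs a real in $[0,1]$ and the analyzer simply sums; a direct MSE calculation then gives $\sum_y\min\{p_0(y),p_1(y)\}\le 8\alpha/n$, so almost all of $p_0$ sits on a set $S_0$ where $p_0>p_1$. Finally, the event ``all $n$ messages land in $S_0$'' has probability at least $(1-8\alpha/n)^n$ under the all-zeros input but at most $8\alpha/n$ under its neighbour, and comparing these two against the $(\epsilon,\delta)$-DP inequality forces $\alpha=\Omega(\log n)$. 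The moral is that small MSE is what pins down the local distributions well enough to exploit privacy; if you want to run an HCR-style argument, you would first need an accuracy-based constraint that tames the heavy tail of $L$, and without it step~3 cannot go through.
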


In light of the lower bound in Theorem~\ref{th:bin_sum_lb_intro} and the aforementioned lower bound of~\citet{anon-power}, it is striking that the protocols in Theorem~\ref{th:bin_agg_nearly_one_bit} and Corollary~\ref{th:histograms_single_bit} can get arbitrarily close to the central performance of the Discrete Laplace mechanism while being \emph{almost} single-message: the vast majority of users send a single message (consisting of a single bit in the binary summation protocol and $\lceil\log{B}\rceil+1$ bits in the histogram protocol) while only a random $o(1)$ fraction of users sends more than one message!

We point out that, as in previous work in the shuffled and local models of DP, the communication costs in Theorem~\ref{th:bin_agg_nearly_one_bit} and Corollary~\ref{th:histograms_single_bit} exclude the encryption costs. However, as different messages sent by the users have to be encrypted separately, the encryption overhead increases with the number of messages, and hence our almost single-message protocols would be even more appealing compared to other multi-message procedures as in \citet{ghazi2019private, balle_merged,anon-power} when the encryption costs are taken into account.

{\bf Experimental Evaluation.}
We implement our algorithms and compare their performance to several alternatives proposed in the literature, both for binary summation and histogram. For the latter, we evaluate the algorithms on public $1940$ US Census IPUMS dataset, considering both categorical and numerical features. Our experiments support our theoretical analysis: for a broad setting of $n, \eps, \delta$, and $B$, we incur small communication overhead while achieving near-central errors that are noticeably smaller than previous protocols. 
%
The experimental results are presented in the
appendix.

\begin{remark}
We note that our algorithms in Theorem~\ref{th:bin_agg_nearly_one_bit} and Corollary~\ref{th:histograms_single_bit} can be used to learn the empirical distribution of the users' data up a small error. In light of the near-optimality properties of the Discrete Laplace distribution in the central DP model \cite{ghosh2012universally}, our algorithms are also close to optimal. This holds for general error measures including the $\ell_1$, $\ell_2$, and $\ell_{\infty}$ norms, which are well-studied in the literature on distribution estimation and learning (e.g., \cite{KBR16,AS19}).
\end{remark}

\subsection{Overview of Techniques}\label{sec:overview_tech}

Before outlining the proof of Theorem~\ref{th:bin_agg_nearly_one_bit}, we first note that the Discrete Laplace mechanism in the central model  incurs only a \emph{constant} MSE.  To get a similar bound in the shuffled model, any single-message protocol---in particular, Randomized Response and RAPPOR~\cite{erlingsson2014rappor} (which for binary summation coincides with Randomized Response)---is ruled out by Theorem~\ref{th:bin_sum_lb_intro}.  Furthermore, the recent histogram protocols of~\citet{anon-power,erlingsson2020encode}, are also not applicable since they all incur an MSE of $\Omega(\log{n})$.  Theorem~\ref{th:bin_agg_nearly_one_bit} also guarantees vanishing communication overhead: this rules out the split-and-mix protocol in~\citet{ghazi2019private, balle_merged} and a recent protocol of~\citet{pure-dp-shuffled}.

We next recall the prototypical private binary summation procedures in the central setup. If user $i$'s input is $x_i$, then the analyzer simply computes the correct sum $\sum_{i \in [n]} x_i$ and then adds to it a random variable sampled from some probability distribution $\cD$. A common choice of $\cD$ is the Discrete Laplace distribution with parameter $\epsilon$, which yields an $(\epsilon,0)$-DP protocol for binary summation with an asymptotically tight MSE of $O(1/\epsilon^2)$; this is known to be optimal in the central DP model \cite{ghosh2012universally}.

{\bf Using Infinitely Divisible Distributions.}
In order to emulate the prototypical central model mechanism in the shuffled model, we need to distribute both the signal and the noise over the $n$ users. Distributing the signal can be naturally done by having each user $i$ merely send their true input bit $x_i$. Distributing the noise is significantly more challenging since the shuffled model is symmetric and does not allow coordination of noise across users. Consider the framework, captured in Algorithms~\ref{alg:randomizer} and~\ref{alg:analyzer} on page~\pageref{alg:randomizer}, where (a) each user sends (possibly several) bits to the shuffler and (b) the analyzer counts the number of 1s received from the shuffler and outputs it as a proxy for the true sum (possibly after subtracting a fixed bias term). In this case, we would need to decompose the noise random variable into $n$ i.i.d.~non-negative components, and have each user sample and transmit one component in unary. Distributions that are decomposable into the sum of $n$ i.i.d. (not necessarily non-negative) samples for any positive integer $n$ are well-studied in probability theory and are known as \emph{infinitely divisible}. In DP, the Discrete Laplace distribution was observed to be infinitely divisible by~\citet{goryczka2015comprehensive}, and this property  was used by~\citet{balle_merged} for real summation in the shuffled model, albeit with several messages per user, each consisting of $\Omega(\log{n})$ bits. However, decomposing the Discrete Laplace distribution into a sum of i.i.d.~\emph{non-negative} samples---as required by our template above---is clearly impossible since its support contains negative values.

One basic discrete non-negative infinitely divisible distribution is the \emph{Poisson} distribution with parameter $\lambda$, which can be sampled by summing $n$ i.i.d.~samples from a Poisson distribution with parameter $\lambda/n$, for any positive integer $n$. The resulting \emph{Poisson mechanism} can thus be used as a candidate binary summation procedure in the shuffled model. It turns out that this mechanism is $(\epsilon, \delta)$-DP if we set $\lambda$ to $O\left(\frac{\log(1/\delta)}{\epsilon^2}\right)$ (see Theorem~\ref{thm:poisson-dp}). In this case, the expected communication cost of transmitting the per-user noise is equal to the expectation $\lambda/n$, which is much smaller than~$1$. 
We note that, for a reason explained in Section~\ref{subsec:nb}, we can further reduce the communication by considering the Negative Binomial distribution $\NB(r, p)$. This distribution is infinitely divisible as a random sample from $\NB(r, p)$ can be generated by summing $n$ i.i.d.~samples from $\NB(r/n, p)$, for any positive integer $n$.

Unfortunately, it turns out we cannot hope to achieve near-central accuracy using any \emph{non-negative} infinitely divisible distribution. Specifically, we prove in Section~\ref{sec:lb-simple-distributed} that for every such noise distribution, the incurred MSE will grow asymptotically with $\log(1/\delta)$. This is in sharp contrast with the error in the central model, which is independent of $\delta$.

{\bf Unary Encoding and Correlated Noise.}
Instead, the support of our noise distribution has to also contain negative values. To allow this, a natural extension of the above template algorithm is to let each message consists of either an increment (e.g., $+1$) value or a decrement (e.g., $-1$) value. This template leads to a distributed noise strategy that can achieve near-central accuracy, described next. We know from~\citet{goryczka2015comprehensive} that the Discrete Laplace distribution with parameter $\epsilon$ is the same as the distribution of the difference of two independent $\NB(1, e^{-\epsilon})$ random variables, and is thus infinitely divisible. This noise can be distributed in the shuffled model by letting each user sample two independent random variables $Z^1$ and $Z^2$ from $\NB(1/n, e^{-\epsilon})$, and send $Z^1$ increment messages and $Z^2$ decrement messages to the shuffler. This mechanism would achieve the same error as the central Discrete Laplace mechanism. However, since the analyzer can still see the number of increment messages, this scheme is no more private than the (non-negative) mechanism with noise distribution $\NB(1, e^{-\epsilon})$, and thus cannot be $(\epsilon, \delta)$-DP by virtue of the lower bound (Section~\ref{sec:lb-simple-distributed}).

To leverage the power of sending both positive and negative messages, we correlate the input-dependent and noise components sent by the users so that the analyzer is unable to extract much information about the user inputs from one type of messages. We do so by employing a \emph{unary version} of the split-and-mix procedure of~\citet{ishai2006cryptography,ghazi2019private, balle_merged}. Namely, in addition to the aforementioned random variables $Z^1$ and $Z^2$, each user will independently sample a third random variable $Z^3$ from another infinitely divisible distribution, and will send $Z^1 + Z^3$ increment messages and $Z^2 + Z^3$ decrement messages (see Algorithm~\ref{alg:randomizer2} on page~\pageref{alg:randomizer2}). Note that in this case, when $Z^3$ is sufficiently ``spread out'', the analyzer cannot extract much information from counting the number of increments alone, since the noise from $Z^3$ already overwhelms the user inputs. 
We formalize this intuition by proving that, for carefully selected infinitely divisible noise distributions, the resulting mechanism is $(\epsilon, \delta)$-DP and incurs an error that can be made arbitrarily close to that of the central Discrete Laplace mechanism, while incurring an expected communication overhead per user that goes to $0$ with as $n$ increases.


To prove Corollary~\ref{th:histograms_single_bit}, we run the binary summation protocol in parallel on all $B$ buckets, and instead of sending $\pm 1$ valued messages, we concatenate each with the length $\lceil\log{B}\rceil$ binary expansion of the index of the bucket being incremented/decremented.
While a straightforward implementation of the randomizer has a running time of $\Omega(B)$, we show, using the characterization of infinitely divisible distributions in terms of Discrete Compound Poisson (DCP) distributions, that the expected running time can be significantly reduced to the order of the expected per-user communication cost.

{\bf Size-Freeness.}
Infinitely divisible noise mechanisms are much easier to deploy in practice compared to general schemes. This is because for fixed $(\epsilon, \delta)$, the ``noise parameter'' of infinitely divisible mechanisms is independent of the number of users $n$: e.g., in the case of the Poisson Mechanism, the parameter $\lambda$ only depends on $\eps$ and $\delta$. In contrast, computing near-optimal parameters in the shuffled model of the noise parameters for non-infinitely divisible schemes such as Randomized Response~\cite{warner1965randomized} and RAPPOR~\cite{erlingsson2014rappor} requires re-running a time-expensive algorithm for each new value of $n$ (see the appendix for more details). This can be undesirable in practice, especially for real-time applications.


\subsection{Organization}
We provide some background and notation in Section~\ref{sec:preliminaries}. 
In Sections~\ref{sec:dist_mech} and~\ref{sec:corr_dist_mech} we prove Theorem~\ref{th:bin_agg_nearly_one_bit}. 
Our experimental setup and results are presented in Section~\ref{sec:exp_eval_results}. 
We conclude with some open questions in Section~\ref{sec:conc_oqs}.
Corollary~\ref{th:histograms_single_bit} and Theorem~\ref{th:bin_sum_lb_intro} are proved
in Appendices~\ref{sec:red_bit_sum_to_hist} and~\ref{sec:lb_bit_sum_single_message}.

\section{Preliminaries}\label{sec:preliminaries}

Let $n$ be the number of users and $[n] := \{1,\dots,n\}$. In this work, we only consider discrete probability distributions that are supported on (possibly negative) integers.   We write $X \sim \cD$ to denote a random variable $X$ sampled according to $\cD$.  We let $\supp(\cD)$ denote the support of $\cD$, $\E[\cD]$ its mean, and $\Var(\cD)$ its variance. For two distributions $\cD$ and $\cD'$, we denote by $\cD + \cD'$ the distribution of $X + X'$ where $X$ and $X'$ are independently sampled from $\cD$ and $\cD'$ respectively; $\cD - \cD'$ is defined similarly. For integer $k$, we denote by $k + \cD$ the distribution of $k + X$ when $X \sim \cD$. For any $x \in \Z$ we let $\cD(x)$ denote $\Pr_{Z \sim \cD}[Z = x]$. 
We denote by $\bx \sim \bx'$ two datasets (i.e., $n$-dimensional vectors) $\bx$ and $\bx'$ differing on a single user's data (i.e., a single coordinate). 
\begin{definition}[Differential Privacy \cite{dwork2006our,dwork2006calibrating}]\label{def:dp}
For any parameters $\epsilon \geq 0$ and $\delta \in [0,1]$, a randomized mechanism $\cM$ is \emph{$(\epsilon, \delta)$-differentially private (DP)} if for every pair of $\bx \sim \bx'$ and for every subset $\MS$ of transcripts of $\cM$, it holds that $\Pr[\cM(\bx) \in \MS] \leq e^\epsilon \cdot \Pr[\cM(\bx') \in \MS] + \delta$, where the probabilities are over the randomness in $\cM$.
\end{definition}


{\bf Shuffled Model.}
A protocol in the shuffled privacy model consists of three procedures: a \emph{local randomizer} that sends one or several messages depending on its input, a \emph{shuffler} that randomly permutes all incoming messages, and an \emph{analyzer} that takes in the output of the shuffler and returns the final output of the protocol. Privacy is required to be guaranteed with respect to the output of the shuffler.

\section{The $\cD$-Distributed Mechanisms}\label{sec:dist_mech}

In this section, we propose and study a family of simple mechanisms in the shuffled model for the binary summation problem. While the mechanisms in this section do not achieve the accuracy promised in Theorem~\ref{th:bin_agg_nearly_one_bit}, they will serve as an important building block to our eventual algorithm in Section~\ref{sec:corr_dist_mech}. In fact, we will need the privacy guarantee of these mechanisms against a generalization of bit summation called $\Delta$-summation defined as follows. For $\Delta \in \N$, in the \emph{$\Delta$-summation} task, the input to each user is a number $x_i$ in $\{0, \dots, \Delta\}$ and the goal is to compute $\sum_{i \in [n]} x_i$. When $\Delta = 1$, this task is the same as binary summation.

To define our protocol, we first recall that a standard strategy for achieving DP in the central model is to simply add noise to the correct answer. We will refer to such a mechanism the \emph{$\cD$ Mechanism} when the noise distribution is $\cD$.

\begin{definition}
For any distribution $\cD$, the \emph{$\cD$ Mechanism}  for computing a function $f: \cX^n \to \Z^d$ is defined as the mechanism that, on input $\bx \in \cX^n$, outputs $f(\bx) + (Y_1, \dots, Y_d)$ where $Y_i \sim \cD, i \in [d]$ are independent.
\end{definition}

The definition of the $\cD$ Mechanism applies even when $\supp(\cD)$ contains negative integers, but in this section we focus on distributions $\cD$ that are supported on non-negative integers and that are infinitely divisible as defined next.

\begin{definition}[Infinite Divisibility]
A distribution $\cD$ is said to be \emph{infinitely divisible} (abbreviated \emph{$\infty$-div}) if for every $n \in \N$, there exists a distribution $\cD_{/n}$ such that $(X_1 + \cdots + X_n) \sim \cD$ where $X_i \sim \cD_{/n}, i \in [d]$ are independent.
\end{definition}

For an $\infty$-div $\cD$ on non-negative integers, we define the \emph{$\cD$-Distributed Mechanism} in the shuffled model as follows:

\begin{algorithm}[t]
\caption{$\cD$-Distributed Randomizer.} \label{alg:randomizer}
\begin{algorithmic}[1]
\Procedure{Randomizer$_{\cD, n}(x)$}{}
\State Sample $Z \sim \cD_{/n}$ 
\State Send $x + Z$ messages, where each message is 1
\end{algorithmic}
\end{algorithm}
\begin{algorithm}[H]
\caption{$\cD$-Distributed Analyzer.} \label{alg:analyzer}
\begin{algorithmic}[1]
\Procedure{Analyzer$_{\cD}$}{}
\State $U \leftarrow$ number of messages received
\State \Return $U - \E[\cD]$
\EndProcedure
\end{algorithmic}
\end{algorithm}

{\bf Privacy.} Observe that, from the analyzer's perspective, it only sees $U$ (because all the messages are identical) and $U$ is distributed exactly as $\sum_{i \in [n]} x_i + \cD$ by $\infty$-div of $\cD$. From this, we immediately get that the privacy guarantee of the $\cD$-Distributed Mechanism in the \emph{shuffled} model is the same as that of the $\cD$ Mechanism in the \emph{central} model.

\begin{observation} \label{obs:dp-simple}
For any $\epsilon > 0$ and $\delta \in (0,1)$, the $\cD$-Distributed Mechanism is $(\eps, \delta)$-DP in the shuffled model for $\Delta$-summation if and only if the $\cD$ Mechanism is $(\eps, \delta)$-DP in the central model for $\Delta$-summation.
\end{observation}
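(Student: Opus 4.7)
The plan is to show that, for every input $\bx \in \{0, \dots, \Delta\}^n$, the transcript of the shuffled $\cD$-Distributed Mechanism and the output of the central $\cD$ Mechanism on input $\bx$ are identically distributed (up to an easily invertible reparameterization). Since $(\epsilon,\delta)$-DP is a property that depends only on the output distribution of a mechanism, the equivalence will follow immediately in both directions.

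First I would unpack the shuffler's transcript. Every message sent by any randomizer is the constant $1$, so the shuffled multiset of messages is entirely determined by its cardinality $U$. Thus, up to a measure-preserving bijection between transcripts and nonnegative integers, the shuffled transcript is just the integer-valued random variable $U$. By construction, $U = \sum_{i \in [n]} (x_i + Z_i)$ where $Z_1, \dots, Z_n \sim \cD_{/n}$ are i.i.d. By the definition of infinite divisibility (and of $\cD_{/n}$), the sum $Z := \sum_{i\in[n]} Z_i$ is distributed as $\cD$. Hence $U$ is distributed exactly as $\sum_{i\in[n]} x_i + \cD$, which is precisely the output distribution of the central $\cD$ Mechanism on input $\bx$.

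From this distributional identity, both implications are immediate. If the central $\cD$ Mechanism is $(\eps,\delta)$-DP for $\Delta$-summation, then for any $\bx \sim \bx'$ and any set $S \subseteq \mathbb{Z}$, $\Pr[U(\bx) \in S] \le e^\eps \Pr[U(\bx') \in S] + \delta$, which lifts to any set of shuffled transcripts via the bijection with message counts; so the shuffled mechanism is $(\eps,\delta)$-DP. Conversely, if the shuffled mechanism is $(\eps,\delta)$-DP, applying its guarantee to sets of transcripts indexed by $\{m : m \in S\}$ recovers the central guarantee for the $\cD$ Mechanism.

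I do not foresee a main obstacle here: the observation is essentially a restatement of what the paragraph before the observation already argues informally. The only tiny care is to note that the shuffler's random permutation of identical $1$-messages is a deterministic function of the count $U$, so no information beyond $U$ is leaked to the analyzer and no extra noise is introduced by the shuffling step; everything else reduces to infinite divisibility.
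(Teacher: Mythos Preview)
Your proposal is correct and follows exactly the same approach as the paper: the paper's justification (given in the paragraph immediately preceding the observation) is precisely that the analyzer's view reduces to the count $U$, which by infinite divisibility is distributed as $\sum_i x_i + \cD$, making the shuffled and central mechanisms have identical output distributions. You have simply spelled out both directions of the equivalence more explicitly than the paper does.
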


{\bf Accuracy.} As discussed above, we are guaranteed in Algorithm~\ref{alg:analyzer} that $U = \sum_{i \in [n]} x_i + \cD$, which gives:

\begin{observation} \label{obs:util-simple}
The error of the $\cD$-distributed Mechanism is distributed as $\cD - \E[\cD]$.
\end{observation}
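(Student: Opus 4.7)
The plan is to unwind the definitions and use the infinite divisibility of $\cD$ directly. Let the users' inputs be $x_1, \dots, x_n$ and let the true answer be $S := \sum_{i \in [n]} x_i$. Each user $i$ samples $Z_i \sim \cD_{/n}$ independently and transmits $x_i + Z_i$ identical messages. Thus the total number of messages received by the analyzer is
\begin{equation*}
U \;=\; \sum_{i \in [n]} (x_i + Z_i) \;=\; S + \sum_{i \in [n]} Z_i.
\end{equation*}

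By the definition of infinite divisibility, $\sum_{i \in [n]} Z_i \sim \cD$, so $U$ is distributed as $S + \cD$. Since the analyzer returns $U - \E[\cD]$, the error of the mechanism is
\begin{equation*}
(U - \E[\cD]) - S \;=\; \sum_{i \in [n]} Z_i - \E[\cD],
\end{equation*}
which is distributed as $\cD - \E[\cD]$, as claimed.

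There is essentially no obstacle here: the observation is an immediate consequence of the defining property of $\cD_{/n}$, together with the fact that the $x_i$'s are nonrandom constants and the $Z_i$'s are independent. The only subtlety worth remarking on is that we implicitly rely on the assumption (from the preceding text) that $\cD$ is supported on nonnegative integers, so that $x_i + Z_i$ is a valid nonnegative message count for each user.
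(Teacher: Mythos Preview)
Your proof is correct and matches the paper's own reasoning exactly: the paper simply notes that $U$ is distributed as $\sum_{i\in[n]} x_i + \cD$ (by infinite divisibility) and then states the observation. Your version just spells this out in slightly more detail.
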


{\bf Expected Communication.} The expected number of messages sent by each user in Algorithm~\ref{alg:randomizer} is $x + \E[\cD_{/n}] = x + \frac{\E[\cD]}{n}$. Using the fact that $x \in \{0, \dots, \Delta\}$, we get:

\begin{observation} \label{obs:com-simple}
The expected number of messages sent by a user in the $\cD$-Distributed Mechanism is at most $\Delta + \frac{\E[\cD]}{n}$.
\end{observation}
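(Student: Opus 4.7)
The plan is essentially a one-line computation from the definition of Algorithm~\ref{alg:randomizer} combined with the defining property of infinite divisibility, so the proposal is short.

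First I would note that in Algorithm~\ref{alg:randomizer} a user on input $x$ samples a single $Z \sim \cD_{/n}$ and sends exactly $x + Z$ messages. Hence the expected number of messages sent is
\[
\E[x + Z] \;=\; x + \E[\cD_{/n}].
\]
Next I would use the fact that $\cD$ is infinitely divisible with divisor distribution $\cD_{/n}$: by definition, if $X_1,\dots,X_n$ are i.i.d.\ samples from $\cD_{/n}$, then $X_1+\cdots+X_n \sim \cD$. Taking expectations on both sides and using linearity gives $n \cdot \E[\cD_{/n}] = \E[\cD]$, i.e.\ $\E[\cD_{/n}] = \E[\cD]/n$. (For this step to be meaningful, $\E[\cD]$ must be finite, which is implicit in the statement; otherwise the bound is vacuous.)

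Finally I would use the hypothesis $x \in \{0,\dots,\Delta\}$ to upper-bound $x$ by $\Delta$, yielding the claimed bound
\[
\E[x + Z] \;=\; x + \frac{\E[\cD]}{n} \;\le\; \Delta + \frac{\E[\cD]}{n}.
\]
There is no real obstacle here: the statement is a direct consequence of linearity of expectation and the definition of $\cD_{/n}$, and the only non-routine ingredient is recognizing that infinite divisibility forces $\E[\cD_{/n}] = \E[\cD]/n$ whenever the mean exists.
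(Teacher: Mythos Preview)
Your proposal is correct and matches the paper's own justification essentially verbatim: the paper simply notes that the expected number of messages is $x + \E[\cD_{/n}] = x + \E[\cD]/n$ and then uses $x \in \{0,\dots,\Delta\}$ to bound this by $\Delta + \E[\cD]/n$.
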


\subsection{Example I: The Poisson Mechanism}

Arguably, the simplest protocol in the family of $\cD$-Distributed Mechanisms is the \emph{Poisson Mechanism} that uses the Poisson distribution%
\footnote{$\Poi(\lambda)$ is defined as $\Poi(k; \lambda) = \lambda^k e^{-k} / k!$.}, 
which is $\infty$-div. In this protocol, $\cD$ is $\Poi(\lambda)$ for some $\lambda \in \R^+$ and $\cD_{/n}$ is simply $\Poi(\lambda/n)$.  We now compute its privacy guarantee.

\begin{theorem} \label{thm:poisson-dp}
For any $\epsilon > 0$, $\delta \in (0,1)$, and $\Delta \in \N$, the $\Poi(\lambda)$ Mechanism with
$\lambda = \frac{16 \log(10/\delta)}{(1 - e^{-\eps/\Delta})^2} + \frac{2\Delta}{1 - e^{-\eps/\Delta}}$,
is $(\eps, \delta)$-DP in the central model for $\Delta$-summation.
\end{theorem}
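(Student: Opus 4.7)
The plan is to use the standard privacy-loss-random-variable characterization of approximate differential privacy: it suffices to show that for any pair of neighboring datasets $\bx \sim \bx'$ (which satisfy $|f(\bx) - f(\bx')| \le \Delta$) and for $Y \sim \Poi(\lambda)$,
\[
\Pr\!\left[\,\log \tfrac{\Pr[f(\bx) + \Poi(\lambda) = f(\bx) + Y]}{\Pr[f(\bx') + \Poi(\lambda) = f(\bx) + Y]} > \eps\,\right] \le \delta,
\]
and likewise with the roles of $\bx$ and $\bx'$ swapped. By symmetry I would write $f(\bx') = f(\bx) + c$ with $c \in [0,\Delta]$ (the case $c = 0$ being trivial), and reduce the question to bounding the tail probability of a single Poisson random variable.

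First I would compute the privacy loss explicitly. Using the Poisson PMF,
\[
\log \tfrac{\Pr[\Poi(\lambda) = u]}{\Pr[\Poi(\lambda) = u-c]} \;=\; \sum_{i=u-c+1}^{u} \log(\lambda/i)
\]
whenever $u \ge c$, and $+\infty$ otherwise. This quantity is monotone decreasing in $u$, so it exceeds $\eps$ exactly when $u$ is below some threshold $u^\star_c$. Bounding the sum by its largest term, $\sum_{i=u-c+1}^u \log(\lambda/i) \le c\log(\lambda/(u-c+1))$, gives the sufficient condition $u \ge \lambda e^{-\eps/c} + c - 1$. I would then observe that for $c \in [1,\Delta]$ this threshold is monotone in $c$, so the worst case is $c = \Delta$, yielding a uniform threshold $u^\star := \lceil \lambda e^{-\eps/\Delta} + \Delta - 1\rceil$. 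An analogous calculation in the other direction produces an upper threshold $u^{\star\star} := \lambda e^{\eps/\Delta} - \Delta$, and the inequality $e^{\eps/\Delta} - 1 \ge 1 - e^{-\eps/\Delta}$ shows this case is no worse than the lower-tail one.

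It then suffices to bound $\Pr[\Poi(\lambda) < u^\star]$ by $\delta$. I would invoke the standard Chernoff tail bound $\Pr[\Poi(\lambda) \le \lambda - t] \le \exp(-t^2/(2\lambda))$ for $t \in [0,\lambda]$, applied with $t = \lambda - u^\star \ge \lambda(1 - e^{-\eps/\Delta}) - \Delta$. The additive term $\tfrac{2\Delta}{1 - e^{-\eps/\Delta}}$ in the choice of $\lambda$ guarantees that $t$ is at least $\tfrac12\lambda(1 - e^{-\eps/\Delta})$, so
\[
\frac{t^2}{2\lambda} \;\ge\; \frac{\lambda (1-e^{-\eps/\Delta})^2}{8},
\]
and the $\tfrac{16\log(10/\delta)}{(1-e^{-\eps/\Delta})^2}$ term in $\lambda$ makes this at least $\log(1/\delta)$ with room to spare (the constant $16$ and the $10$ in $\log(10/\delta)$ provide slack to handle the symmetric upper-tail bound and to convert the privacy-loss-in-expectation condition into an $(\eps,\delta)$-DP guarantee).

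The main obstacle I anticipate is the bookkeeping around the monotonicity in $c$: one must rule out that some intermediate $c \in (0,\Delta)$ could produce a worse threshold than $c = \Delta$, and must also verify that the naive single-term bound on $\sum_{i=u-c+1}^u \log(\lambda/i)$ is not too lossy to recover the stated constants. Everything else is a mechanical combination of the explicit Poisson privacy-loss formula and classical Chernoff bounds for Poisson random variables.
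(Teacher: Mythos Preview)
Your approach is correct and essentially identical to the paper's: both compute the likelihood ratio $\Poi(u;\lambda)/\Poi(u-k;\lambda)$, identify the interval of $u$ on which it is bounded by $e^\eps$ (your threshold $u^\star \approx \lambda e^{-\eps/\Delta}+\Delta$ is exactly the paper's interval endpoint), and then bound the Poisson tail outside that interval via a Chernoff-type inequality. The only cosmetic differences are that the paper phrases the reduction via the hockey-stick divergence $d_\eps(\cD\|k+\cD)$ rather than the privacy-loss random variable, and uses a single two-sided bound $\Pr[|Y-\lambda|\ge y]\le 2e^{-y^2/(2(y+\lambda))}$ rather than separate one-sided Chernoff bounds; your anticipated monotonicity-in-$c$ obstacle is dispatched there in one line exactly as you propose (the interval for $|k|\le\Delta$ contains the interval for $|k|=\Delta$).
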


By setting $\Delta = 1$ and using Observations~\ref{obs:dp-simple},~\ref{obs:util-simple}, and~\ref{obs:com-simple}, we get the following for binary summation. 

\begin{corollary} \label{lem:poisson-shuffled-dp}
For any $0 < \eps \le O(1)$ and $\delta \in (0,1)$, let $\lambda$ be as in Theorem~\ref{thm:poisson-dp} with $\Delta = 1$. The $\Poi(\lambda)$-Distributed Mechanism is $(\eps, \delta)$-DP for binary summation in the shuffled model, each user sends at most $1 + O\left(\frac{\log(1/\delta)}{\eps^2 n}\right)$ one-bit messages in expectation, and the MSE is $O\left(\frac{\log(1/\delta)}{\eps^2}\right)$.
\end{corollary}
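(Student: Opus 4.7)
The corollary is a direct consequence of Theorem~\ref{thm:poisson-dp} combined with the three generic observations for $\cD$-Distributed Mechanisms, specialized to $\cD = \Poi(\lambda)$ and $\Delta = 1$. The plan is therefore to verify privacy, communication, and accuracy in turn, and to simplify the expression for $\lambda$ under the regime $\eps \le O(1)$.

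First, for privacy: Theorem~\ref{thm:poisson-dp} with $\Delta = 1$ states that the $\Poi(\lambda)$ Mechanism in the central model is $(\eps,\delta)$-DP for $\Delta$-summation with $\Delta = 1$, i.e., for binary summation. Invoking Observation~\ref{obs:dp-simple} (which tells us the central and shuffled guarantees coincide for any $\infty$-div noise), we immediately obtain that the $\Poi(\lambda)$-Distributed Mechanism is $(\eps,\delta)$-DP in the shuffled model for binary summation. Here $\infty$-divisibility of $\Poi(\lambda)$ is needed, and is exactly the fact that $\Poi(\lambda) = \Poi(\lambda/n) + \cdots + \Poi(\lambda/n)$ (summing $n$ independent copies).

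Next, I would simplify $\lambda$. For $\eps \le O(1)$ (say $\eps \le 1$), a standard inequality gives $1 - e^{-\eps} \ge \eps/2$, hence $(1 - e^{-\eps})^{-2} = O(1/\eps^2)$ and $(1-e^{-\eps})^{-1} = O(1/\eps)$. Plugging these into the formula for $\lambda$ from Theorem~\ref{thm:poisson-dp} yields
\[
\lambda \;=\; O\!\left(\frac{\log(1/\delta)}{\eps^2}\right) + O\!\left(\frac{1}{\eps}\right) \;=\; O\!\left(\frac{\log(1/\delta)}{\eps^2}\right),
\]
where the last equality uses $\delta < 1$ so that $\log(1/\delta) = \Omega(1)$ (absorbing the additive $O(1/\eps)$ term at the cost of a constant factor).

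Finally, the communication and accuracy bounds are read off directly. Observation~\ref{obs:com-simple} with $\Delta = 1$ says the expected number of messages per user is at most $1 + \E[\Poi(\lambda)]/n = 1 + \lambda/n = 1 + O\!\left(\frac{\log(1/\delta)}{\eps^2 n}\right)$, and each message is the single bit ``$1$''. Observation~\ref{obs:util-simple} says the error is distributed as $\Poi(\lambda) - \lambda$, whose mean-squared value equals $\Var(\Poi(\lambda)) = \lambda = O\!\left(\frac{\log(1/\delta)}{\eps^2}\right)$. This gives all three claimed bounds, completing the proof. There is no real obstacle here: the only quantitative step is the elementary estimate $1 - e^{-\eps} \ge \eps/2$ for $\eps \le 1$; the substantive privacy content is deferred to Theorem~\ref{thm:poisson-dp} itself.
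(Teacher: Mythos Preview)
Your proposal is correct and follows exactly the approach the paper indicates: set $\Delta=1$ in Theorem~\ref{thm:poisson-dp} and invoke Observations~\ref{obs:dp-simple},~\ref{obs:util-simple}, and~\ref{obs:com-simple}. Your additional work of explicitly simplifying $\lambda$ via $1-e^{-\eps}\ge \eps/2$ is just making precise what the paper leaves implicit (note only the minor caveat that ``$\delta<1$ so $\log(1/\delta)=\Omega(1)$'' is not literally true for $\delta$ near $1$, but this sloppiness is already baked into the corollary's stated bounds and is standard in this literature).
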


\subsection{Example II: The Negative Binomial Mechanism}
\label{subsec:nb}

A disadvantage of the Poisson Mechanism is that, in the most important regime where $\frac{\eps}{\Delta} \ll 1$, the expected number of messages sent is $1 + O\left(\frac{\log(1/\delta)}{n} \cdot \left(\frac{\Delta}{\eps}\right)^2\right)$. In this subsection, we show how to reduce the dependency on $\frac{\Delta}{\eps}$ from $(\frac{\Delta}{\eps})^2$ to $\frac{\Delta}{\eps}$, while retaining a similar error bound. (As we will see in Section~\ref{sec:corr_dist_mech}, this dependency will also permeate to our eventual algorithm in the proof of Theorem~\ref{th:bin_agg_nearly_one_bit}.) Before doing so, we note that the $(\frac{\Delta}{\eps})^2$ dependency is necessary for the Poisson Mechanism since it is well-known%
\footnote{This follows from the sensitivity of $\Delta$-summation; see, e.g..~\citet{Vadhan-tutorial}.}
that the MSE of any central $(\eps, o(1))$-DP protocol for $\Delta$-summation has to be at least $\Omega((\frac{\Delta}{\eps})^2)$ and since the MSE of the $\Poi(\lambda)$ Mechanism is exactly $\lambda$, we must hence set $\lambda \geq (\frac{\Delta}{\eps})^2$. Thus, the expected number of messages sent per user in the $\Poi(\lambda)$ Mechanism must be $1 + \frac{\lambda}{n} \geq 1 + \Omega\left(\frac{1}{n} \cdot \left(\frac{\Delta}{\eps}\right)^2\right)$.

To circumvent this, we first observe that the above argument holds only because the parameter $\lambda$ of the Poisson Mechanism governs both the MSE (i.e., the variance of the distribution) and the number of messages sent (i.e., the expectation of the distribution). This motivates us to seek an $\infty$-div distribution on the negative integers whose variance and mean can be very different. We consider the negative binomial distribution\footnote{$\NB(r, p)$ is defined as $\NB(k; r, p) = {k+r-1 \choose k} (1-p)^r p^k$.  We remark that the negative binomial distribution may be viewed as a generalization of the Poisson distribution: when taking $r \to \infty$ and letting $p = \lambda/r$, $\NB(r, p)$ point-wise converges to $\Poi(\lambda)$.} $\NB(r, p)$ which is $\infty$-div as $\NB(r, p) = \sum_{i=1}^n \NB(\frac{r}{n}, p)$ for every $n \in \N$. Moreover, $\E[\NB(r, p)] = \frac{pr}{(1 - p)}$ and $\Var[\NB(r, p)] = \frac{pr}{(1 - p)^2}$, which can be very different when $p$ is close to $1$. We next show a DP guarantee for this Negative Binomial Mechanism\footnote{In the conference version of this work~\cite{conference-version}, the values of $r, p$ set in Theorem~\ref{thm:nb-privacy} contained an error. This has now been fixed in this version. Note that this does not affect the experiment results, since we use numerical approaches to compute privacy parameters in the experiments anyway (see the appendix).}.

\begin{theorem}\label{thm:nb-privacy}
For any $\eps, \delta \in (0, 1)$ and $\Delta \in \N$, let $p = e^{-0.2\eps/\Delta}$ and $r = 3\left(1  + \log\left(\frac{1}{\delta}\right)\right)$. The $\NB(r, p)$ Mechanism is $(\eps, \delta)$-DP in the central model for $\Delta$-summation.
\end{theorem}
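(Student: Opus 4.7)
The plan is the standard additive-noise DP recipe: for each neighbour pair, compute the likelihood ratio between the two output distributions, show it is at most $e^\eps$ off a small ``bad set'', and bound the mass of that bad set by $\delta$. Let $Z \sim \NB(r,p)$ and let $\bx \sim \bx'$ be neighbours with sums $s, s'$ and $t := s' - s \in \{-\Delta,\ldots,\Delta\}$. After shifting by $s$, the two DP inequalities become $\Pr[Z \in V] \le e^\eps \Pr[Z \in V - t] + \delta$ and $\Pr[Z \in V] \le e^\eps \Pr[Z \in V + t] + \delta$ for all $V \subseteq \Z$, so the task reduces to bounding the two likelihood ratios $\Pr[Z=\ell]/\Pr[Z=\ell-t]$ and $\Pr[Z=\ell]/\Pr[Z=\ell+t]$.

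First I would compute these ratios exactly. A telescoping of the negative binomial pmf gives, for $\ell \ge t \ge 1$,
\[
\frac{\Pr[Z=\ell]}{\Pr[Z=\ell-t]}=p^t\prod_{i=0}^{t-1}\Bigl(1+\tfrac{r-1}{\ell-i}\Bigr),\qquad \frac{\Pr[Z=\ell]}{\Pr[Z=\ell+t]}=p^{-t}\prod_{i=0}^{t-1}\Bigl(1+\tfrac{r-1}{\ell+1+i}\Bigr)^{-1}.
\]
For $r\ge 1$ every factor in the second product is $\le 1$, so the reverse ratio is bounded by $p^{-t}\le e^{0.2\eps t/\Delta}\le e^{0.2\eps}\le e^\eps$; this direction is therefore $(\eps,0)$-DP automatically, with no tail bound required.

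The remaining direction is the forward one. Using $\log(1+x)\le x$ and $t\le\Delta$, the log of the first ratio is bounded by $-0.2\eps t/\Delta+t(r-1)/(\ell-t+1)$, which is $\le\eps$ once $\ell\ge\ell_0:=t-1+5t(r-1)/(6\eps)$; the full ``bad'' set (including $\ell<t$, where the ratio is infinite) is then contained in $\{\ell<\ell_0\}$, and it remains to show $\Pr[Z<\ell_0]\le\delta$. For this I would use the identity $\Pr[Z\le a]=\Pr[\Bin(a+r,1-p)\ge r]$, reducing the lower tail of $Z$ to an upper tail of a Binomial. Since $\mu:=\E[Z]=rp/(1-p)\gtrsim 5r\Delta/\eps$ (using $1-p\le 0.2\eps/\Delta$) while $\ell_0\le\Delta+5\Delta(r-1)/(6\eps)\lesssim\mu/6$, this is an $\Omega(1)$-relative deviation; the usual Chernoff bound for the Binomial then yields $\Pr[Z<\ell_0]\le\exp(-\Omega(r))\le\delta$ for $r=3(1+\log(1/\delta))$.

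The main obstacle is matching the constants between the two parts of the argument. The constant $0.2$ in $p=e^{-0.2\eps/\Delta}$ must be small enough that the deterministic ratio bound forces $\ell_0$ into the high-probability region of $Z$ (roughly below $\mu/6$), while simultaneously the constant $3$ in $r=3(1+\log(1/\delta))$ must be large enough that the resulting Chernoff exponent exceeds $\log(1/\delta)$. This balance is conceptually routine but numerically tight --- indeed, the authors note in a footnote that an earlier version of the paper had an arithmetic error in exactly these two constants.
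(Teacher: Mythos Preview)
Your approach is essentially the paper's: split into the easy direction ($k\le 0$, where the ratio is at most $p^{-|k|}\le e^{\eps}$) and the hard direction ($k>0$), bound the likelihood ratio $\cD(y)/\cD(y-k)$ by $e^\eps$ above a threshold, and then show the lower tail has mass at most $\delta$. Two small points where the paper differs and you should be careful: (i) the paper bounds the tail directly via the negative-binomial MGF $\E[e^{tY}]=((1-p)/(1-pe^t))^r$, which is valid for all real $r>0$; your identity $\Pr[Z\le a]=\Pr[\Bin(a+r,1-p)\ge r]$ requires integer $r$, which $r=3(1+\log(1/\delta))$ is not, so you would need either to round $r$ or switch to the MGF; (ii) in your threshold $\ell_0=t-1+5t(r-1)/(6\eps)$ the $t$ should be $\Delta$ (otherwise $-0.2\eps t/\Delta + 1.2\eps\le\eps$ fails for $t<\Delta$), matching the paper's $\hat y=\Delta-1+(r-1)/(e^{1.2\eps/\Delta}-1)$.
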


Plugging $\Delta = 1$, we get the following corollary. When compared to Poisson Mechanism (Corollary~\ref{lem:poisson-shuffled-dp}), we achieve the same error bound but with a $\frac{1}{\eps}$ instead of $\frac{1}{\eps^2}$ multiplicative term in the expected number of additional messages sent.

\begin{corollary} \label{lem:nb-shuffled-dp}
For any $\eps, \delta \in (0, 1)$ with $\eps \leq O(1)$, let $p, r$ be as in Theorem~\ref{thm:nb-privacy} with $\Delta = 1$. The $\NB(r, p)$-Distributed Mechanism is $(\eps, \delta)$-DP for binary summation in the shuffled model, each user sends at most $1 + O\left(\frac{\log(1/\delta)}{\eps n}\right)$ one-bit messages in expectation, and the MSE is $O\left(\frac{\log(1/\delta)}{\eps^2}\right)$.
\end{corollary}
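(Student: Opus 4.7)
The plan is to assemble the corollary directly from the three observations of Section~\ref{sec:dist_mech} combined with Theorem~\ref{thm:nb-privacy}, since all three claims (privacy, communication, MSE) reduce to routine specializations of $\Delta = 1$ and a computation of the moments of $\NB(r,p)$ in the regime where $\eps \le O(1)$.

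First, for privacy, I would invoke Theorem~\ref{thm:nb-privacy} with $\Delta = 1$: this directly gives that the $\NB(r, p)$ Mechanism is $(\eps, \delta)$-DP in the central model for binary summation with the stated choice of $p$ and $r$. Then Observation~\ref{obs:dp-simple}, applied with $\cD = \NB(r,p)$ (which is non-negative and $\infty$-div as noted in the discussion preceding the corollary), transfers this guarantee to the $\NB(r,p)$-Distributed Mechanism in the shuffled model. No extra work is needed here.

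Next, for the MSE, I would use Observation~\ref{obs:util-simple}: the error of the $\NB(r,p)$-Distributed Mechanism is distributed as $\NB(r,p) - \E[\NB(r,p)]$, so the MSE equals $\Var(\NB(r,p)) = \frac{pr}{(1-p)^2}$. With $p = e^{-0.2\eps}$ and $\eps \le O(1)$, a first-order expansion gives $1 - p = 1 - e^{-0.2\eps} = \Theta(\eps)$ (using $1 - e^{-x} \ge x/2$ for $x \in (0, 1)$ to lower bound, and $1 - e^{-x} \le x$ to upper bound). Since $p \le 1$, we then get $\Var(\NB(r,p)) = O(r/\eps^2) = O(\log(1/\delta)/\eps^2)$, using $r = 3(1 + \log(1/\delta))$.

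Finally, for the expected communication, I would apply Observation~\ref{obs:com-simple} with $\Delta = 1$ to conclude that each user sends at most $1 + \E[\NB(r,p)]/n$ messages in expectation. We have $\E[\NB(r,p)] = \frac{pr}{1-p} = O(r/\eps) = O(\log(1/\delta)/\eps)$ by the same estimate $1 - p = \Theta(\eps)$, yielding the claimed $1 + O(\log(1/\delta)/(\eps n))$ bound. Each message is a single bit (the constant $1$) as prescribed by Algorithm~\ref{alg:randomizer}. The main (and only) obstacle is the trivial calibration of the constants hidden in $1 - e^{-0.2\eps} = \Theta(\eps)$ for $\eps \le O(1)$; there is no deeper content to verify, so the proof collapses to one or two lines per bullet.
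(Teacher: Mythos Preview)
Your proposal is correct and follows exactly the approach the paper intends: the paper itself does not spell out a proof of this corollary, stating only ``Plugging $\Delta = 1$, we get the following corollary,'' which is precisely the specialization via Observations~\ref{obs:dp-simple},~\ref{obs:util-simple},~\ref{obs:com-simple} together with the moment formulas for $\NB(r,p)$ that you carry out.
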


\subsection{A Lower Bound for $\cD$-Distributed Mechanisms}
\label{sec:lb-simple-distributed}

The downside of the Poisson and Negative Binomial Mechanisms is that they suffer from an MSE of $O_{\eps}(\log(\frac{1}{\delta}))$ instead of the $O(\frac{1}{\eps^2})$ MSE, independent of $\delta$, of the central Discrete Laplace Mechanism. It turns out that this dependency on $\log(\frac{1}{\delta})$ is necessary for every $\cD$-Distributed Mechanism:

\begin{lemma} \label{lem:lb-distributed-mechanism}
For any infinitely divisible distribution $\cD$ on non-negative integers, if $\cD$-Distributed Mechanism is $(\eps, \delta)$-DP in the shuffled model for binary summation, then the MSE of the mechanism is $\Omega_{\eps}(\log(1/\delta))$.
\end{lemma}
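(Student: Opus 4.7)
The plan is to reduce the problem to a statement about the central $\cD$ Mechanism and then exploit the known structural characterization of infinitely divisible distributions on the non-negative integers. By Observation~\ref{obs:util-simple}, the MSE of the $\cD$-Distributed Mechanism is exactly $\Var(\cD)$, so I just need to lower bound $\Var(\cD)$. By Observation~\ref{obs:dp-simple}, the hypothesis gives that the central $\cD$ Mechanism is $(\eps,\delta)$-DP for binary summation.

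The first step is to extract a point-mass bound from DP. Apply the central DP inequality to the neighboring inputs with sums $0$ and $1$ and to the singleton event $\{0\}$. Since $\cD$ is supported on non-negative integers, $\Pr[1+\cD = 0] = 0$, so we obtain $\cD(0) = \Pr[\cD = 0] \leq \delta$.

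The second step is to convert $\cD(0) \leq \delta$ into a variance lower bound using infinite divisibility. Here I invoke the classical characterization (Feller) that every $\infty$-div distribution $\cD$ on the non-negative integers is a Discrete Compound Poisson: there exist $\mu \geq 0$ and a distribution $\mathcal H$ supported on the positive integers such that $\cD$ equals the distribution of $\sum_{i=1}^N X_i$ with $N \sim \Poi(\mu)$ and $X_i \sim \mathcal H$ i.i.d. (Equivalently, the probability generating function factors as $G_\cD(s) = \exp(\mu(H(s)-1))$ with $H(0)=0$.) Since the sum is zero iff $N=0$, this gives the clean identity $\cD(0) = e^{-\mu}$, so from Step~1 we conclude $\mu \geq \ln(1/\delta)$. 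A standard law-of-total-variance calculation yields $\Var(\cD) = \mu\,\E[X_1^2]$; since $X_1 \in \mathbb{Z}_{\geq 1}$ we have $\E[X_1^2] \geq 1$, hence $\Var(\cD) \geq \mu \geq \ln(1/\delta)$.

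Combining with Observation~\ref{obs:util-simple} gives $\mathrm{MSE} = \Var(\cD) = \Omega(\log(1/\delta)) = \Omega_\eps(\log(1/\delta))$, as claimed. The only nontrivial ingredient is the DCP characterization of $\infty$-div distributions on $\mathbb{Z}_{\geq 0}$; everything else is a two-line argument. I do not expect a serious obstacle, although one should be slightly careful to apply DP to a set $S$ of transcripts of the shuffled protocol (which by Observation~\ref{obs:dp-simple} reduces to a set of counts $U$, i.e., a set of integers shifted by the true sum); choosing $S$ so that the event "$U$ equals the smaller of the two sums" lies in $S$ is what produces the key inequality $\cD(0) \leq \delta$.
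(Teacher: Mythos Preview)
Your proof is correct and takes a genuinely different (and cleaner) route than the paper's.

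Both arguments ultimately rely on Feller's characterization that any $\infty$-div distribution on $\mathbb{Z}_{\geq 0}$ is a discrete compound Poisson $\DCP(\mu,\cD')$. The paper uses DP in the form $d_\eps(\cD\|1+\cD)\leq\delta$ to get the recurrence $\cD(i)\leq e^\eps\cD(i-1)+\delta$, unrolls it to bound each $\cD(i)$, and from this shows $\E[\cD]\geq\Omega_\eps(\log(1/\delta))$; it then invokes the DCP-based Observation~\ref{obs:inf-div-mean-vs-var} ($\Var(\cD)\geq\E[\cD]$) to conclude. You instead apply DP only to the singleton $\{0\}$ to get $\cD(0)\leq\delta$, then exploit the DCP structure more directly: since $\cD'$ is supported on positive integers, the compound sum is zero iff the Poisson count $N$ is zero, so $\cD(0)=e^{-\mu}$, hence $\mu\geq\ln(1/\delta)$, and $\Var(\cD)=\mu\,\E[X^2]\geq\mu$. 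Your argument is shorter, avoids the inductive tail computation entirely, and in fact yields the sharper bound $\Var(\cD)\geq\ln(1/\delta)$ with no dependence on~$\eps$ at all, whereas the paper's intermediate constants involve~$\eps$. The paper's route, on the other hand, produces the extra intermediate conclusion $\E[\cD]=\Omega_\eps(\log(1/\delta))$, which may be of independent interest but is not needed for the stated lemma.
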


In other words, $\cD$-Distributed Mechanisms do not suffice for the goal of achieving near-central error guarantees. 

\section{Correlated Distributed Mechanisms}\label{sec:corr_dist_mech}

We next present a family of protocols in the shuffled model that will overcome the lower bound barrier of Lemma~\ref{lem:lb-distributed-mechanism} and achieve an accuracy/privacy trade-off arbitrarily close to the central model. We start by outlining the intuition behind the protocol. First, suppose hypothetically that we could somehow implement the $\cD$ Mechanism in the shuffled model when $\supp(\cD)$ can contain negative integers. Then, we would actually be done! This is because the Discrete Laplace distribution%
\footnote{
$\DLap(s)$ is defined as 
$\DLap(k; s) = \frac{1}{C(s)} \cdot e^{- |k| \cdot s}$, where $C(s) = \sum_{k = -\infty}^{\infty} e^{-|k| \cdot s}$ is the normalization constant.
} is $\infty$-div as $\DLap(\eps) = \NB(1, e^{-\eps}) - \NB(1, e^{-\eps})$ (see, e.g., \cite{kotz2001book}), and $\NB(1, e^{-\eps})$ is $\infty$-div, and is in fact the geometric distribution. Of course, the problem is that if we sample the number of messages from $\DLap(\eps)_{/n}$ we will often try to send a \emph{negative} number of messages, which is meaningless! 

This leads us to using two types of messages: one for increments (denoted $+1$) and one for decrements (denoted $-1$). The $+1$ messages are sampled as in the $\NB(1, e^{-\eps})$ Mechanism, and so are the $-1$ messages except that each user pretends that their input is $0$ in this case. The analyzer's answer is the difference between the number of $+1$ messages and the number of $-1$ messages it receives. The aforementioned fact that the difference of two $\NB(1, e^{-\eps})$ random variables is $\DLap(\eps)$ implies that this protocol has the same accuracy as the central Discrete Laplace Mechanism, as desired. However, this protocol is not $(\eps, \delta)$-DP in the shuffled model. To see this, note that the analyzer sees the number of $+1$ messages received, and hence the protocol is no more private than the $\NB(1, e^{-\eps})$ Mechanism, which is not $(\eps, \delta)$-DP as explained by Lemma~\ref{lem:lb-distributed-mechanism}. To overcome this, we ``mask'' the numbers of $+1$ and $-1$ messages by sampling a random variable $Z$ from $\cD_{/n}$ for some other $\infty$-div $\cD$ over non-negative integers, and additionally send $Z$ increments (i.e., $+1$) and $Z$ decrements (i.e., $-1$). Clearly, this does not affect the accuracy of the analyzer, but we will show that it improves privacy.
For every choice of $\infty$-div $\cD^1, \cD^2, \cD^3$ on non-negative integers, we define the \emph{$(\cD^1, \cD^2, \cD^3)$-Correlated Distributed Mechanism}:
\begin{minipage}{0.48\textwidth}
\begin{algorithm}[H]
\caption{\small $(\cD^1, \cD^2, \cD^3)$-Correlated Distributed Randomizer} \label{alg:randomizer2}
\begin{algorithmic}[1]
\Procedure{Randomizer$_{\cD^1, \cD^2, \cD^3, n}(x)$}{}
\State Sample $Z^1 \sim \cD^1_{/n}$
\State Sample $Z^2 \sim \cD^2_{/n}$
\State Sample $Z^3 \sim \cD^3_{/n}$
\State Send $x + Z^1 + Z^3$ many $+1$ messages.
\State Send $Z^2 + Z^3$ many $-1$ messages.
\EndProcedure
\end{algorithmic}
\end{algorithm}
\end{minipage}
\hfill
\begin{minipage}{0.48\textwidth}
\begin{algorithm}[H]
\caption{\small $(\cD^1, \cD^2, \cD^3)$-Correlated Distributed Analyzer} \label{alg:analyzer2}
\begin{algorithmic}[1]
\Procedure{Analyzer$_{\cD^1, \cD^2}$}{}
\State $U_{+1} \leftarrow$ number of $+1$ messages received.
\State $U_{-1} \leftarrow$ number of $-1$ messages received.
\State \Return $U_{+1} - U_{-1} - \E[\cD^1 - \cD^2]$
\EndProcedure
\end{algorithmic}
\end{algorithm}
\end{minipage}

\textbf{Accuracy and Communication Complexity.}
The accuracy and communication complexity of the protocol can be derived as in the $\cD$-Distributed Mechanism from Section~\ref{sec:dist_mech}:

\begin{observation} \label{obs:util-neg}
The error of $(\cD^1, \cD^2, \cD^3)$-Distributed Mechanism is distributed as $(\cD^1 - \cD^2) - \E[\cD^1 - \cD^2]$.
\end{observation}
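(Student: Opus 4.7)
The plan is to track the number of $+1$ and $-1$ messages that the analyzer receives and simplify the resulting random variable. Concretely, let $Z^1_i, Z^2_i, Z^3_i$ denote the samples drawn by user $i$ from $\cD^1_{/n}, \cD^2_{/n}, \cD^3_{/n}$ respectively, and let $x_i \in \{0,1\}$ be user $i$'s input. From Algorithm~\ref{alg:randomizer2}, the total number of $+1$ messages reaching the shuffler is
\[
U_{+1} \;=\; \sum_{i \in [n]} \bigl(x_i + Z^1_i + Z^3_i\bigr),
\]
and the total number of $-1$ messages is
\[
U_{-1} \;=\; \sum_{i \in [n]} \bigl(Z^2_i + Z^3_i\bigr).
\]
Since the shuffler only permutes messages of the same type among themselves, these counts are exactly what the analyzer observes.

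Next I would subtract to cancel the $Z^3_i$ terms, giving
\[
U_{+1} - U_{-1} \;=\; \sum_{i \in [n]} x_i \;+\; \sum_{i \in [n]} Z^1_i \;-\; \sum_{i \in [n]} Z^2_i.
\]
By the definition of $\infty$-divisibility applied to $\cD^1$ and $\cD^2$, the sums $\sum_i Z^1_i$ and $\sum_i Z^2_i$ are distributed as $\cD^1$ and $\cD^2$ respectively, and they are independent (since all $Z^1_i$'s are independent of all $Z^2_i$'s across users). Therefore $\sum_i Z^1_i - \sum_i Z^2_i$ is distributed exactly as $\cD^1 - \cD^2$.

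Finally, by the definition of the analyzer, the output is $U_{+1} - U_{-1} - \E[\cD^1 - \cD^2]$, so the error, i.e., the difference between the output and the true sum $\sum_i x_i$, is distributed as $(\cD^1 - \cD^2) - \E[\cD^1 - \cD^2]$, as claimed. There is no real obstacle here: the only subtlety to flag is that the $\cD^3$ contributions cancel exactly because the mechanism sends precisely $Z^3$ messages of each sign, which is what enables the same accuracy as the $(\cD^1-\cD^2)$-central mechanism despite the additional masking noise.
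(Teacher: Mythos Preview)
Your proof is correct and matches the paper's approach exactly: the paper states this observation without formal proof, merely noting it ``can be derived as in the $\cD$-Distributed Mechanism,'' and the surrounding text (in the proof overview of Theorem~\ref{thm:dp-neg-noise}) writes $(U_{+1},U_{-1})=(\sum_i x_i+\hZ^1+\hZ^3,\hZ^2+\hZ^3)$ with $\hZ^j\sim\cD^j$, which is precisely your computation.
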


\begin{observation} \label{obs:com-neg}
The expected number of messages sent by each user in the $(\cD^1, \cD^2, \cD^3)$-Distributed Mechanism is at most $1 + \frac{\E[\cD^1] + \E[\cD^2] + 2 \cdot \E[\cD^3]}{n}$.
\end{observation}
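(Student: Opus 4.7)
The plan is to simply count the total number of messages a user sends in Algorithm~\ref{alg:randomizer2}, take expectations, and use linearity of expectation together with the definition of infinite divisibility.

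First I would observe that, examining the randomizer, a user on input $x \in \{0,1\}$ transmits exactly $(x + Z^1 + Z^3)$ messages labeled $+1$ and $(Z^2 + Z^3)$ messages labeled $-1$, for a total of
\[
N \;=\; x + Z^1 + Z^2 + 2Z^3
\]
messages, where $Z^i \sim \cD^i_{/n}$ independently. Taking expectations gives $\E[N] = x + \E[Z^1] + \E[Z^2] + 2\,\E[Z^3]$ by linearity.

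Next I would invoke infinite divisibility to relate $\E[Z^i]$ back to $\E[\cD^i]$. By definition of $\cD^i_{/n}$, if $Y^i_1, \dots, Y^i_n$ are i.i.d.\ copies of $Z^i$, then $Y^i_1 + \cdots + Y^i_n \sim \cD^i$. Linearity of expectation then yields $n \cdot \E[Z^i] = \E[\cD^i]$, so $\E[Z^i] = \E[\cD^i]/n$ for $i \in \{1,2,3\}$. Substituting gives
\[
\E[N] \;=\; x \;+\; \frac{\E[\cD^1] + \E[\cD^2] + 2\,\E[\cD^3]}{n}.
\]
Finally, since we are in the binary summation setting we have $x \in \{0,1\}$, so $x \le 1$, and the claimed bound follows.

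There is essentially no obstacle here; this is a direct computation. The only point worth flagging explicitly is that the three samples $Z^1, Z^2, Z^3$ are independent (as stated in Algorithm~\ref{alg:randomizer2}), though independence is not even needed for the expectation bound thanks to linearity. The step that is doing the real work is the identity $\E[\cD_{/n}] = \E[\cD]/n$, which is immediate from $\infty$-divisibility; this same observation was used implicitly in the analogous Observation~\ref{obs:com-simple} for the single-distribution mechanism.
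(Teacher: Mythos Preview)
Your proof is correct and matches the paper's approach: the paper simply states this as an observation without proof, noting only that it ``can be derived as in the $\cD$-Distributed Mechanism from Section~\ref{sec:dist_mech}'' (i.e., analogously to Observation~\ref{obs:com-simple}). Your direct count of messages, linearity of expectation, and the identity $\E[\cD_{/n}] = \E[\cD]/n$ from infinite divisibility is exactly that derivation made explicit.
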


\textbf{Privacy.}
The crux of our DP proof for the $(\cD^1, \cD^2, \cD^3)$-Correlated Distributed Mechanism is the following theorem:

\begin{theorem} \label{thm:dp-neg-noise}
Let $\Delta > 0$ and $\cD^1, \cD^2, \cD^3$ satisfy:
\begin{itemize}[nosep]
\item (Privacy of True Noise) The $(\cD^1 - \cD^2)$ Mechanism is $\eps_1$-DP for binary summation in the central model.
\item (Privacy of Correlated Noise) The $\cD^3$ Mechanism is $(\eps_2, \delta_2)$-DP for $\Delta$-summation in the central model. 
\item (Concentration of Neg.~Noise) $\Pr_{Y \sim \cD^2}[Y > \Delta] \leq \delta_3$.
\end{itemize}
Then, the $(\cD^1, \cD^2, \cD^3)$-Correlated Distributed Mechanism is $(\eps_1 + \eps_2, e^{\eps_1} \cdot \delta_2 + 2 e^{2\eps_1} \cdot \delta_3)$-DP in the shuffled model.
\end{theorem}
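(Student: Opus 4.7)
The plan is to reduce to a two-step composition argument by an invertible change of variables. The analyzer's view is captured by the counts $(U_+, U_-)$, which is in bijection (over integer pairs) with $(W, U_-)$, where
\begin{align*}
W &:= U_+ - U_- = T(\bx) + X^1 - X^2,\\
U_- &= X^2 + X^3.
\end{align*}
Thus it suffices to prove $(\eps_1+\eps_2, \cdot)$-DP for the joint distribution of $(W, U_-)$. By symmetry I may assume $T(\bx') = T(\bx) + 1$; write $T = T(\bx)$. Marginally, $W$ has distribution $T + (\cD^1 - \cD^2)$, and Assumption~1 says this is $\eps_1$-DP in $T$. So I will use adaptive composition: the first ``release'' is $W$ (contributing $\eps_1$), and the second is $U_-$ conditioned on $W = w$ (which I will show contributes $(\eps_2, \delta_2 + O(\delta_3))$).

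For the second step I introduce the good event $E := \{X^2 \le \Delta\}$, which by Assumption~3 has probability at least $1 - \delta_3$. On $E$, the random variable $X^2$ lies in $[0, \Delta]$. The key idea is that conditional on $W = w$, the random variable $U_- = X^2 + X^3$ is obtained by adding $\cD^3$-noise to $X^2$, and $X^2$ plays the role of a ``payload'' whose conditional law, as a function of the input $T$, is supported on $[0,\Delta]$ on $E$. Since Assumption~2 says the $\cD^3$ mechanism is $(\eps_2, \delta_2)$-DP for $\Delta$-summation, the addition of $X^3$ masks any input-dependent shift of $X^2$ of magnitude at most $\Delta$, so $U_- \mid W = w$ is $(\eps_2, \delta_2)$-close across the two inputs on $E$. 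Adaptive composition with step one yields the $\eps_1 + \eps_2$ in the privacy parameter, and the $\delta$-terms $e^{\eps_1}\delta_2$ and $2e^{2\eps_1}\delta_3$ come from pulling $\delta_2$ through the $\eps_1$-DP step once and the $\delta_3$ bad-event contribution through twice (once per side of the neighboring pair, each inflated by the $\eps_1$-DP factor applied to $W$).

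The main obstacle is making step two rigorous: the posterior $\mu_{T, w}(y) \propto \cD^2(y)\,\cD^1(w - T + y)$ of $X^2$ given $(W, T)$ genuinely depends on $T$, so $U_- \mid W$ is not literally the $\cD^3$ mechanism applied to a fixed $\Delta$-sensitive function. I expect the cleanest route is a direct pointwise inequality: fix a measurable set $A \subseteq \Z^2$, decompose
\[
\Pr[(W, U_-) \in A\mid T]= \sum_{y \ge 0} \Pr[X^2 = y]\,\Pr[(T + X^1 - y,\, y + X^3)\in A],
\]
split the $y$-sum into $y \le \Delta$ (where Assumption~2 lets me compare the inner probability across $T$ and $T+1$ viewed as two instances of $\Delta$-summation input to $\cD^3$) and $y > \Delta$ (whose total mass is at most $\delta_3$), and then absorb the remaining dependence on $T$ via $X^1$ using Assumption~1 on the $(\cD^1 - \cD^2)$-mechanism. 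The careful bookkeeping of the $e^{\eps_1}$ factors accumulated along these two applications of Assumption~1, once for the $\delta_2$ term and twice (one per side of the DP comparison) for the $\delta_3$ bad-event term, is what produces the exact constants in the theorem.
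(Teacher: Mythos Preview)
Your high-level plan---the invertible change of variables to $(W,U_-)$, using Assumption~1 for the marginal of $W$, and then Assumption~2 plus the truncation $\{X^2\le\Delta\}$ for the conditional of $U_-$---is exactly the paper's strategy, and your identification of the obstacle (that the posterior of $X^2$ given $W$ genuinely depends on $T$) is spot on.

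The gap is in your proposed execution. In the decomposition
\[
\Pr[(W,U_-)\in A\mid T]=\sum_{y\ge 0}\Pr[X^2=y]\,\Pr[(T+X^1-y,\,y+X^3)\in A],
\]
once you fix $y$, the second coordinate $y+X^3$ does not depend on $T$ at all, so Assumption~2 has nothing to act on; the entire $T$-dependence sits in the first coordinate $T+X^1-y$. But the distribution of $X^1$ alone is not $\eps_1$-DP against unit shifts---only $X^1-X^2$ is. By conditioning on $X^2=y$ you have destroyed precisely the structure needed to invoke Assumption~1, so ``absorbing the remaining dependence on $T$ via $X^1$'' fails.

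The paper's fix is to stay with the conditional $\Pr[U_-=b\mid W=a]=\sum_c \cD^3(b-c)\,\Pr[X^2=c\mid W=a]$ and compare the two posteriors on $c$ (under $\bx$) and $c'$ (under $\bx'$) via a product-measure trick: multiply each conditional by $1=\sum_{c'}\Pr[X^2=c'\mid W^{\bx'}=a]$ (and symmetrically) to obtain a double sum over $(c,c')$, so that the inner comparison becomes $d_{\eps_2}(\cD^3\|(c-c')+\cD^3)$. On the good event both $c,c'\le\Delta$, hence $|c-c'|\le\Delta$ and Assumption~2 gives the $\delta_2$ bound; the complementary event contributes the two $\delta_3$ terms, and pulling the $c>\Delta$ term through $\Pr[W^{\bx'}=a]\le e^{\eps_1}\Pr[W^{\bx}=a]$ is what produces the $e^{2\eps_1}$ factor. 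This pairing is the missing technical idea you should add to your conditional step.
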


\textbf{Proof Overview.}
All the analyzer sees is $(U_{+1}, U_{-1})$ $= (\sum_{i \in [n]} x_i + \hZ^1 + \hZ^3, \hZ^2 + \hZ^3)$ where for $j \in [3]$, 
$\hZ^j \sim \cD^j$. Since there is bijection between $(U_{+1}, U_{-1})$ and $(U_{+1} - U_{-1}, U_{-1}) = (\sum_{i \in [n]} x_i + \hZ^1 - \hZ^2, \hZ^2 + \hZ^3)$, we may consider the distribution on the latter. The first coordinate is the same as the analyzer's view from the $(\cD^1 - \cD^2)$ Mechanism, which is $\eps_1$-DP by the first assumption. Once we condition on $U_{+1} - U_{-1}$ being equal to some value, we are left to consider a distribution on $U_{-1}$. This distribution is not the same as the original one (before conditioning) since $U_{-1}$ and $U_{+1} - U_{-1}$ are correlated. But this correlation only comes via $\hZ^2$, as $\hZ^3$ does not appear in $U_{+1} - U_{-1}$. By the concentration of $\cD^2$ (the third assumption), 
$\hZ^2$ is rarely larger than $\Delta$. When $\hZ^2 \leq \Delta$, we can use the $(\eps_2, \delta_2)$-DP of the $\cD^3$ Mechanism for $\Delta$-summation to argue the privacy of the protocol. The proof follows this intuition while carefully tracking the privacy loss in each step.

\subsection{Near-Central Accuracy with Shuffled Mechanisms}

We use Theorem~\ref{thm:dp-neg-noise} to derive our ``near-central'' protocol (Theorem~\ref{th:bin_agg_nearly_one_bit}).  Specifically, we set $\cD^1, \cD^2$ so that $\cD^1 - \cD^2 = \DLap(0.99\eps)$, which implies that the $(\cD^1 - \cD^2)$ Mechanism is $0.99\eps$-DP in the central model. The remaining privacy budget of $0.01\eps$ is allocated to the $\cD^3$ Mechanism, which we set to be the $\NB(\cdot. \cdot)$ Mechanism with appropriate parameters. We use the Negative Binomial rather than the Poisson distribution as the former (Theorem~\ref{thm:nb-privacy}) has a smaller communication cost than the latter (Theorem~\ref{thm:poisson-dp}). 

\section{Experimental Evaluation and Results}\label{sec:exp_eval_results}

\subsection{Binary Summation}

In this section, we evaluate our protocols. Specifically, we consider the Poisson Distributed Mechanism (Theorem~\ref{thm:poisson-dp}) 
and the Correlated Distributed Noise Mechanism (Theorems~\ref{th:bin_agg_nearly_one_bit} and~\ref{thm:dp-neg-noise}).
We consider the root mean square error (RMSE), which is independent of the input data for all methods considered, and for this reason there is no need to consider performance on particular data sets. For each setting of $\eps, \delta$, our parameters are selected in an accurate manner; this is explained in detail in the
Appendix~\ref{app:bin-param-computation}. 
We only note here that for the Correlated Distributed Mechanism, we set $\cD^1 = \cD^2 = \NB(1, e^{-\eps_1})$ with $\eps_1$ such that the RMSE of the protocol is 20\% more than that of the (central) $\DLap(\eps)$ Mechanism. We compare our algorithms against the classic \emph{Randomized Response (RR)} algorithm, where a user with input $x$ sends $x$ w.p. $p$, and $1 - x$ w.p. $1 - p$, for some parameter $p\in [0,1/2]$.

\textbf{Error.}
We compute the errors as $\eps$ or $\delta$ varies. The corresponding plots are shown in Figure~\ref{fig:err-bin}; we include the error of the (central) Discrete Laplace Mechanism for comparison (but of course this is not directly implementable in the shuffled model). We remark that 
the RMSEs of our protocols are independent of the number of users $n$, and only the RMSE of RR depends on $n$, which we choose to be 10,000. 
While the Correlated Distributed protocol has a constant RMSE as we vary $\delta$, both Poisson and RR incur larger RMSEs. In particular, when $\delta = 10^{-6}, \eps = 1$, the RMSE of the Correlated Distributed protocol is 3.5 times less than that of Poisson and RR (which essentially coincide). The fact that the Poisson Mechanism and RR have essentially the same RMSE should come as no surprise, since the binomial distribution $\Bin(n, p)$ converges (in the distributional sense) to the Poisson distribution $\Poi(np)$ as $n \to \infty$ and $p$ is kept constant. This means that we would prefer RR in this case, since it always sends one message.

\begin{figure}[h]
\centering
\includegraphics[width=0.35\textwidth]{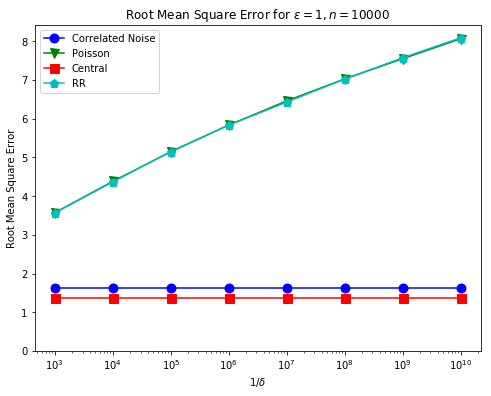}
\includegraphics[width=0.35\textwidth]{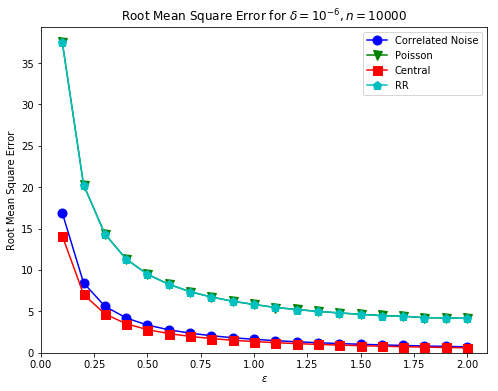}
\caption{RMSE of the protocols for binary summation. 
Note that the RMSEs of RR and Poisson are essentially the same.}
\label{fig:err-bin}
\end{figure}

\textbf{Communication Complexity.}
Figure~\ref{fig:msg-bin} shows the plots of the expected number of additional messages sent by each user in our Poisson and Correlated Distributed Mechanisms. Here we let $n = 10,000$. In the reasonable setting where $\delta = 10^{-6}$ and $\eps = 1$, the expected number of additional messages sent in the Correlated Distributed Mechanism is only $0.04$, whereas in the Poisson Mechanism, it is even smaller at $0.0003$. Even in the more extreme case of $\eps = 0.1$, the former is still $0.278$ and the latter is only $0.141$. 

\begin{figure}[h]
\centering
\includegraphics[width=0.35\textwidth]{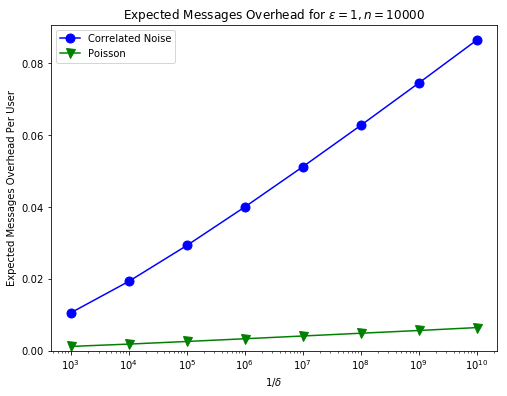}
\includegraphics[width=0.35\textwidth]{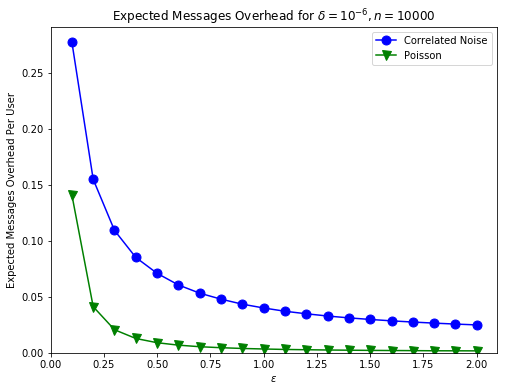}
\caption{Expected additional number of messages sent by each user for binary summation. While we use $n = 10^4$, this expectation scales linearly in $1/n$. E.g., if $n = 10^5$, the plots will look the same, but with the $y$-axis scaled down by a factor of~10.}
\label{fig:msg-bin}
\end{figure}

\subsection{Histograms}

We have also performed experiments on the histogram versions of our Correlated Distributed and Poisson Mechanisms, and compare them against three algorithms from the literature: $B$-Randomized Response ($B$-RR), RAPPOR~\cite{erlingsson2014rappor}, and Fragmented RAPPOR~\cite{erlingsson2020encode}. Each of these three can be viewed as a $B$-ary generalization of the binary RR. We ran these algorithms on two IPUMS datasets~\cite{sobek2010integrated}. Due to space constraints, the full description of the experiments and results are deferred to
Appendix~\ref{app:exp-histogram}.
Here we just summarize our findings:
For most parameters, RAPPOR incurs significantly larger errors than $B$-RR. Moreover, similarly to how RR mirrors the Poisson Mechanism in the binary case, Fragmented RAPPOR gives almost the same results as Poisson for histograms. In terms of RMSE, our correlated mechanism is significantly closer to the central model error than its competitors. The plots are in fact very similar to those of binary summation for the Correlated Distributed and Poisson Mechanisms, with RR doing 25-50\% worse than Poisson. In terms of $\ell_{\infty}$, RR incurs more than $3\times$ $\ell_\infty$ errors compared to our algorithms. Furthermore, as corroborated by theory~\cite{anon-power}, the error of RR grows quickly with the number $B$ of buckets, while those of our mechanisms grow very slowly.

\section{Conclusions and Open Questions}\label{sec:conc_oqs}
We proposed DP algorithms in the shuffled model for binary summation and histograms with accuracy arbitrarily close to the central model and a vanishing communication overhead. There are several questions left open by our work. One is to obtain algorithms achieving near-central performance with negligible communication overhead for the problems of real summation\footnote{This is largely resolved in the follow-up work~\cite{icml21-real-sum} but there are still a few remaining open questions there; please refer to the conclusion of that paper for more details.}, vector summation, and histograms over a large number $B \geq \Omega(n)$ of buckets. Another question is to prove a lower bound against \emph{expected single-message} protocols (that can send $0$, $1$ or more messages in the worst-case) as our lower bound in Theorem~\ref{th:bin_sum_lb_intro} does not hold in this case. A very interesting related direction is to build on our algorithms to improve the communication complexity of DP stochastic gradient descent in a federated learning setup.

\section*{Acknowledgements}
We would like to thank Noah Golowich, Laura Peskin and Alex Zani for insightful discussions and feedback.


\balance
\bibliographystyle{icml2020}
\bibliography{refs}

\appendix


\section{Additional Preliminaries}
\label{app:additional-prelim}

For a real number $y$, we let $y_+ := \max(y, 0)$.  We next recall the hockey stick divergence which belongs to the class of $f$-divergences introduced by \cite{ali1966general, csiszar1964informationstheoretische, csiszar1967information}.

\begin{definition}[Hockey Stick Divergence; e.g., \cite{sason2016f}]\label{def:hockey_stick}
For any $\eps > 0$, the \emph{$e^{\eps}$-hockey stick divergence} between distributions $\cD$ and $\cD'$ is defined as
$d_{\eps}(\cD \| \cD') = \sum_{x \in \supp(\cD)} [\cD(x) - e^{\eps} \cdot \cD'(x)]_+$.
\end{definition}

The following connection between hockey-stick divergence and DP was observed by \cite{barthe2013beyond} and follows immediately from Definitions~\ref{def:dp} and~\ref{def:hockey_stick}.

\begin{lemma} \label{thm:dp-hockey-stick}
A mechanism $\cM$ is $(\eps, \delta)$-DP if and only if $\max_{\bx \sim \bx'} d_{\eps}(\cM(\bx) \| \cM(\bx')) \leq \delta$.
\end{lemma}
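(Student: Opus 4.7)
The plan is to show, for any fixed pair of adjacent datasets $\bx \sim \bx'$, the identity
\[
\max_{\MS} \bigl( \Pr[\cM(\bx) \in \MS] - e^{\eps} \Pr[\cM(\bx') \in \MS] \bigr) \;=\; d_\eps(\cM(\bx) \,\|\, \cM(\bx')),
\]
where the maximum is over subsets $\MS$ of the transcript space. Once this identity is established, the lemma follows immediately by taking the maximum over $\bx \sim \bx'$ on both sides and rearranging the $(\eps,\delta)$-DP inequality as $\Pr[\cM(\bx) \in \MS] - e^{\eps} \Pr[\cM(\bx') \in \MS] \leq \delta$ for all $\MS$.

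To prove the identity, I would work with the pointwise expression. Writing $p(x) := \Pr[\cM(\bx) = x]$ and $q(x) := \Pr[\cM(\bx') = x]$ for each transcript $x$ (these are well-defined since the paper restricts to discrete distributions), we have
\[
\Pr[\cM(\bx) \in \MS] - e^{\eps} \Pr[\cM(\bx') \in \MS] \;=\; \sum_{x \in \MS} \bigl( p(x) - e^{\eps} q(x) \bigr).
\]
Each transcript $x$ contributes independently to this sum, so the expression is maximized by including $x$ in $\MS$ precisely when the term $p(x) - e^\eps q(x)$ is positive. Taking $\MS^* := \{x : p(x) > e^\eps q(x)\}$, the maximum value is $\sum_x [p(x) - e^\eps q(x)]_+$, which is exactly $d_\eps(\cM(\bx) \,\|\, \cM(\bx'))$ by Definition~\ref{def:hockey_stick}. (Note that $\supp(\cM(\bx))$ in the definition can be replaced by the full transcript space without loss, since terms with $p(x) = 0$ contribute $[{-e^\eps q(x)}]_+ = 0$.)

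Both directions of the biconditional then fall out. If $\cM$ is $(\eps,\delta)$-DP, then for every $\bx \sim \bx'$ and in particular for $\MS = \MS^*$, we get $d_\eps(\cM(\bx) \,\|\, \cM(\bx')) \leq \delta$, whence the max over $\bx \sim \bx'$ is at most $\delta$. Conversely, if $\max_{\bx \sim \bx'} d_\eps(\cM(\bx) \,\|\, \cM(\bx')) \leq \delta$, then for every $\bx \sim \bx'$ and every $\MS$, $\Pr[\cM(\bx) \in \MS] - e^\eps \Pr[\cM(\bx') \in \MS] \leq d_\eps(\cM(\bx)\,\|\,\cM(\bx')) \leq \delta$, which is exactly the $(\eps,\delta)$-DP condition.

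There is no real obstacle here: the only subtle point is making sure the supremum over subsets is attained by $\MS^*$, which works cleanly in the discrete setting the paper has committed to. In a continuous setting one would instead invoke a Radon--Nikodym argument to identify $\MS^*$ as the set where the density ratio exceeds $e^\eps$, but that generality is not needed.
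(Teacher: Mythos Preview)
Your proof is correct and is exactly the standard argument the paper has in mind: the paper does not spell out a proof but simply notes that the lemma ``follows immediately from Definitions~\ref{def:dp} and~\ref{def:hockey_stick},'' which is precisely the pointwise maximization you carry out.
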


\section{Missing proofs from Sections~\ref{sec:dist_mech} and~\ref{sec:corr_dist_mech}}

The following is a well-known fact that will help us determine the privacy guarantee of the $\cD$ Mechanism for $\Delta$-summation; it is an immediate consequence of Lemma~\ref{thm:dp-hockey-stick}.

\begin{lemma} \label{lem:sum-dp-hockey-stick}
For any distribution $\cD$ supported on integers, the $\cD$ Mechanism for $\Delta$-summation is $(\eps, \delta)$-DP if and only if $\max_{-\Delta \leq k \leq \Delta} d_{\eps}(\cD \| k + \cD) \leq \delta$.
\end{lemma}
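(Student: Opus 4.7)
The plan is to derive this as a direct consequence of Lemma \ref{thm:dp-hockey-stick}: by that lemma, the $\cD$ Mechanism is $(\eps,\delta)$-DP for $\Delta$-summation iff $\max_{\bx \sim \bx'} d_\eps(\cM(\bx) \| \cM(\bx')) \le \delta$, where $\cM(\bx) = f(\bx) + \cD$ and $f(\bx) = \sum_i x_i$. So the whole task reduces to identifying the right-hand maximum over neighboring datasets with the maximum over integer shifts $k \in [-\Delta, \Delta]$. The key ingredient I would use is the translation invariance of hockey-stick divergence: shifting both arguments by the same constant leaves $d_\eps$ unchanged, so only the \emph{difference} of the two shifts matters.

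In more detail, I would first show the identity $d_\eps(a + \cD \| b + \cD) = d_\eps(\cD \| (b-a) + \cD)$ for any integers $a, b$. Starting from the definition, using $(c + \cD)(y) = \cD(y - c)$, I have
\[
d_\eps(a + \cD \| b + \cD) = \sum_{x} \bigl[\cD(x-a) - e^\eps \cD(x-b)\bigr]_+,
\]
and substituting $y = x - a$ turns the summand into $[\cD(y) - e^\eps \cD(y - (b-a))]_+$, which is exactly $d_\eps(\cD \| (b-a) + \cD)$. Applying this with $a = f(\bx)$ and $b = f(\bx')$ gives $d_\eps(\cM(\bx) \| \cM(\bx')) = d_\eps(\cD \| k + \cD)$ where $k = f(\bx') - f(\bx)$.

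To finish, I would observe that as $\bx \sim \bx'$ ranges over all pairs of neighboring datasets in $\{0,\dots,\Delta\}^n$, the value $k = f(\bx') - f(\bx) = x'_i - x_i$ (at the single differing coordinate) ranges over exactly the integers in $\{-\Delta, \dots, \Delta\}$, and every such $k$ is attainable (e.g.\ by choosing $x_i, x'_i$ in $\{0,\dots,\Delta\}$ with $x'_i - x_i = k$). Hence $\max_{\bx \sim \bx'} d_\eps(\cM(\bx) \| \cM(\bx')) = \max_{-\Delta \le k \le \Delta} d_\eps(\cD \| k + \cD)$, and combined with Lemma \ref{thm:dp-hockey-stick} this proves both directions of the iff. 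I do not anticipate a real obstacle here: the only mild care needed is the index manipulation in verifying the translation identity, and the observation that the extremal $k$ does occur for some neighboring pair (so the reduction is tight, giving the ``only if'' direction and not merely an inequality).
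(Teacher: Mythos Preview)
Your proposal is correct and follows exactly the approach the paper intends: the paper states the lemma as ``an immediate consequence of Lemma~\ref{thm:dp-hockey-stick}'' without further detail, and you have simply filled in the routine steps (translation invariance of $d_\eps$ and the observation that the sensitivity of $\Delta$-summation is exactly $\Delta$, so the attainable shifts are precisely $\{-\Delta,\dots,\Delta\}$).
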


\subsection{Proof of Theorem~\ref{thm:poisson-dp}}
To prove Theorem~\ref{thm:poisson-dp}, we need the following concentration bound for Poisson distributions (see, e.g.,~\cite{poisson-tail})

\begin{lemma}[Poisson Concentration] \label{lem:concen-poisson}
For any $\lambda, y \in \R^+$,
\begin{align*}
\Pr_{Y \sim \Poi(\lambda)}[|Y - \lambda| \geq y]
\leq 2e^{-\frac{y^2}{2(y + \lambda)}}.
\end{align*}
\end{lemma}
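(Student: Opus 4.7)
The plan is to obtain \eqref{lem:concen-poisson} via the standard Chernoff (moment generating function) method, applied separately to the upper and lower tails, followed by a calculus-based simplification of the resulting Bennett-type exponent into the Bernstein-type form $\tfrac{y^{2}}{2(y+\lambda)}$ stated in the lemma. A union bound over the two tails then supplies the factor of $2$.

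First I would recall that for $Y \sim \Poi(\lambda)$ the moment generating function is $\E[e^{tY}] = \exp(\lambda(e^{t}-1))$ for every $t \in \R$. For the upper tail, for any $t > 0$, Markov's inequality gives
\[
\Pr[Y \geq \lambda + y] \;\leq\; e^{-t(\lambda+y)}\,\E[e^{tY}] \;=\; \exp\!\bigl(\lambda(e^{t}-1) - t(\lambda + y)\bigr).
\]
Optimizing over $t$ by choosing $t = \ln(1 + y/\lambda) > 0$ gives the Bennett exponent
\[
\Pr[Y \geq \lambda + y] \;\leq\; \exp\!\bigl(-\lambda\, h(y/\lambda)\bigr), \qquad \text{where } h(u) := (1+u)\ln(1+u) - u.
\]
For the lower tail, when $y \geq \lambda$ the probability $\Pr[Y \leq \lambda - y]$ is either trivial or zero, so I would assume $0 < y < \lambda$ and apply the same Chernoff argument with $t = \ln(1 - y/\lambda) < 0$, obtaining
\[
\Pr[Y \leq \lambda - y] \;\leq\; \exp\!\bigl(-\lambda\, \tilde h(y/\lambda)\bigr), \qquad \text{where } \tilde h(u) := -u - (1-u)\ln(1-u) \quad (0 \leq u < 1).
\]

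The main technical step is then the elementary inequality that reduces the Bennett exponents to the Bernstein form in the lemma: for all admissible $u \geq 0$,
\[
h(u) \;\geq\; \frac{u^{2}}{2(1+u)} \qquad \text{and} \qquad \tilde h(u) \;\geq\; \frac{u^{2}}{2(1+u)}.
\]
Multiplying by $\lambda$ and substituting $u = y/\lambda$ turns the right-hand side into exactly $\tfrac{y^{2}}{2(y+\lambda)}$. I would prove the first inequality by setting $g(u) := h(u) - \tfrac{u^{2}}{2(1+u)}$ and verifying $g(0) = 0$, $g'(0) = 0$, together with $g''(u) = \tfrac{u(u+2)}{(1+u)^{3}} \geq 0$ for $u \geq 0$ (after simplifying $g'(u) = \ln(1+u) - \tfrac{1}{2} + \tfrac{1}{2(1+u)^{2}}$), so that $g' \geq 0$ and thus $g \geq 0$. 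The inequality for $\tilde h$ is handled analogously on $[0,1)$, using the fact that $\tilde h(u) \geq h(u)$ in a suitable sense, or directly by the same ``check value and derivatives at $0$, then check monotonicity'' template; if convenient, one can even appeal to the crude bound $\tilde h(u) \geq \tfrac{u^{2}}{2}$ (valid on $[0,1)$) and note $\tfrac{u^{2}}{2} \geq \tfrac{u^{2}}{2(1+u)}$ for $u \geq 0$, which already dominates the target.

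Combining the two tail estimates via a union bound yields
\[
\Pr[|Y - \lambda| \geq y] \;\leq\; \Pr[Y \geq \lambda + y] + \Pr[Y \leq \lambda - y] \;\leq\; 2\exp\!\left(-\frac{y^{2}}{2(y + \lambda)}\right),
\]
which is the conclusion of Lemma~\ref{lem:concen-poisson}. The only real obstacle is the elementary inequality $h(u) \geq \tfrac{u^{2}}{2(1+u)}$; everything else is bookkeeping around the standard Chernoff method. I would present this inequality as a short self-contained calculus lemma so that the main proof reads cleanly as ``MGF bound, optimize, apply the elementary inequality, union bound.''
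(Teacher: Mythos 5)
The paper does not actually prove this lemma: it is imported from the literature with a citation (``see, e.g.,~\cite{poisson-tail}'') and used as a black box in the proof of Theorem~\ref{thm:poisson-dp}. So there is no in-paper argument to compare against; what you have written is a correct, self-contained derivation by the standard route (Chernoff/MGF bound, optimal tilt $t=\ln(1\pm y/\lambda)$ giving the Bennett exponent, then the elementary reduction $h(u)\ge \frac{u^2}{2(1+u)}$ to the Bernstein form, and a union bound for the factor $2$). Your calculus for $g(u)=h(u)-\frac{u^2}{2(1+u)}$ checks out: $g(0)=g'(0)=0$ and $g''(u)=\frac{u(u+2)}{(1+u)^3}\ge 0$.

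One sign slip to fix: with $t=\ln(1-y/\lambda)<0$ the Chernoff exponent for the lower tail is $-\lambda\bigl(u+(1-u)\ln(1-u)\bigr)$ with $u=y/\lambda$, so the correct definition is $\tilde h(u)=u+(1-u)\ln(1-u)$ (which is $\approx u^2/2\ge 0$ for small $u$), not $-u-(1-u)\ln(1-u)$ as you wrote; with your sign the bound $\exp(-\lambda\tilde h(u))$ would exceed $1$ and be vacuous. The remainder of your argument (in particular the claim $\tilde h(u)\ge u^2/2$ on $[0,1)$, which follows from $\tilde h'(u)=-\ln(1-u)\ge u$) is consistent only with the corrected sign, so this is plainly a typo rather than a conceptual error. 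Finally, the boundary case $y=\lambda$ is not literally ``trivial or zero'': there $\Pr[Y\le 0]=e^{-\lambda}$, but this is still at most $e^{-y^2/(2(y+\lambda))}=e^{-\lambda/4}$, so the conclusion is unaffected; you may want to state that explicitly rather than wave at it.
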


\begin{proof}[Proof of Theorem~\ref{thm:poisson-dp}]
Let $\cD = \Poi(\lambda)$ where $\lambda$ is as specified in the theorem statement.
From Lemma~\ref{lem:sum-dp-hockey-stick}, it suffices to show that $d_{\eps}(\cD \| k + \cD) \leq \delta$ for all $k \in \{-\Delta, \dots, \Delta\}$. To bound $d_{\eps}(\cD \| k + \cD) \leq \delta$, recall that for $\cD = \Poi(\lambda)$, we have $\cD(Y) = \frac{\lambda^Y e^{-\lambda}}{Y!}$. Hence,
\begin{align*}
\frac{\cD(Y)}{\cD(Y - k)} = \lambda^{k} \cdot \frac{(Y - k)!}{Y!}.
\end{align*}
This implies that $\frac{\cD(Y)}{\cD(Y - k)} < e^{|k| \eps}$ for all $Y \in [e^{-\eps/|k|} \lambda + |k|, e^{\eps/|k|} \lambda - |k|]$, which is a superset of $[e^{-\eps / \Delta} \lambda + \Delta, e^{\eps / \Delta} \lambda - \Delta]$. From this, we can bound the hockey stick divergence as follows.
\begin{align*}
\lefteqn{
d_{\eps}(\cD \| k + \cD) = \sum_{Y \in \Z} [\cD(Y) - e^{\eps} \cdot \cD(Y - k)]_+} \\
&= \sum_{Y \in \Z \setminus [e^{-\eps / \Delta} \lambda + \Delta, e^{\eps / \Delta} \lambda - \Delta]} [\cD(Y) - e^{\eps} \cdot \cD(Y - k)]_+ \\
&\leq \sum_{Y \in \Z \setminus [e^{-\eps / \Delta} \lambda + \Delta, e^{\eps / \Delta} \lambda - \Delta]} \cD(Y) \\
&= \Pr_{Y \sim \Poi(\lambda)}[Y < e^{-\eps / \Delta} \lambda + \Delta] + \Pr_{Y \sim \Poi(\lambda)}[Y > e^{\eps / \Delta} \lambda - \Delta].
\end{align*}

From our choice of $\lambda$, we also have $e^{-\eps / \Delta} \lambda + \Delta \leq \lambda - 0.5(1 - e^{-\eps/\Delta})\lambda$ and $e^{\eps/\Delta} \lambda - \Delta \geq \lambda + 0.5(1 - e^{-\eps/\Delta})\lambda$. Hence, we may apply Lemma~\ref{lem:concen-poisson} (with $y = 0.5(1 - e^{-\eps/\Delta})\lambda$) which gives
\begin{align*}
d_{\eps}(\cD \| k + \cD) &\leq 2\exp\left(-\frac{0.25(1 - e^{-\eps/\Delta})^2 \lambda^2}{4 \lambda}\right) \leq \delta,
\end{align*}
where the last inequality follows from the fact that $\lambda \geq 16 \log (2 / \delta) / (1 - e^{-\eps/\Delta})^2$. Hence, we conclude that the $\Poi(\lambda)$ Mechanism is $(\eps, \delta)$-DP as desired.
\end{proof}

\subsection{Proof of Theorem~\ref{thm:nb-privacy}}
To prove Theorem~\ref{thm:nb-privacy}, we need the following generic version of the Chernoff bound; this bound is just the Markov inequality in disguise (note $t < 0$ in the statement below).

\begin{lemma}
\label{lem:chernoff}
For every real number $t < 0$, any real number $a$ and any random variable $X$, $\Pr[X \leq a] \leq \E[e^{tX}]/e^{ta}$.
\end{lemma}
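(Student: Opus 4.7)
The statement is essentially Markov's inequality applied to the nonnegative random variable $e^{tX}$, with the sign of $t$ chosen so that the event $\{X \leq a\}$ translates into an event of the form $\{e^{tX} \geq c\}$. The paper even flags this by noting that the bound is ``just the Markov inequality in disguise.'' My plan is therefore to exhibit the two-line translation explicitly and invoke Markov.

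First I will observe that since $t < 0$, the function $u \mapsto e^{tu}$ is strictly decreasing on $\R$. Consequently, the event $\{X \leq a\}$ is identical to the event $\{e^{tX} \geq e^{ta}\}$. This equivalence is the only step that uses the hypothesis $t < 0$, and it is the one place where I need to be careful about the direction of the inequality.

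Next I will apply Markov's inequality to the nonnegative random variable $Y := e^{tX}$ at the threshold $c := e^{ta} > 0$, giving $\Pr[Y \geq c] \leq \E[Y]/c$. Chaining this with the equivalence from the previous step yields
\[
\Pr[X \leq a] \;=\; \Pr[e^{tX} \geq e^{ta}] \;\leq\; \frac{\E[e^{tX}]}{e^{ta}},
\]
which is exactly the claim. I do not anticipate any real obstacle here: the argument is a direct one-liner, and the only potential pitfall is flipping the inequality correctly when passing through the decreasing exponential, which is handled by the first step. No assumption on the integrability or the sign of $X$ is needed, since if $\E[e^{tX}] = +\infty$ the bound is vacuous and otherwise Markov applies unconditionally to the nonnegative variable $e^{tX}$.
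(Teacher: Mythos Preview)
Your proposal is correct and is exactly the approach the paper intends: the paper does not give a separate proof but states that the bound is ``just the Markov inequality in disguise,'' which is precisely what you have spelled out. Nothing more is needed.
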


\begin{proof}[Proof of Theorem~\ref{thm:nb-privacy}]
Let $\cD = \NB(r, p)$ where $r, p$ are as specified in the theorem. From Lemma~\ref{lem:sum-dp-hockey-stick}, it suffices to show that $d_{\eps}(\cD \| k + \cD)$ for all $k \in \{-\Delta, \dots, +\Delta\}$. Recall that
\begin{align*}
\cD(y) = \frac{(y + r - 1) \cdots r}{y!} (1 - p)^r p^y,
\end{align*}
for all $y \in \Z_{\geq 0}$. To bound $d_{\eps}(\cD \| k + \cD)$, we consider two separate cases based on the sign of $k$:

\paragraph{Case I: $k \leq 0$.} In this case, we have $\cD(y) \leq \frac{1}{p} \cdot \cD(y + 1) \leq e^{\eps/\Delta} \cdot \cD(y + 1)$ for all $y \in \Z$. From this, we can conclude that $\cD(y) \leq e^{-k \eps / \Delta} \cdot \cD(y - k) \leq e^{\eps} \cdot \cD(y - k)$. As a result, we have $d_{\eps}(\cD \| \cD + k) = 0$. 

\paragraph{Case II: $k > 0$.}
Let $\hy = \Delta - 1 + \frac{r - 1}{e^{1.2\eps/\Delta} - 1}$. If $y \geq \hy$, we have
\begin{align*}
\frac{\cD(y)}{\cD(y - k)}
&= \frac{(y + r - 1) \cdots (y - k + r)}{y \cdots (y - k + 1)} \cdot p^k \\
&\leq \left(\frac{y - k + r}{y - k + 1} \cdot p\right)^k \\
&= \left(\left(1 + \frac{r - 1}{y - k + 1}\right) \cdot p\right)^k \\
&\leq \left(\left(1 + \frac{r - 1}{y - \Delta + 1}\right) \cdot p\right)^{\Delta} \\
(\text{From } y \geq \hy) &\leq \left(e^{1.2\eps/\Delta} \cdot p\right)^{\Delta} \\
&= e^{\eps}.
\end{align*}
As a result, this means that
\begin{align*}
d_{\eps}(\cD \| k + \cD) &= \sum_{y \in \Z} [\cD(y) - e^{\eps} \cdot \cD(y - k)] \\
&= \sum_{y \in \Z \atop y < \hy} [\cD(y) - e^{\eps} \cdot \cD(y - k)] \\
&\leq \Pr_{Y \sim \cD}[Y < \hy].
\end{align*}
It is known that $\E_{X \sim \cD}[e^{tX}] = \left(\frac{1 - p}{1 - pe^t}\right)^r$ for all $t < -\ln p$. As a result, setting $t = -0.2\eps/\Delta$ and applying Lemma~\ref{lem:chernoff}, we have
\begin{align} \label{eq:chernoff}
\Pr_{Y \sim \cD}[Y < \hy] &\leq \frac{\left(\frac{1 - p}{1 - pe^t}\right)^r}{e^{t\hy}}.
\end{align}
Next, using the bound
$e^x \leq 1 + 2x$ for $x \leq 1$ and using $\epsilon \leq 1$, we obtain
\[
e^{1.2\eps/\Delta} - 1
\leq
e^{1.2\Delta} - 1
\leq 2.4/\Delta.
\]
From this, we have
\begin{align*}
\hy &= \Delta - 1 + \frac{r - 1}{e^{1.2\eps/\Delta} - 1} \\
&\leq \frac{2.4 + r - 1}{e^{1.2\eps/\Delta} - 1} \\
&= \frac{r + 1.4}{e^{1.2\eps/\Delta} - 1} \\
&\leq \frac{1.47r}{e^{1.2\eps/\Delta} - 1},
\end{align*}
where the last inequality follows from our choice $r \geq 3$.

Plugging the above inequality back into~\eqref{eq:chernoff}, we get
\begin{align*}
\Pr_{Y \sim \cD}[Y < \hy] &\leq \left(\frac{1 - p}{1 - pe^t} \cdot \exp\left({\frac{-1.47t}{e^{1.2\eps/\Delta} - 1}}\right)\right)^r \\
(\because t = -0.2\eps/\Delta) &\leq \left(\frac{1 - e^{-\frac{0.2\eps}{\Delta}}}{1 - e^{-\frac{0.4\eps}{\Delta}}} \exp\left({\frac{1.47(0.2\eps/\Delta)}{e^{1.2\eps/\Delta} - 1}}\right)\right)^r \\
(\because e^x \geq 1 + x) &\leq \left(\frac{1}{1 + e^{-0.2\eps/\Delta}} \cdot e^{0.25}\right)^r \\
&\leq \left(\frac{e^{0.25}}{1 + e^{-0.2}}\right)^r \\
&\leq e^{-r/3} \\
(\because r \geq 3 \log(1/\delta)) &\leq \delta.
\end{align*}
Thus, $d_{\eps}(\cD \| k + \cD) \leq \delta$ as desired.
\end{proof}

\subsection{Proof of Lemma~\ref{lem:lb-distributed-mechanism}}
\label{app:lb-distributed-mechanism}

To prove Lemma~\ref{lem:lb-distributed-mechanism}, we will resort to Feller's characterization~\cite{Feller} of infinitely divisible distribution as a discrete compound Poisson distribution. We begin by recalling the definition of the latter.

\begin{definition}\label{def:DCP}
A distribution $\cD$ is said to be a \emph{discrete compound Poisson (DCP)} distribution if there exists a distribution $\cD'$ on positive integers and a non-negative real number $\lambda$ such that the following process results in the random variable $Y$ being distributed as $\cD$. First, generate $N \sim \Poi(\lambda)$. Then, let $X_1, \dots, X_N$ be i.i.d. random variables distributed as $\cD'$. Finally, let $Y = X_1 + \cdots + X_N$.

When this condition holds, we write $\cD = \DCP(\lambda, \cD')$.
\end{definition}

A fundamental theorem, due to Feller~\cite{Feller}, states that every infinitely divisible distribution $\cD$ on non-negative integers is a DCP distribution:

\begin{theorem}[\cite{Feller}] \label{thm:inf-div-dcp}
Every infinitely divisible distribution $\cD$ on non-negative integers is a discrete compound Poisson distribution.
\end{theorem}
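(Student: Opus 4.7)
The plan is to work with the probability generating function (PGF) $G(z) := \E_{Y \sim \cD}[z^Y]$ of $\cD$, and to show that $G$ has the form $G(z) = \exp(\lambda(H(z) - 1))$ for some $\lambda \geq 0$ and some PGF $H$ of a distribution on positive integers. This is precisely the PGF characterization of $\DCP(\lambda, \cD')$: if $N \sim \Poi(\lambda)$ and $X_1, X_2, \dots \sim \cD'$ are i.i.d., then $Y = X_1 + \cdots + X_N$ has PGF $\exp(\lambda(H(z) - 1))$ by a standard conditioning-on-$N$ calculation. So identifying $\cD'$ as the distribution with PGF $H$ will finish the argument.

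As a preliminary, I would establish $\cD(0) > 0$: if $m_n := \min \supp(\cD_{/n})$, then $\min \supp(\cD) \geq n \, m_n$, so $m_n = 0$ for all sufficiently large $n$, giving $\cD(0) = \cD_{/n}(0)^n > 0$. Hence $G$ is analytic and non-vanishing in a neighborhood of $0$, and $\log G$ is analytic there. I would then set $\lambda := -\log G(0) \geq 0$ (the case $\lambda = 0$ means $\cD$ is the point mass at $0$, which is trivially compound Poisson; assume $\lambda > 0$ below) and define $H(z) := 1 + \lambda^{-1} \log G(z)$. By construction $\exp(\lambda(H(z) - 1)) = G(z)$, $H(0) = 0$, and $H(1) = 1$. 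The only remaining task is to verify that $H$ is a genuine PGF of a distribution on positive integers, i.e., that its Taylor coefficients at every degree $k \geq 1$ are non-negative; given this, the identity $H(1) = 1$ forces them to sum to $1$ (via Abel's theorem on the non-negative power series).

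The heart of the proof is therefore to show $[z^k] \log G(z) \geq 0$ for every $k \geq 1$. Let $G_n$ be the PGF of $\cD_{/n}$, so that $G_n(z)^n = G(z)$. Since $G$ is positive on a complex neighborhood of $[0,1)$, one has $G_n(z) = G(z)^{1/n}$ (positive $n$th root) there, and the elementary limit $n(a^{1/n} - 1) \to \log a$ for $a > 0$ upgrades to
\[
\log G(z) \;=\; \lim_{n \to \infty} n\bigl(G_n(z) - 1\bigr),
\]
with uniform convergence on compact subsets of a small complex disk around $0$. Now
\[
n\bigl(G_n(z) - 1\bigr) \;=\; -n\bigl(1 - \cD_{/n}(0)\bigr) \;+\; \sum_{k \geq 1} n \, \cD_{/n}(k)\, z^k,
\]
whose $z^k$ coefficients for $k \geq 1$ are all non-negative, and uniform convergence on a disk around $0$ lets me transfer this non-negativity to the Taylor coefficients of $\log G(z)$ itself via Cauchy's coefficient formula.

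The main obstacle is precisely this last analytic step: upgrading the pointwise identity $\log G(z) = \lim_n n(G_n(z) - 1)$, which is transparent on $[0,1]$ from infinite divisibility, to coefficientwise convergence of Taylor expansions at $0$. Everything else is essentially bookkeeping with the definitions of $\lambda$ and $H$ and with standard PGF identities for compound Poisson distributions.
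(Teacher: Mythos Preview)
The paper does not prove this theorem; it is stated with a citation to Feller and used as a black box (see the sentence preceding Theorem~\ref{thm:inf-div-dcp} and its use in Observation~\ref{obs:inf-div-mean-vs-var}). So there is nothing in the paper to compare against.

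Your argument is correct and is essentially the classical generating-function proof one finds in Feller's book. The preliminary step showing $\cD(0)>0$ is fine (implicitly you use that $m_n\geq 0$, which follows since $n m_n\in\supp(\cD)\subseteq\Z_{\geq 0}$). The step you flag as the main obstacle---upgrading the pointwise limit $n(G_n(z)-1)\to\log G(z)$ to coefficientwise convergence---is indeed handled by your outline: on a small disk where $G$ is analytic and nonvanishing, the principal branch $G^{1/n}=\exp((1/n)\log G)$ is analytic and agrees with the PGF $G_n$ on the real segment (both are positive there with $n$th power $G$), hence on the whole disk by the identity theorem; the convergence $n(G^{1/n}-1)\to\log G$ is then uniform on any circle $|z|=r$ inside that disk, and Cauchy's coefficient formula passes the non-negativity of $[z^k]\,n(G_n-1)=n\,\cD_{/n}(k)$ to $[z^k]\log G$. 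The remaining bookkeeping ($H(0)=0$, $H(1)=1$, Abel's theorem for the coefficient sum) is routine.
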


The above theorem implies the following observation:

\begin{observation} \label{obs:inf-div-mean-vs-var}
For any infinitely divisible distribution $\cD$ on non-negative integers, $\Var(\cD) \geq \E[\cD]$.
\end{observation}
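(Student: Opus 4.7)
The plan is to apply Feller's characterization (Theorem~\ref{thm:inf-div-dcp}) to reduce the claim about an arbitrary infinitely divisible distribution on non-negative integers to a direct moment computation for a discrete compound Poisson. Concretely, write $\cD = \DCP(\lambda, \cD')$ for some $\lambda \geq 0$ and some distribution $\cD'$ supported on the \emph{positive} integers, so that a sample $Y \sim \cD$ has the form $Y = X_1 + \cdots + X_N$ with $N \sim \Poi(\lambda)$ and $X_1, X_2, \ldots$ i.i.d.\ from $\cD'$, independent of $N$.

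Next, I would compute $\E[Y]$ and $\Var(Y)$ using Wald-style identities (equivalently, conditioning on $N$). Let $\mu' = \E[\cD']$ and $\sigma'^2 = \Var(\cD')$. Conditioning on $N$ gives $\E[Y \mid N] = N \mu'$ and $\Var(Y \mid N) = N \sigma'^2$, so by the tower property and the law of total variance,
\[
\E[Y] = \lambda \mu', \qquad \Var(Y) = \E[N \sigma'^2] + \Var(N \mu') = \lambda \sigma'^2 + \lambda (\mu')^2 = \lambda\, \E_{X \sim \cD'}[X^2].
\]
Therefore the inequality $\Var(\cD) \geq \E[\cD]$ reduces to showing $\lambda\, \E[X^2] \geq \lambda \mu'$, i.e., $\E_{X \sim \cD'}[X^2] \geq \E_{X \sim \cD'}[X]$ (the case $\lambda = 0$ is trivial since then $\cD$ is a point mass at $0$).

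Finally, this last inequality is immediate: because $\cD'$ is supported on the positive integers, every realization satisfies $X \geq 1$, hence $X^2 \geq X$ pointwise, and taking expectations finishes the proof. There is essentially no obstacle here once Theorem~\ref{thm:inf-div-dcp} is invoked; the key conceptual point is that the DCP representation forces the jump distribution $\cD'$ to live on $\{1, 2, 3, \ldots\}$, which is exactly what makes the second moment dominate the first moment.
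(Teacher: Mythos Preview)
Your proposal is correct and follows essentially the same approach as the paper: invoke Feller's characterization $\cD = \DCP(\lambda, \cD')$, use the identities $\E[\cD] = \lambda\, \E[\cD']$ and $\Var(\cD) = \lambda\, \E_{X \sim \cD'}[X^2]$, and conclude via $X^2 \geq X$ for $X \in \{1,2,\dots\}$. The only difference is that you spell out the derivation of these moment identities via the law of total variance, whereas the paper simply cites them as well-known.
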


\begin{proof}
From Theorem~\ref{thm:inf-div-dcp}, $\cD = \DCP(\lambda, \cD')$ for some $\lambda \geq 0$ and a distribution $\cD'$ on positive integers. It is well-known that $\E[\cD] = \lambda \cdot \E_{X \sim \cD'}[X]$ and $\Var[\cD] = \lambda \cdot \E_{X \sim \cD'}[X^2]$. Since $\cD'$ is on positive integers, we must have $\E_{X \sim \cD'}[X] \leq \E_{X \sim \cD'}[X^2]$, which implies that $\E[\cD] \leq \Var[\cD]$.
\end{proof}

We are now ready to prove Lemma~\ref{lem:lb-distributed-mechanism}.

\begin{proof}[Proof of Lemma~\ref{lem:lb-distributed-mechanism}]
From Lemma~\ref{lem:sum-dp-hockey-stick}, we have $d_\eps(\cD \| 1 + \cD) \leq \delta$, which can be rearranged as
\begin{align*}
\delta \geq \sum_{i = 0}^{\infty} [\cD(i) - e^{\eps} \cdot \cD(i - 1)]_+,
\end{align*}
which implies that $\cD(i) \leq e^{\eps} \cdot \cD(i - 1) + \delta$ for all non-negative integer $i$. From this and from $\cD$ is supported from non-negative integer (i.e. $\cD(-1) = 0$), it is simple to show via induction that $\cD(i) \leq \delta \cdot \left(\frac{e^{(i + 1)\eps} - 1}{e^{\eps} - 1}\right)$. Let $j = \lfloor \frac{1}{\epsilon} \left(\ln\left(\frac{1}{\delta}\right) - \ln\left(\frac{2}{\eps^2}\right)\right) \rfloor - 1$; we have 
\begin{align*}
\Pr_{Y \sim \cD}[Y < j] &\leq \frac{\delta}{e^{\eps} - 1} \cdot \left(\sum_{i=0}^{j - 1} (e^{(i + 1)\eps} - 1)\right) \\
&\leq \frac{\delta}{e^{\eps} - 1} \cdot \left(\frac{e^{(j + 1)\eps}}{e^{\eps} - 1}\right) \\
&\leq \frac{\delta}{\eps^2} \cdot e^{(j + 1)\eps} \\
&\leq \frac{1}{2},
\end{align*}
where the last inequality follows from our choice of $j$.
Thus, we have $\E[\cD] \geq j \cdot \frac{1}{2} \geq \Omega_\eps(\log(1/\delta))$. Invoking Observations~\ref{obs:inf-div-mean-vs-var} and~\ref{obs:util-simple} immediately yields the desired result.
\end{proof}

\subsection{Proof of Theorem~\ref{thm:dp-neg-noise}}
\begin{proof}[Proof of Theorem~\ref{thm:dp-neg-noise}]
Observe that the analyzer's view only contains $(U_{+1}, U_{-1})$. Since there is a one-to-one correspondence between $(U_{+1}, U_{-1})$ and $(U_{+1} - U_{-1}, U_{-1})$, we may consider the analyzer's view as the latter instead; for convenience, let $U_{\diff} = U_{+1} - U_{-1}$. Note that the distribution of $(U_{\diff}, U_{-1})$ (of course) depends on the input data set $\bx = (x_1, \dots, x_i)$; when we would like to stress this, we write $U_{\diff}^{\bx}, U_{-1}^{\bx}$ instead of the usual $U_{\diff}, U_{-1}$.

Let $Z_i^1, Z_i^2, Z_i^3$ be the random variables that the user samples. Define $\hZ^1, \hZ^2, \hZ^3$ as $\sum_{i \in [n]} Z_i^1, \sum_{i \in [n]} Z_i^2, \sum_{i \in [n]} Z_i^3$ respectively. We have
\begin{align*}
U_{+1} = \left(\sum_{i \in [n]} x_i\right) + \hZ^1 + \hZ^3,
\end{align*}
and
\begin{align}
U_{-1} = \hZ^2 + \hZ^3.
\end{align}
Hence, we also have
\begin{align} \label{eq:u-diff}
U_{\diff} = \left(\sum_{i \in [n]} x_i\right) + \hZ^1 - \hZ^2.
\end{align}
Moreover, from how $Z^1_i, Z^2_i, Z^3_i$'s are sampled, we have that $\hZ^1, \hZ^2, \hZ^3$ are independent random variables with distributions $\cD^1, \cD^2, \cD^3$ respectively.

Next, consider $(x_1, \dots, x_n) = \bx \sim \bx' = (x'_1, \dots, x'_n)$. We can calculate the $e^{\eps_1 + \eps_2}$-hockey stick divergence between $(U_{\diff}^{\bx}, U_{-1}^{\bx}), (U_{\diff}^{\bx'}, U_{-1}^{\bx'})$ as follows:
\begin{align}
&d_{\eps_1 + \eps_2}((U_{\diff}^{\bx}, U_{-1}^{\bx})\|(U_{\diff}^{\bx'}, U_{-1}^{\bx'})) \label{eq:divergence-correlated-noise} \\
&= \sum_{a, b \in \Z} \Big[ \Pr[U_{\diff}^{\bx} = a, U_{-1}^{\bx} = b] \nonumber\\
& \qquad\quad - e^{\eps_1 + \eps_2}
\cdot \Pr[U_{\diff}^{\bx'} = a, U_{-1}^{\bx'} = b] \Big]_+ \nonumber \\
&= \sum_{a, b \in \Z} \Big[\Pr[U_{\diff}^{\bx} = a] \Pr[U_{-1}^{\bx} = b \mid U_{\diff}^{\bx} = a] \nonumber \\
& \qquad\quad - e^{\eps_1 + \eps_2} \cdot \Pr[U_{\diff}^{\bx'} = a] \Pr[U_{-1}^{\bx'} = b \mid U_{\diff}^{\bx'} = a]\Big]_+ \nonumber \\
&= \sum_{a, b \in \Z} \Pr[U_{\diff}^{\bx'} = a] \cdot \Big[\frac{\Pr[U_{\diff}^{\bx} = a]}{\Pr[U_{\diff}^{\bx'} = a]} \cdot \Pr[U_{-1}^{\bx} = b \mid U_{\diff}^{\bx} = a] \nonumber \\
& \qquad\quad - e^{\eps_1 + \eps_2} \cdot \Pr[U_{-1}^{\bx'} = b \mid U_{\diff}^{\bx'} = a]\Big]_+ \nonumber
\end{align}
Now, observe that $\Pr[U_{\diff}^{\bx} = a]$ and $\Pr[U_{\diff}^{\bx'} = a]$ are the probability that the $(\cD_1 - \cD_2)$ mechanism outputs $a$ on input $\bx$ and $\bx'$ respectively. Since we assume that the $(\cD_1 - \cD_2)$ mechanism is $\eps_1$-DP, the ratio between the two must not exceed $e^{\eps_1}$. Plugging this into the above equation, we have
\begin{align}
&d_{\eps_1 + \eps_2}((U_{\diff}^{\bx}, U_{-1}^{\bx})\|(U_{\diff}^{\bx'}, U_{-1}^{\bx'})) \nonumber \\
&\leq \sum_{a, b \in \Z} e^{\eps_1} \cdot \Pr[U_{\diff}^{\bx'} = a] \cdot \Big[\Pr[U_{-1}^{\bx} = b \mid U_{\diff}^{\bx} = a] \nonumber \\
& \qquad\qquad\quad - e^{\eps_2} \cdot \Pr[U_{-1}^{\bx'} = b \mid U_{\diff}^{\bx'} = a]\Big]_+ \nonumber \\
&= \sum_{a \in \Z} e^{\eps_1} \cdot \Pr[U_{\diff}^{\bx'} = a] \cdot \sum_{b \in \Z} \Big[\Pr[U_{-1}^{\bx} = b \mid U_{\diff}^{\bx} = a] \nonumber
\\ & \qquad\qquad\quad - e^{\eps_2} \Pr[U_{-1}^{\bx'} = b \mid U_{\diff}^{\bx'} = a]\Big]_+ \label{eq:hockey-stick-expand}
\end{align}
We now rearrange the term $\Pr[U_{-1}^{\bx} = b \mid U_{\diff}^{\bx} = a]$ as: 
\begin{align*}
& \Pr[U_{-1}^{\bx} = b \mid U_{\diff}^{\bx} = a] \\
&= \sum_{c \in \Zz} \Pr[\hZ^2 = c \wedge U_{-1}^{\bx} = b \mid U_{\diff}^{\bx} = a] \\
&= \sum_{c \in \Zz} \Pr[\hZ^2 = c \wedge \hZ^3 = b - c \mid U_{\diff}^{\bx} = a] \\
 &= \sum_{c \in \Zz} \Pr[\hZ^2 = c \wedge \hZ^3 = b - c \mid U_{\diff}^{\bx} = a] \\
&= \sum_{c \in \Zz} \Pr[\hZ^3 = b - c] \cdot \Pr[\hZ^2 = c \mid U^{\bx}_{\diff} = a] \\
&= \sum_{c \in \Zz} \cD^3(b - c) \cdot \Pr[\hZ^2 = c \mid U^{\bx}_{\diff} = a],
\end{align*}
where the third line follows since $\hZ^3$ is independent of  $(\hZ^2, U_{\diff})$.
Similarly, we have
\begin{align*}
& \Pr[U_{-1}^{\bx'} = b \mid U_{\diff}^{\bx'} = a] \\
&= \sum_{c' \in \Zz} \cD^3(b - c') \cdot \Pr[\hZ^2 = c' \mid U^{\bx'}_{\diff} = a].
\end{align*}
Notice also that $\sum_{c \in \Zz} \Pr[\hZ^2 = c \mid U^{\bx}_{\diff} = a] = \sum_{c' \in \Zz} \Pr[\hZ^2 = c' \mid U^{\bx'}_{\diff} = a] = 1$. Plugging this into the two previous equations, we have
\begin{align*}
& \Pr[U_{-1}^{\bx} = b \mid U_{\diff}^{\bx} = a] 
= \sum_{c, c' \in \Zz} \cD^3(b - c) \\
& \qquad \cdot \Pr[\hZ^2 = c \mid U^{\bx}_{\diff} = a] \Pr[\hZ^2 = c' \mid U^{\bx'}_{\diff} = a],
\end{align*}
and 
\begin{align*}
& \Pr[U_{-1}^{\bx'} = b \mid U_{\diff}^{\bx'} = a] = \sum_{c, c' \in \Zz} \cD^3(b - c') \\
& \qquad \cdot \Pr[\hZ^2 = c \mid U^{\bx}_{\diff} = a] \Pr[\hZ^2 = c' \mid U^{\bx'}_{\diff} = a].
\end{align*}
These imply that
\begin{align*}
&\sum_{b \in \Z} \left[\Pr[U_{-1}^{\bx} = b \mid U_{\diff}^{\bx} = a] - e^{\eps_2} \cdot \Pr[U_{-1}^{\bx'} = b \mid U_{\diff}^{\bx'} = a]\right]_+ \\
&=\sum_{b \in \Z} \Big[\sum_{c, c' \in \Zz} \left(\cD^3(b - c) - e^{\eps_2} \cdot\cD^3(b - c')\right) \\
& \qquad\quad \cdot \Pr[\hZ^2 = c \mid U^{\bx}_{\diff} = a] \Pr[\hZ^2 = c' \mid U^{\bx'}_{\diff} = a] \Big]_+ \\
&\leq \sum_{b \in \Z} \sum_{c, c' \in \Zz} \Big[\left(\cD^3(b - c) - e^{\eps_2} \cdot\cD^3(b - c')\right) \\
& \qquad\quad \cdot \Pr[\hZ^2 = c \mid U^{\bx}_{\diff} = a] \Pr[\hZ^2 = c' \mid U^{\bx'}_{\diff} = a] \Big]_+ \\
&= \sum_{c, c' \in \Zz} \Pr[\hZ^2 = c \mid U^{\bx}_{\diff} = a] \Pr[\hZ^2 = c' \mid U^{\bx'}_{\diff} = a] \\
& \qquad\quad \cdot \sum_{b \in \Z} \left[\cD^3(b - c) - e^{\eps_2} \cdot\cD^3(b - c')\right]_+ \\
&= \sum_{c, c' \in \Zz} \Pr[\hZ^2 = c \mid U^{\bx}_{\diff} = a] \Pr[\hZ^2 = c' \mid U^{\bx'}_{\diff} = a] \\
& \qquad\quad \cdot \sum_{b \in \Z} \left[\cD^3(b) - e^{\eps_2} \cdot\cD^3(b + (c - c'))\right]_+ \\
&= \sum_{c, c' \in \Zz} \Pr[\hZ^2 = c \mid U^{\bx}_{\diff} = a] \Pr[\hZ^2 = c' \mid U^{\bx'}_{\diff} = a] \\
& \qquad\quad \cdot d_{\eps_2}(\cD^3 \| (c - c') + \cD^3).
\end{align*}
Recall our assumption that the $\cD^3$ mechanism for $\Delta$-summation is $(\eps_2, \delta_2)$-DP. From Lemma~\ref{lem:sum-dp-hockey-stick}, we have $d_{\eps_2}(\cD^3 \| (c - c') + \cD^3) \leq \delta_2$ for all $c, c' \in \{0, \dots, \Delta\}$. Thus, we can further bound the above expression as
\begin{align*}
&\sum_{b \in \Z} \left[\Pr[U_{-1}^{\bx} = b \mid U_{\diff}^{\bx} = a] - e^{\eps_2} \cdot \Pr[U_{-1}^{\bx'} = b \mid U_{\diff}^{\bx'} = a]\right]_+ \\
&\leq \delta_2 + \Pr[\hZ^2 > \Delta \mid U^{\bx}_{\diff} = a] + \Pr[\hZ^2 > \Delta \mid U^{\bx'}_{\diff} = a].
\end{align*}
Plugging the above into~\eqref{eq:hockey-stick-expand} yields
\begin{align*}
&d_{\eps_1 + \eps_2}((U_{\diff}^{\bx}, U_{-1}^{\bx})\|(U_{\diff}^{\bx'}, U_{-1}^{\bx'})) \\
&\leq \sum_{a \in \Z} e^{\eps_1} \cdot \Pr[U_{\diff}^{\bx'} = a] \cdot \Big(\delta_2 + \Pr[\hZ^2 > \Delta \mid U^{\bx}_{\diff} = a] \\
& \qquad\qquad\qquad + \Pr[\hZ^2 > \Delta \mid U^{\bx'}_{\diff} = a]\Big) \\
&\leq e^{\eps_1} \cdot \delta_2 + e^{2\eps_1} \cdot \Pr[\hZ^2 > \Delta] + e^{\eps_1} \cdot \Pr[\hZ^2 > \Delta] \\
&\leq e^{\eps_1} \cdot \delta_2 + 2 e^{2\eps_1} \cdot \delta_3,
\end{align*}
where the second inequality follows from $\Pr[U^{\bx'}_{\diff} = a] \leq e^{\eps_1} \cdot \Pr[U^{\bx}_{\diff} = a]$ which in turn holds due to the $\eps_1$-DP of the $(\cD_1 - \cD_2)$ Mechanism, and the last inequality follows directly from the concentration assumption on the noise $\cD_2$. As a result, from Lemma~\ref{thm:dp-hockey-stick}, the $(\cD_1, \cD_2, \cD_3)$-Correlated Distributed Mechanism is $(\eps_1 + \eps_2, e^{\eps_1} \cdot \delta_2 + 2 e^{2\eps_1} \cdot \delta_3)$-DP in the shuffled model as desired.
\end{proof}

\subsection{Proof of Theorem~\ref{th:bin_agg_nearly_one_bit}}

\begin{proof}[Proof of Theorem~\ref{th:bin_agg_nearly_one_bit}]
Let us pick our parameters as follows:
\begin{itemize}
\item $\eps_1 = (1 - \gamma) \eps$ and $\eps_2 = \min\{0.5, \gamma \eps\}$.
\item $\delta_2 = \delta_3 = \frac{\delta}{e^{\eps_1} + 2 e^{2\eps_1}}$
\item $\Delta = \lceil \frac{\log(1/\delta_2)}{\eps_1}  \rceil$.
\item Let $r, p$ be as in Theorem~\ref{thm:nb-privacy} with $\eps = \eps_2, \delta = \delta_2$, i.e.,
\begin{align*} 
p = e^{-0.2\eps_2/\Delta} \text{ and } r = 3\left(1  + \log\left(\frac{1}{\delta_2}\right)\right).
\end{align*}

\end{itemize}
We simply run the protocol $(\cD^1, \cD^2, \cD^3)$-Correlated Distributed Mechanism for $\cD^1 = \cD^2 = \NB(1, e^{-\eps_1})$ and $\cD^3 = \NB(r, p)$. We now argue that this protocol yields the desired privacy, accuracy, and (expected) communication.

\paragraph{Privacy.}
We will apply Theorem~\ref{thm:dp-neg-noise}. To do so, we have to show that the three required conditions are satisfied:
\begin{itemize}[nosep]
\item (Privacy of True Noise) Since $\cD^1 - \cD^2 = \DLap(\eps_1)$, the $(\cD^1 - \cD^2)$ Mechanism is $\eps_1$-DP in the Central model.
\item (Privacy of Correlated Noise) From our choice of $r, p$, Theorem~\ref{thm:nb-privacy} immediately implies that the $\cD^3$ Mechanism is $(\eps_2, \delta_2)$-DP for $\Delta$-summation in the Central model.
\item (Concentration of Negative Noise) We may compute $\Pr_{Y \sim \cD^2}[Y > \Delta]$ as follows:
\begin{align*}
\Pr_{Y \sim \cD^3}[Y > \Delta] &= \sum_{i = \Delta + 1}^{\infty} \cD^2(i) 
= \sum_{i = \Delta+1}^{\infty} (1 - e^{-\eps_1})e^{-\eps_1 i} \\
& = e^{-\eps_1(\Delta + 1)} 
\leq \delta_3,
\end{align*}
where the last inequality follows from our choice of $\Delta$.
\end{itemize}

Hence, we can apply Theorem~\ref{thm:dp-neg-noise} which implies that the $(\cD^1, \cD^2, \cD^3)$-Correlated Distributed Mechanism is $(\eps_1 + \eps_2, e^{\eps_1} \cdot \delta_2 + 2 e^{2\eps_1} \cdot \delta_3)$-DP in the shuffled model. Since $\eps_1 + \eps_2 \leq \eps$ and $e^{\eps_1} \cdot \delta_2 + 2 e^{2\eps_1} \cdot \delta_3 = \delta$, this is indeed the desired privacy guarantee.

\paragraph{Accuracy.}
From Observation~\ref{obs:util-neg}, the MSE of the $(\cD^1, \cD^2, \cD^3)$-Correlated Distributed Mechanism is $\Var(\cD^1) + \Var(\cD^2) = \Var(\cD^1 - \cD^2) = \Var(\DLap((1 - \gamma)\eps_1))$ as desired.

\paragraph{Communication.}
From Observation~\ref{obs:com-neg}, the expected number of messages sent by each user is at most
\begin{align*}
1 + \frac{\E[\cD^1] + \E[\cD^2] + \E[\cD^3]}{n}
&\leq 1 + \frac{e^{\eps_1} + e^{\eps_1} + r / (1 - p)}{n} \\
&\leq 1 + O\left(\Delta \cdot \frac{\log(1/\delta)}{\eps_2 n}\right) \\
&\leq 1 + O\left(\frac{\log(1/\delta)^2}{\eps_1 \eps_2 n}\right) \\
&\leq 1 + O\left(\frac{\log(1/\delta)^2}{\gamma \eps^2 n}\right),
\end{align*}
where the second inequality follows from the choice of $p, r$ and since $\eps \leq O(1)$, the third inequality follows from the choice of $\Delta$, and the last inequality follows from the choice of $\eps_1, \eps_2$ and since $\gamma \leq 1/2$.  
\end{proof}

\section{From Binary Summation to Histograms}\label{sec:red_bit_sum_to_hist}

In this section, we prove Corollary~\ref{th:histograms_single_bit} which we start by recalling:

\begin{corollary}[\bf Histograms]\label{cor:histograms_single_bit_restated}
For every $\epsilon \le O(1)$ and every $\delta, \gamma \in (0, 1/2)$, there is an $(\epsilon, \delta)$-DP protocol for histograms on sets of size $B$ in the multi-message shuffled model, with error equal to a vector of independent discrete Laplace random variables each with parameter $\frac{(1-\gamma)\epsilon}{2}$ and with an expected communication per user of $(\log{B}+1) \cdot ( 1+O(\frac{B\log^2(1/\delta)}{\gamma \epsilon^2 n}))$ bits.
\end{corollary}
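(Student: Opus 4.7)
The plan is to reduce histograms to $B$ parallel instances of the binary summation protocol from Theorem~\ref{th:bin_agg_nearly_one_bit}, one per bucket, with each emitted message tagged by its bucket index. Given a user's input $j_i \in [B]$, view it as a one-hot vector $x^{(i)} \in \{0,1\}^B$ with $x^{(i)}_{j_i}=1$. For each $j \in [B]$, the user invokes the binary summation randomizer of Theorem~\ref{th:bin_agg_nearly_one_bit} on the bit $x^{(i)}_j$, instantiated with privacy parameters $(\eps/2, \delta/2)$ and the given $\gamma$, and prefixes each emitted $\pm 1$ message with the $\lceil \log B \rceil$-bit binary encoding of $j$, yielding messages of total length $\lceil \log B \rceil + 1$. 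The analyzer partitions the shuffler's output by the tag and applies the binary-summation analyzer within each part to obtain one estimate per bucket.

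For privacy, two neighboring histogram inputs differ in exactly one user's bucket assignment, so the corresponding one-hot vectors differ in exactly two coordinates: one flipped from $1$ to $0$, one flipped from $0$ to $1$. Hence only two of the $B$ per-bucket sub-protocols receive differing inputs; the remaining $B - 2$ instances produce identically distributed outputs. The adversary's view is a multiset of tagged messages; because tags are preserved by the random permutation, this view is information-equivalent to observing $B$ independent per-bucket shuffled outputs. Applying basic composition to the two affected per-bucket instances, each $(\eps/2, \delta/2)$-DP by Theorem~\ref{th:bin_agg_nearly_one_bit}, yields the claimed $(\eps, \delta)$-DP guarantee.

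For accuracy, by Theorem~\ref{th:bin_agg_nearly_one_bit} each bucket's estimator has error distributed as $\DLap((1-\gamma)\eps/2)$, and the errors are independent because each per-bucket instance uses fresh randomness. For communication, a user's contribution to bucket $j$ has expectation $x^{(i)}_j + O(\log^2(1/\delta)/(\gamma \eps^2 n))$ messages, obtained by substituting $\eps/2$ for $\eps$ and $\delta/2$ for $\delta$ in the bound of Theorem~\ref{th:bin_agg_nearly_one_bit} and absorbing constants. Summing over $j \in [B]$ and using $\sum_j x^{(i)}_j = 1$ gives an expected total of $1 + O(B \log^2(1/\delta)/(\gamma \eps^2 n))$ messages per user, each of length $\lceil \log B \rceil + 1$ bits, as required.

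The step I would scrutinize most carefully is the privacy argument: specifically, the claim that the tagged shuffler output decomposes into $B$ independent per-bucket shuffled views. This is essentially because the random permutation commutes with partitioning by tag, and messages from distinct buckets carry distinct tags deterministically, so no information leaks across buckets beyond what each per-bucket shuffle reveals. Once this decomposition is established, the remainder is routine: parallel running of $B$ copies, effective sensitivity $2$ across neighboring datasets (since the one-hot encodings differ in exactly two coordinates), and basic composition of two $(\eps/2, \delta/2)$-DP binary-summation instances.
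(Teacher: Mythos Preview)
Your proposal is correct and follows essentially the same approach as the paper: one-hot encode each user's input, run the binary summation protocol of Theorem~\ref{th:bin_agg_nearly_one_bit} in parallel on all $B$ buckets with parameters $(\eps/2,\delta/2)$, tag each message with its bucket index, and combine via basic composition over the two buckets affected by a neighboring change. The paper additionally spells out the explicit parameter choices (rather than invoking Theorem~\ref{th:bin_agg_nearly_one_bit} as a black box) and makes the same implicit use of the tag-partition decomposition you flagged for scrutiny.
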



\begin{minipage}{0.47\textwidth}
\begin{algorithm}[H]
\caption{Histogram Randomizer.} \label{alg:randomizer_hist}
\begin{algorithmic}[1]
\Procedure{HistogramRandomizer$_{\mathcal{R}}(i)$}{}
\State \textbf{For} $j=1$ \textbf{to} $B$:
\State \quad $S_j \leftarrow \mathcal{R}(\bone[i = j])$
\State \quad $R_j \leftarrow \{j\} \times S_j$
\State \Return $\bigcup_{j=1}^{B} R_j$
\EndProcedure
\end{algorithmic}
\end{algorithm}
\end{minipage}
\hfill
\begin{minipage}{0.48\textwidth}
\begin{algorithm}[H]
\caption{Histogram Analyzer.} \label{alg:analyzer_hist}
\begin{algorithmic}[1]
\Procedure{HistogramAnalyzer$_{\mathcal{A}}(R)$}{}
\State \textbf{For} $j=1$ \textbf{to} $B$:
\State \quad $R_j \leftarrow \{y_1 \; |  \; y \in R \text{ and } y_0 = j\}$
\State \quad $a_j \leftarrow \mathcal{A}(R_j)$
\State \Return $(a_j)_{j=1}^B$
\EndProcedure
\end{algorithmic}
\end{algorithm}
\end{minipage}


\begin{proof}[Proof of Corollary~\ref{cor:histograms_single_bit_restated}]
We define the $(\cD^1, \cD^2, \cD^3)$-Correlated Distributed Histogram Randomizer by instantiating the generic histogram analyzer given in Algorithm~\ref{alg:randomizer_hist} with $\mathcal{R} = ${ \sc Randomizer}$_{\cD^1, \cD^2, \cD^3, n}$ from Algorithm~\ref{alg:randomizer2}. Similarly, we define $(\cD^1, \cD^2, \cD^3)$-Correlated Distributed Histogram Analyzer by instantiating the generic histogram analyzer given in Algorithm~\ref{alg:analyzer_hist} with $\mathcal{A} = ${ \sc Analyzer}$_{\cD^1, \cD^2}$ from Algorithm~\ref{alg:analyzer2}. Thus, the resulting procedures proceed via the same templates as the binary summation randomizer and analyzer in Algorithms~\ref{alg:randomizer2} and~\ref{alg:analyzer2} respectively but by applying it to each of the $B$ buckets. As a consequence, each message consists of an index in $[B]$ in addition to the increment/decrement bit which yields a communication cost per message of $\log{B} + 1$ bits. We next provide the privacy, accuracy, and communication analyses for completeness. To do so, we start by defining the $\infty$-div distributions $\cD^1$, $\cD^2$, and $\cD^3$ to be used in the aforementioned calls to Algorithms~\ref{alg:randomizer_hist} and~\ref{alg:analyzer_hist}. Given $\epsilon$, $\delta$ and $\gamma$ as in Corollary~\ref{cor:histograms_single_bit_restated}, we define the parameters:

\begin{itemize}
\item $\eps_1 = \frac{(1 - \gamma) \eps}{2}$ and $\eps_2 = \frac{\gamma \eps}{2}$.
\item $\delta_2 = \delta_3 = \frac{\delta}{2 \cdot (e^{\eps_1} + 2 e^{2\eps_1})}$
\item $\Delta = \lceil \frac{\log(1/\delta_2)}{\eps_1}  \rceil$.
\item 
Let $r, p$ be as in Theorem~\ref{thm:nb-privacy} with $\eps = \eps_2, \delta = \delta_2$, i.e.,
\begin{align*} 
p = e^{-0.1 \eps_2 / \Delta} \text{ and } r = 50 \cdot e^{\eps_2 / \Delta} \cdot \log\left(\frac{1}{\delta_2}\right).
\end{align*}
\end{itemize}
We set $\cD^1 = \cD^2 = \NB(1, e^{-\eps_1})$ and $\cD^3 = \NB(r, p)$.
Note that these are the same settings as in the proof of Theorem~\ref{th:bin_agg_nearly_one_bit} except that $\epsilon$ is replaced by $\frac{\epsilon}{2}$ and $\delta$ is replaced by $\frac{\delta}{2}$.

\paragraph{Privacy.}
Note that in the shuffled model, the analyzer in Algorithm~\ref{alg:analyzer_hist} only observes the number of $+1$'s and the number of $-1$'s for each of the $B$ buckets. We will first prove the privacy guarantee in the case where the analyzer only observes the number of $+1$'s and the number of $-1$'s  for a \emph{single bucket}, and then extend the argument to the case where it observes all $B$ buckets. For a fixed bucket, the task reduces to the privacy of the bit summation protocol which was shown in the proof of Theorem~\ref{th:bin_agg_nearly_one_bit}. Due to our slightly different setting of parameters where $\epsilon$ and $\delta$ are replaced by $\frac{\epsilon}{2}$ and $\frac{\delta}{2}$ respectively, this implies that the protocol is $(\frac{\epsilon}{2}, \frac{\delta}{2})$-DP. Extending to the case of multiple buckets, we note that any change in a single user's input would affect exactly \emph{two} buckets. Using the fact that the noise variables in these two buckets are independent, the Basic Composition theorem (see, e.g., Theorem B.1 of~\cite{dwork2014algorithmic}), this implies that in the general case where the analyzer observes the counts for all buckets, the histograms protocol given by Algorithms~\ref{alg:randomizer_hist} and~\ref{alg:analyzer_hist} is $(\epsilon, \delta)$-DP as desired.

\paragraph{Accuracy.}
Note that the $B$-dimensional error vector in Algorithm~\ref{alg:analyzer_hist} consists of independent coordinates each equal to the unbiased difference of two independent $\NB(1, e^{-\eps_1})$ random variables. Thus, each coordinate of the error vector is distributed according to $\DLap(\eps_1) = \DLap((1-\gamma) \eps/2)$ as desired.

\paragraph{Communication.}
The expected number of messages sent by each user in Algorithm~\ref{alg:randomizer_hist} is
\begin{align}\label{eq:hist_exp_comm_per_user}
 &1 + B \cdot (\E[\cD^1_{/n}] + \E[\cD^2_{/n}] + 2 \cdot \E[\cD^3_{/n}])\nonumber\\ 
 &= 1 + B \cdot \bigg(\frac{\E[\cD^1] + \E[\cD^2] + 2 \cdot \E[\cD^3]}{n}\bigg)
\end{align}
But $\E[\cD^1] = \E[\cD^2] = \frac{e^{-\eps_1}}{1-e^{-\eps_1}} = \frac{1}{e^{\eps_1}-1} \le \frac{1}{\epsilon_1}$. On the other hand,
\begin{align*}
&\E[\cD^3] = \frac{pr}{1-p} = \frac{e^{-0.1\eps_2/\Delta} \cdot \log(1/\delta_2) \cdot 50 \cdot e^{\eps_2/\Delta}}{1-e^{-0.1\eps_2/\Delta}}\\ 
&= \frac{50 \cdot e^{\eps_2/\Delta} \cdot \log(1/\delta_2)}{e^{0.1\eps_2 / \Delta}-1} \le \frac{500 \cdot \Delta \cdot e^{\eps_2/\Delta} \cdot \log(1/\delta_2)}{\eps_2}.
\end{align*}
Plugging back in Equation~\eqref{eq:hist_exp_comm_per_user} and using the facts that $\Delta = \lceil \frac{\log(1/\delta_2)}{\eps_1}  \rceil$, $\eps_2 = \frac{\gamma \eps}{2}$, $\delta_2 =  \frac{\delta}{2 \cdot (e^{\eps_1} + 2 e^{2\eps_1})}$, and $\eps = O(1)$, we get that the expected number of messages sent by each user is at most
$$ 1 +  B \cdot \bigg( \frac{2}{\eps_1 n} + O\left(\frac{\log^2(1/\delta)}{\eps_1 \eps_2 n}\right) \bigg) = 1 + O\left(\frac{B\log^2(1/\delta)}{\gamma \eps^2 n}\right).$$
The expected number of bits of communication per user stated in Corollary~\ref{cor:histograms_single_bit_restated} now follows using the fact that each message in Algorithm~\ref{alg:randomizer_hist} consists of $\log{B}+1$ bits.
\end{proof}

\paragraph{Improving the Computational Efficiency of Randomizer.}

If implemented naively, the randomizer (Algorithm~\ref{alg:randomizer_hist}) has a running time that depends linearly on $B$, since it has to go over all the buckets $i \in [B]$ and run the randomizer on each bucket, which corresponds to sampling random variables $Z^1_i, Z^2_i, Z^3_i$. Intuitively, this should not be necessary since most of these random variables are zero. Below, we will show that this intuition is indeed correct, in the sense that the randomizer can be significantly sped up, leading to an expected running time that is linear in the expected per-user communication cost.\footnote{For ease of exposition, we assume a model of computation where the logarithm and exponential functions can be computed in constant time, and where a uniform random variable in $(0,1)$ can be sampled in constant time.}

To do so, let us abstract the problem at hand as follows:

\begin{definition}
Let $\cD$ be an infinitely divisible distribution on non-negative integers. In the {\sc ParallelSampling} problem for $\cD$, we are given two positive integers $n, B$ and the goal is to output a multiset $S$ whose elements are from $[B]$ such that, if we let $Y_i$ be the number of occurrences of $i$ in $S$, then $Y_1, \dots, Y_B$ are distributed as i.i.d. $\cD_{/n}$ random variables.
\end{definition}

Recall from Theorem~\ref{thm:inf-div-dcp} that every infinitely divisible distribution on non-negative integers can be represented as a discrete compound Poisson (DCP) distribution. We will exploit this representation to give an efficient algorithm for {\sc ParallelSampling}, as stated below.

\begin{theorem}\label{th:eff_parallel_samp}
Let $\cD$ be the discrete compound Poisson distribution $\DCP(\lambda, \cD')$. Suppose that there is an algorithm $\cA'$ that can sample $\cD'$ in expected time $T$. Then, {\sc ParallelSampling}$_{\cD} (n, B)$ can be performed in expected time $O\left(1+\E[\cD] \cdot \frac{B \log B}{n} + \lambda \cdot T \cdot \frac{B}{n}\right)$.
\end{theorem}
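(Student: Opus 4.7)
The plan is to exploit the DCP representation of $\cD$ to sample all $B$ bins jointly, avoiding the naive per-bin cost of $\Omega(B)$. The key structural fact I will use is that divisibility preserves the DCP form: if $\cD = \DCP(\lambda, \cD')$, then $\cD_{/n} = \DCP(\lambda/n, \cD')$. This is because drawing $n$ independent copies $W_1, \dots, W_n$ from $\DCP(\lambda/n, \cD')$, i.e.\ $W_i = \sum_{j=1}^{N_i} X_{i,j}$ with $N_i \sim \Poi(\lambda/n)$ and $X_{i,j} \sim \cD'$ all independent, gives $\sum_i W_i = \sum_i \sum_{j=1}^{N_i} X_{i,j}$ where $\sum_i N_i \sim \Poi(\lambda)$, which is exactly $\DCP(\lambda, \cD') = \cD$.

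Given this, {\sc ParallelSampling}$_\cD(n, B)$ boils down to independently drawing $(Y_1, \dots, Y_B)$ with each $Y_i \sim \DCP(\lambda/n, \cD')$. I would generate these jointly using Poisson splitting. The algorithm is: (i) sample a single $N \sim \Poi(\lambda B/n)$; (ii) for $t = 1, \dots, N$, draw a uniform bin $i_t \in [B]$, invoke $\cA'$ to obtain $X_t \sim \cD'$, and append $X_t$ copies of $i_t$ to the output multiset $S$. Correctness follows from Poisson splitting: the per-bin counts $N_i := |\{t : i_t = i\}|$ are independent $\Poi(\lambda/n)$ random variables, and conditioned on these counts, $Y_i = \sum_{t : i_t = i} X_t$ is a sum of $N_i$ i.i.d.\ samples from $\cD'$, which by the DCP fact above is $\cD_{/n}$ as required.

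For the runtime, step~(i) is $O(1)$ under the assumed model. Step~(ii) runs $N$ iterations, with $\E[N] = \lambda B/n$; each iteration costs $O(T)$ for the call to $\cA'$ plus $O(1)$ for sampling the uniform bin, contributing $O(1 + \lambda T B/n)$ in expectation. The output phase writes $\sum_{t=1}^N X_t$ bin indices, each of length $O(\log B)$ bits; by Wald's identity, $\E\left[\sum_t X_t\right] = \E[N] \cdot \E[\cD'] = (\lambda B/n) \cdot (\E[\cD]/\lambda) = B \cdot \E[\cD]/n$, giving expected output-writing cost $O(B \E[\cD] \log B / n)$. Summing the three contributions matches the claimed bound.

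The only mildly delicate points are (a) the identification $\cD_{/n} = \DCP(\lambda/n, \cD')$, which is immediate from the DCP definition once stated correctly, and (b) the expected-time analysis of the output stage, since the total output length is itself random—but this is handled cleanly by Wald's identity since the $X_t$ are i.i.d.\ and independent of $N$. I do not anticipate any serious obstacle; the whole argument is essentially an application of Poisson thinning to the DCP characterization provided by Theorem~\ref{thm:inf-div-dcp}.
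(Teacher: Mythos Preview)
Your proposal is correct and essentially identical to the paper's: both sample a single $N \sim \Poi(\lambda B/n)$ and use Poisson splitting to distribute the DCP jumps across bins, then account for the output length via $\E[N]\cdot\E[\cD'] = B\,\E[\cD]/n$. One small note: sampling $\Poi(\lambda B/n)$ takes expected time $O(1 + \lambda B/n)$ (Knuth's algorithm, the paper's Lemma~\ref{th:poisson_sampling}) rather than $O(1)$ as you state, and the paper charges $O(\log B)$ rather than $O(1)$ for the uniform bin draw---but both costs are absorbed by the other terms, so your final bound is unaffected.
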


\begin{minipage}{0.47\textwidth}
\begin{algorithm}[H]
\caption{Efficient Sampling for $\cD = \DCP(\lambda, \cD')$.} \label{alg:parallel_sampling}
\begin{algorithmic}[1]
\Procedure{ParallelSampling$_{\cD} (n, B)$}{}
\State $S \leftarrow \{\}$ \label{line:initial}
\State Sample $N_{\text{sum}} \sim \Poi(B \lambda / n)$ \label{line:Poisson_sample}
\State \textbf{For} $\ell = 1$ \textbf{to} $N_{\text{sum}}$:
\State \quad Randomly sample $j \sim [B]$ \label{line:uniform_sample}
\State \quad Use algorithm $\cA'$ to sample $X \sim \cD'$ \label{line:procedure_sample}
\State \quad Add $X$ copies of $j$ to $S$ \label{line:copies_add}
\State \Return $S$
\EndProcedure
\end{algorithmic}
\end{algorithm}
\end{minipage}

We will need the following fact about generating Poisson random variables:
\begin{lemma}[\cite{Knuth81}]\label{th:poisson_sampling}
There is an algorithm that takes a positive real number $\lambda$ and generates a sample from $\Poi(\lambda)$ in expected $O(1+\lambda)$ steps.
\end{lemma}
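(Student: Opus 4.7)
The plan is to present Knuth's classical sampling algorithm based on the exponential inter-arrival characterization of a Poisson process, and then to separately verify correctness of the output distribution and the claimed bound on the expected running time. I rely on the paper's stated model of computation, where a $\mathrm{Uniform}(0,1)$ sample and each arithmetic operation (including $\exp$ and $\log$) cost $O(1)$.

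First I would describe the algorithm. Recall the standard fact that if $E_1,E_2,\dots$ are i.i.d.\ $\mathrm{Exp}(1)$ random variables and $N:=\max\{k\ge 0\,:\,E_1+\cdots+E_k\le \lambda\}$ (with the convention $N=0$ when $E_1>\lambda$), then $N\sim\Poi(\lambda)$. Equivalently, using $U_i:=e^{-E_i}\sim\mathrm{Uniform}(0,1)$, the condition $\sum_{i=1}^k E_i\le \lambda$ becomes $\prod_{i=1}^k U_i\ge e^{-\lambda}$. Knuth's algorithm therefore precomputes $L:=e^{-\lambda}$ in $O(1)$ time, then repeatedly samples $U_1,U_2,\dots\sim\mathrm{Uniform}(0,1)$ while maintaining the running product $P_k=\prod_{i=1}^k U_i$, stopping at the first index $K$ for which $P_K<L$, and returning $K-1$. (For numerical robustness with large $\lambda$ one instead maintains $S_k=\sum_{i=1}^k \ln U_i$ and stops when $S_K<-\lambda$; under the stated computation model the two versions are equivalent.)

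Next I would prove correctness. By the characterization above, $K-1=\max\{k\ge 0:\prod_{i=1}^k U_i\ge e^{-\lambda}\}$ has the same distribution as $N=\max\{k\ge 0:E_1+\cdots+E_k\le\lambda\}$, so the returned value is distributed exactly as $\Poi(\lambda)$.

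For the running time, each iteration of the loop performs one uniform sample, one multiplication (or addition, in the log-domain variant), and one comparison, each of which is $O(1)$ under the assumed model. The total number of iterations equals $K=(K-1)+1$, which is distributed as $N+1$ with $N\sim\Poi(\lambda)$. Hence the expected number of iterations is $\E[N]+1=\lambda+1$, and the total expected running time, including the $O(1)$ precomputation of $L$ and the initial setup, is $O(1+\lambda)$. There is no serious obstacle in the argument; the only subtlety worth flagging is the arithmetic model assumption, since without $O(1)$ access to $\exp$ and $\log$ one would have to argue about either finite-precision arithmetic or replace Knuth's algorithm with a different bounded-rejection scheme for large $\lambda$.
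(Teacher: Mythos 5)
Your proposal is correct: it reproduces exactly the classical Knuth multiplication-of-uniforms algorithm, with a valid correctness argument via the exponential inter-arrival characterization and the right expected iteration count $\E[N]+1 = \lambda+1$. The paper does not prove this lemma itself — it is cited directly to Knuth — so your writeup simply supplies the standard argument behind the citation, and your caveat about the constant-time $\exp$/$\log$ model matches the assumption the paper makes explicit in its footnote on the model of computation.
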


We are now ready to prove Theorem~\ref{th:eff_parallel_samp}.
\begin{proof}[Proof of Theorem~\ref{th:eff_parallel_samp}]
Our sampling procedure is given in Algorithm~\ref{alg:parallel_sampling}.

\paragraph{Correctness.}
For every $i \in [B]$, let $N_i$ denote the number of iterations for which the index $j$ sampled in line~\ref{line:uniform_sample} is equal to $i$. From standard properties\footnote{Specifically, it is well-known that if we first pick $N_{\text{sum}} \sim \Poi(\tilde{\lambda})$ and let $N_1, \dots, N_n$ be a multinomial distribution with $N_{\text{sum}}$ trials and the probability of incrementing $i$-th bucket in each trial is $p_i$, then $N_i$'s are independent $\Poi(\tilde{\lambda} \cdot p_i)$ random variables. This is sometimes referred to in literature as \emph{Poissonization}.} of the Poisson random variable, we have $N_1, \dots, N_B$ are i.i.d. $\Poi(\lambda / n)$ random variables. As a result, the algorithm is equivalent to: for every $i \in [B]$, independently generate $N_i \sim \Poi(\lambda / n)$, generate $X_1, \dots, X_{N_i} \sim \cD'$ and let $Y_i = X_1 + \cdots + X_{N_i}$. In other words, we have $Y_1, \dots, Y_B$ are i.i.d. $\DCP(\lambda / n, \cD') = \DCP(\lambda, \cD')_{/n} = \cD_{/n}$ as desired.

\paragraph{Expected Running Time.}
We calculate the expected running time for each step of our algorithm:
\begin{itemize}
\item Line~\ref{line:initial} clearly takes constant time.
\item For line~\ref{line:Poisson_sample}, we can use Lemma~\ref{th:poisson_sampling} to sample from $\Poi(B\lambda/n)$ in expected time $O\left(1+\frac{B\lambda}{ n}\right)$.
\item Notice that $\E[N_{\text{sum}}] = B\lambda / n$. Thus, each of the following sub-steps is repeated this many times in expectation. We now list the running time for each sub-step:
\begin{itemize}
\item Line~\ref{line:uniform_sample} takes at most $O(\log B)$ time.
\item From our assumption on $\cA'$, line~\ref{line:procedure_sample} takes $T$ time.
\item Line~\ref{line:copies_add} takes $O(\E[\cD'] \cdot \log B)$ time in expectation.
\end{itemize}
\end{itemize}
Hence, in total, the expected running time of Algorithm~\ref{alg:parallel_sampling} is:
\begin{align*}
&O\left(1+\frac{B \lambda}{n} \cdot \left(\log B + T + \E[\cD'] \cdot \log B\right) \right) \\
&= O\left(1+\frac{B \lambda T}{n} + \lambda \cdot \E[\cD'] \cdot \frac{B \log B}{n}\right) \\
&= O\left(1+\frac{B \lambda T}{n} + \E[\cD] \cdot \frac{B \log B}{n}\right),
\end{align*}
where the first equality used that $\E[\cD'] = \Omega(1)$ which follows from the fact that $\cD'$ is supported on the positive integers. This completes our proof.
\end{proof}

Using our $\cD$-{\sc ParallelSampling} routine and its analysis in Theorem~\ref{th:eff_parallel_samp}, we are now ready to describe our computationally efficient implementation of the histogram randomizer. The pseudo-code is given in Algorithm~\ref{alg:randomizer_hist_efficient}. 

\begin{algorithm}[H]
\caption{Efficient Histogram Randomizer.} \label{alg:randomizer_hist_efficient}
\begin{algorithmic}[1]
\Procedure{HistogramRandomizer$_{\cD^1, \cD^2, \cD^3, n, B}(i)$}{}
\State $S_1 \leftarrow$ {\sc ParallelSampling}$_{\cD^1}(n, B)$
\State $S_2 \leftarrow$ {\sc ParallelSampling}$_{\cD^2}(n, B)$
\State $S_3 \leftarrow$ {\sc ParallelSampling}$_{\cD^3}(n, B)$
\State $R \leftarrow \{(i,+1)\}$
\State \textbf{For} $j \in S_1$:
\State \quad Add $(j,+1)$ to $R$
\State \textbf{For} $j \in S_2$:
\State \quad Add $(j,-1)$ to $R$
\State \textbf{For} $j \in S_3$:
\State \quad Add both $(j,+1)$ and $(j,-1)$ to $R$
\State \Return $R$
\EndProcedure
\end{algorithmic}
\end{algorithm}

Note that the running time of this algorithm is on the same order as the time need to run {\sc ParallelSampling} on distributions $\cD^1, \cD^2, \cD^3$ and inputs $n, B$. Since $\cD^1 = \cD^2 = \NB(1, e^{-\eps_1})$ and $\cD^3 = \NB(r, p)$ in our proof of Corollary~\ref{cor:histograms_single_bit_restated}, our task boils down to using an efficient sampling procedure for the Negative Binomial distribution. Since $\NB(r, p)$ is infinitely divisible, Theorem~\ref{thm:inf-div-dcp} implies that it is a discrete compound Poisson distribution. In this case, the corresponding distribution $\cD'$ on positive integers from Definition~\ref{def:DCP} is known to be the \emph{logarithmic distribution} with parameter~$p$. We will apply Theorem~\ref{th:eff_parallel_samp} with $\cD = \NB(r, p)$ and $\cD'$ set to the logarithmic distribution with parameter~$p$. To do so, we need to upper-bound the time needed to sample from the logarithmic distribution. This is achieved in the following lemma.

\begin{lemma}\label{th:log_dist_sampling}
There is an algorithm that takes a parameter $p \in (0,1)$ and generates a sample from the logarithmic distribution with parameter~$p$ in an expected number of steps proportional to the mean of the distribution.
\end{lemma}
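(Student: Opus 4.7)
The plan is to use a direct inverse-CDF sampling procedure that steps sequentially through the support of the distribution, combined with a simple multiplicative recurrence to keep each step at constant cost. Recall that the logarithmic distribution with parameter $p \in (0,1)$ has PMF $P(k) = \frac{-1}{\ln(1-p)} \cdot \frac{p^k}{k}$ for $k \in \{1, 2, 3, \ldots\}$ and mean $\mu = \frac{-p}{(1-p)\ln(1-p)}$.

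First I would precompute $C := -1/\ln(1-p)$ and draw $U \sim \mathrm{Uniform}(0,1)$, both in $O(1)$ time under the computational model stated in the paper's footnote. Then I initialize $k \leftarrow 1$, the current PMF value $P \leftarrow C \cdot p$, and a running CDF value $F \leftarrow P$. While $F < U$, the loop performs the update $P \leftarrow P \cdot p \cdot k/(k+1)$, increments $k$ by one, and adds the new $P$ to $F$. The algorithm returns the final value of $k$.

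Correctness is immediate: the algorithm returns $k$ exactly when $F(k-1) < U \leq F(k)$, where $F$ denotes the CDF of the logarithmic distribution with parameter $p$, and this event has probability $P(X=k)$, as required. For the running time, the number of loop iterations performed equals the returned value, so the expected number of iterations is exactly $\E[X] = \mu$. Each iteration performs a constant number of arithmetic operations thanks to the recurrence $P(k+1)/P(k) = p \cdot k/(k+1)$, which avoids recomputing $p^k$ or $k!$ from scratch. Together with the $O(1)$ precomputation, and using that $\mu \geq 1$ since the distribution is supported on the positive integers, the total expected running time is $O(\mu)$.

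There is no real mathematical obstacle here; the one implementation-level subtlety worth flagging is precisely the maintenance of the running PMF value across iterations. A naive implementation that recomputed $p^k/k$ from scratch at each step would spend $\Omega(k)$ time per iteration and degrade the overall bound to $O(\mu^2)$, whereas the multiplicative recurrence keeps per-step work constant and yields the claimed bound.
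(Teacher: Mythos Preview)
Your proposal is correct and takes essentially the same approach as the paper: inverse transform sampling, stepping sequentially through $k=1,2,\dots$, and using the multiplicative recurrence $f(k)/f(k-1)=p\cdot(k-1)/k$ to keep each iteration $O(1)$. Your write-up is in fact slightly more careful than the paper's (you start the loop at $k=1$ rather than $k=2$, which matters for returning $k=1$ with the correct probability).
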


\begin{proof}
Recall that the probability mass function of the logarithmic distribution with parameter $p \in (0,1)$ is given by $f(k) := -\frac{1}{\ln(1-p)} \frac{p^k}{k}$ for any positive integer $k$. We will apply inverse transform sampling. Namely, let $F(\cdot)$ denote the cumulative distribution function. We proceed as follows:
\begin{enumerate}
    \item Sample a uniform number $u$ between $0$ and $1$.
    \item For $k = 2, 3, \dots$,
    \begin{enumerate}
        \item If $F(k) > u$, return $k$.
    \end{enumerate}
\end{enumerate}
Note that this procedure outputs $k$ whenever $u$ lands in the interval $[F(k-1), F(k))$. The probability of this event is equal the width of the interval, which is equal to the probability $f(k)$. Moreover, the expected number of iterations run in the for loop is equal to the mean of the distribution. Note as we iterate over increasing values of $k$ in the above procedure, the value of $F(k)$ can be computed from that of $F(k-1)$ in constant time. This follows from the identity $f(k) = f(k-1) \cdot p \cdot \frac{(k-1)}{k}$ which holds fro all positive integers $k$ larger than $1$.
\end{proof}

By Theorem~\ref{th:eff_parallel_samp} and Lemma~\ref{th:log_dist_sampling}, we conclude that using the implementation in Algorithm~\ref{alg:randomizer_hist_efficient} allows us to reduce the expected running time of the histogram randomizer from $\Omega(B)$ to $O\left(1+\frac{B\log^2(1/\delta)}{\gamma \epsilon^2 n}\right)$ (i.e, to the same order as the expected per user communication complexity).

\section{Lower Bound for Single-Message Binary Summation}\label{sec:lb_bit_sum_single_message}

We now prove a lower bound against single-message protocols for binary summation (Theorem~\ref{th:bin_sum_lb_intro}). 
The lower bound does not depend on any restriction on the size of the message output by the randomizer.
In fact, we prove a slightly stronger statement than in Theorem~\ref{th:bin_sum_lb_intro}, in that our lower bound holds even for $\eps = \ln n - \log n - \Theta(1)$, as stated more precisely below.

\begin{theorem}
\label{th:lb_binary_agg}
Let $\delta = 1/n^c$ and $\epsilon \le \ln{n} - \ln(d \log{n})$ for any positive constants $c$ and  $d$. Then, any $(\epsilon, \delta)$-DP protocol for Binary Summation in the single-message shuffled model must incur an expected squared error of at least $f\log{n}$ for some positive constant $f$ depending on $c$ and $d$.
\end{theorem}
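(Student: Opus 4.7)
The plan is as follows.

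\textbf{Reformulation.} Any single-message shuffled protocol is specified by a local randomizer $R\colon \{0,1\} \to \mathcal{Y}$ with output distributions $P_0 = R(0)$ and $P_1 = R(1)$, together with an analyzer acting on the shuffled multiset of messages. Because the shuffler destroys the correspondence between messages and users, the distribution of the analyzer's input depends on the dataset only through the sum $X = \sum_i x_i$: it is the multiset of $X$ i.i.d.\ samples from $P_1$ and $n-X$ i.i.d.\ samples from $P_0$. The privacy guarantee therefore translates into the requirement that the multiset distributions $\mu_X$ and $\mu_{X\pm 1}$ be $(\epsilon,\delta)$-indistinguishable.

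\textbf{Step 1 (divergence upper bound from privacy).} The first main step is to prove a bound of the form
\[
\chi^2(P_1 \,\|\, P_0) \;:=\; \sum_{y}\frac{(p_1(y)-p_0(y))^2}{p_0(y)} \;\le\; C\cdot\frac{n}{\log(1/\delta)},
\]
for some constant $C$ depending on $c$ and $d$. The intuition is that for each $y$, the count $H_y$ is approximately $\Bin(n,p_0(y))$ under $\mu_0$ with standard deviation $\sqrt{n p_0(y)}$, and its mean shifts by $p_1(y)-p_0(y)$ when one user's bit flips from $0$ to $1$; the ratio $(p_1(y)-p_0(y))^2/p_0(y)$ thus controls the number of standard deviations by which these binomials differ. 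The shuffled $(\epsilon,\delta)$-DP condition forces this shift to be at most $O(\sqrt{\log(1/\delta)})$ standard deviations, because otherwise a thresholding test on $H_y$ would distinguish $\mu_0$ from $\mu_1$ beyond the $\delta$ slack in Definition~\ref{def:dp}. Aggregating these per-coordinate bounds via $\sum_y p_0(y)=1$, and handling ``heavy'' messages (small $p_0(y)$) through the $\delta$ budget, yields the stated $\chi^2$ bound.

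\textbf{Step 2 (variance lower bound via Fisher information).} Regard $\theta := X/n$ as a continuous parameter in $[0,1]$. Each of the $n$ i.i.d.\ samples forming the multiset is drawn from $(1-\theta)P_0 + \theta P_1$, and its Fisher information for $\theta$ at $\theta=0$ equals exactly $\chi^2(P_1\,\|\,P_0)$. Hence the total Fisher information is at most $n\chi^2(P_1\,\|\,P_0)\le Cn^2/\log(1/\delta)$. Applying a van Trees (Bayesian Cram\'er--Rao) inequality with a smooth prior supported near $\theta=0$ (to absorb both bias and the discreteness of $X$) then shows that the Bayes risk, and therefore the worst-case expected squared error of any estimator of $X = n\theta$, is at least $\Omega(\log(1/\delta)) = \Omega(\log n)$ in the regime $\delta = 1/n^c$.

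The main obstacle is Step 1: converting the shuffled DP guarantee, which constrains the joint distribution of the possibly high-dimensional histogram $H$, into a single-number bound on $\chi^2(P_1\,\|\,P_0)$. The delicate part is aggregating the per-message constraints into a tight bound on $\sum_y (p_1(y)-p_0(y))^2/p_0(y)$. I expect to do this by partitioning the message space according to the magnitude of $np_0(y)$: for ``typical'' $y$ (with $np_0(y)$ large) the Gaussian approximation to $\Bin(n,p_0(y))$ controls the shift via a standard-deviation argument and the $\epsilon$ budget; for ``heavy'' $y$ (with $np_0(y)$ small) the mass $p_0(y)$ itself is small and the contribution is absorbed into the $\delta$ budget. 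This split is also what should let the argument degrade gracefully as $\epsilon$ grows toward $\ln n - \ln(d\log n)$, which is the tightest range covered by the theorem.
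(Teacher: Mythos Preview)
Your plan follows a genuinely different route from the paper's argument, and the route has a real obstacle at Step~1.

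\textbf{What the paper does.} The paper first invokes Lemma~4.1 of \cite{BalleBGN19} to reduce, without loss in privacy or MSE, to a protocol whose randomizer outputs a real in $[0,1]$ and whose analyzer simply sums the messages. Under i.i.d.\ uniform inputs it then computes directly that
\[
\sum_{y}\min\{\cD_0(y),\cD_1(y)\}\ \le\ \frac{8\alpha}{n},
\]
so small MSE forces $d_{\mathrm{TV}}(\cD_0,\cD_1)$ to be close to $1$. It then builds a distinguishing event $\{R(\bx)\subseteq S_0\}$ for the neighboring pair $(0,\dots,0)$ and $(1,0,\dots,0)$: its probability under the first input is at least $e^{-O(\alpha)}$, and its likelihood ratio is at least $\frac{n}{8\alpha}-1$. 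If $\alpha<f\log n$ these numbers beat $\delta=n^{-c}$ and $e^\epsilon\le n/(d\log n)$, contradicting $(\epsilon,\delta)$-DP. There is no Fisher information and no $\chi^2$ anywhere.

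\textbf{The gap in your Step 1.} The inequality $\chi^2(P_1\|P_0)\le Cn/\log(1/\delta)$ is not a consequence of shuffled $(\epsilon,\delta)$-DP. A concrete counterexample: take $P_0=\delta_0$ and $P_1=(1-\delta)\delta_0+\delta\,\delta_1$. This randomizer is $(\epsilon,\delta)$-shuffle-DP for any $\epsilon\ge 0$ (the only distinguishing event---seeing the symbol~$1$---has probability at most $\delta$), yet $\chi^2(P_1\|P_0)=\infty$ because $p_0(1)=0$. So the divergence you want to bound need not even be finite. Your ``handling heavy messages through the $\delta$ budget'' hints at a fix, but any such fix has to be carried through to Step~2, where the van~Trees machinery also needs absolute continuity and finite Fisher information; the proposal does not indicate how to make the truncated object interact correctly with Cram\'er--Rao.

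Even setting the support issue aside, your aggregation sketch does not close. The per-coordinate argument you describe shows, for each $y$, that $|p_1(y)-p_0(y)|/\sqrt{np_0(y)}$ cannot exceed roughly $\sqrt{\log(1/\delta)}/ \sqrt n$ standard deviations without violating privacy on the marginal $H_y$; this bounds each \emph{summand} $(p_1(y)-p_0(y))^2/p_0(y)$ by $O(n/\log(1/\delta))$, not the \emph{sum}. The identity $\sum_y p_0(y)=1$ does not help here because the per-$y$ bound you obtain is uniform, not weighted by $p_0(y)$. To control the sum you would need a joint privacy argument across coordinates, and the histogram counts are negatively correlated, so this is nontrivial.

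In short: the contrapositive direction (small error $\Rightarrow$ large $d_{\mathrm{TV}}$ $\Rightarrow$ DP violated by an explicit event), which is what the paper uses, is both simpler and sidesteps the finiteness and aggregation issues that block your $\chi^2$ route.
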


To prove Theorem~\ref{th:lb_binary_agg}, we will use the following result.
\begin{lemma}[\cite{BalleBGN19}, Lemma 4.1]\label{le:sum_analyzer_Balle_et_al}
Let $\cM = (\cR, \cA)$ be an $n$-party protocol for Binary Summation in the single-message shuffled model. Assume that the inputs $X_1, \dots, X_n$ to the $n$ parties are sampled i.i.d. from some distribution on $\{0,1\}$. Then, there is a protocol $\cM' = (\cR', \cA')$ in the single-message shuffled model s.t.:
\begin{enumerate}[nosep]
    \item Each message from the randomizer to the shuffler is a real number in the interval $[0,1]$ and the analyzer simply \emph{sums} its $n$ incoming messages. Namely, $\mathrm{Im}(\cR') \subseteq [0,1]$ and $\cA'(y_1,\dots,y_n) = \sum_{i=1}^n y_i$.
    \item The MSE of the protocol $\cM'$ is at most that of the protocol $\cM$ (with respect to the i.i.d. input distribution). 
    \item If the protocol $\cM$ is $(\epsilon, \delta)$-DP, then the protocol $\cM'$ is $(\epsilon, \delta)$-DP.
\end{enumerate}
\end{lemma}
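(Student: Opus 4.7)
The plan is to define $\cR'$ as a post-processing of $\cR$ by a specific scalar function, and to take $\cA'$ to be the sum. Let $p$ denote the Bernoulli parameter of the i.i.d.\ input distribution on $\{0,1\}$, and define the scalar function
\[
T(y) \;:=\; \Pr[X = 1 \mid \cR(X) = y] \;=\; \E[X \mid \cR(X) = y],
\]
where $X \sim \mathrm{Bern}(p)$ is fed to $\cR$. Set $\cR'(x) := T(\cR(x))$ (reusing $\cR$'s internal randomness and then post-processing) and $\cA'(y_1, \ldots, y_n) := \sum_{i} y_i$. Property~(1) is immediate, since $T(y)$ is a conditional probability of a $\{0,1\}$-valued random variable, so $T(y) \in [0,1]$, and $\cA'$ is a sum by construction. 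Property~(3) is also immediate: the coordinate-wise deterministic application of $T$ commutes with the shuffle, so the adversary's view under $\cM'$ is a deterministic post-processing of its view under $\cM$, and hence the $(\epsilon, \delta)$-DP guarantee is inherited (recalling that DP is a worst-case statement over input vectors, not an average one, so the fact that $T$ depends on the distributional parameter $p$ is immaterial).

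The heart of the argument is property~(2). Let $Y_i := \cR(X_i)$ be independent across $i$, let $S := \sum_{i} X_i$, and let $M$ denote the shuffled multiset $\{Y_1, \ldots, Y_n\}$. Since the analyzer sees only $M$, the random variable $\cA$ is a symmetric function of $(Y_1, \ldots, Y_n)$. By the $L^2$ orthogonality principle,
\[
\mathrm{MSE}(\cM) \;=\; \E\big[(\cA - S)^2\big] \;\ge\; \E\big[(\E[S \mid M] - S)^2\big].
\]
Because the inputs are i.i.d.\ and $\cR$ uses independent internal randomness per user, the pairs $(X_i, Y_i)$ are mutually independent across $i$; hence
\[
\E[S \mid Y_1, \ldots, Y_n] \;=\; \sum_{i=1}^{n} \E[X_i \mid Y_i] \;=\; \sum_{i=1}^{n} T(Y_i).
\]
The right-hand side is symmetric in $(Y_1, \ldots, Y_n)$, so it is a function of the multiset $M$ and therefore equals $\E[S \mid M]$. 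Plugging back in,
\[
\mathrm{MSE}(\cM) \;\ge\; \E\Big[\Big(\sum_{i} T(Y_i) - S\Big)^2\Big] \;=\; \mathrm{MSE}(\cM'),
\]
establishing property~(2).

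The step I expect to be the main obstacle is the identification $\E[S \mid M] = \sum_{i} T(Y_i)$. It rests on two subtleties that deserve a careful sentence each: first, that $\E[S \mid Y_1, \ldots, Y_n]$ decomposes additively into per-user terms, which requires the full mutual independence of the pairs $(X_i, Y_i)$ (not just pairwise independence of the $Y_i$'s); and second, that a symmetric function of the ordered tuple $(Y_1, \ldots, Y_n)$ is actually a function of the multiset $M$, so conditioning on the multiset and on the tuple yields the same value for this symmetric target. Once these two points are handled the remaining inequalities are routine, and the three conclusions of the lemma follow.
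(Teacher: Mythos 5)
Your proposal is correct, and it is essentially the argument behind the cited result: the paper itself imports this lemma from \cite{BalleBGN19} without proof, and the standard proof there is exactly this Bayesian post-processing, replacing each message $y$ by the posterior mean $\E[X \mid \cR(X)=y]$ and invoking (i) closure of DP under per-message deterministic post-processing commuting with the shuffle and (ii) the fact that the conditional expectation $\E[S\mid M]$, which decomposes as $\sum_i T(Y_i)$ by independence and symmetry, minimizes MSE among estimators measurable with respect to the shuffler's output. The only point worth making explicit is that $T$ is well defined via elementary conditioning because the message space is taken to be discrete (the paper assumes messages are bit strings), and that $T$ depending on the input distribution is permissible since item~2 is only claimed with respect to that distribution while DP is worst-case; you address both.
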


We are now ready to prove Theorem~\ref{th:lb_binary_agg}.

\begin{proof}[Proof of Theorem~\ref{th:lb_binary_agg}]
Let $\cM = (\cR, \cA)$ be an $(\epsilon, \delta)$-DP protocol for Binary Summation in the single-message shuffled protocol with expected MSE $\alpha$. Consider the case where the inputs of the $n$ users are drawn i.i.d. from the uniform distribution on $\{0,1\}$. By Lemma~\ref{le:sum_analyzer_Balle_et_al}, there is an $(\epsilon, \delta)$-DP protocol $\cM' = (\cR', \cA')$ in the single-message shuffled model where the randomizer outputs a real number in $[0,1]$, the analyzer sums all $n$ incoming messages, and where the MSE of $\cM'$ is at most that of $\cM$. For each $b \in \{0,1\}$, let $\cD_b$ be the probability distribution supported over a subset $T_b$ of the real line and corresponding to the output of the randomizer $\cR'$ on input $b$. Let $T := T_0 \cup T_1$. We next lower bound the total variation distance between $\cD_0$ and $\cD_1$ in terms of~$\alpha$. Specifically, we show that $d_{TV}(\cD_0, \cD_1) \geq  1-\frac{8\alpha}{n}$. Since\footnote{We assume that the messages sent are encoded as bit strings, meaning that $T$ is countable and $\sum_{x \in T} \min\{\cD_0(x), \cD_1(x)\}$ is well-defined.} $d_{TV}(\cD_0, \cD_1) := 1 - \sum_{x \in T} \min\{\cD_0(x), \cD_1(x)\}$, this is equivalent to proving that
\begin{equation}\label{eq:sum_min_ub}
    \sum_{x \in T} \min\{\cD_0(x), \cD_1(x)\} \le \frac{8\alpha}{n}.
\end{equation}
To prove this inequality, note that the MSE of the protocol $\cM'$ satisfies
\begin{align*}
\alpha^2 &= \E\bigg[\bigg(\sum_{i \in [n]} (x_i-y_i)\bigg)^2\bigg]\\ 
&= \E\bigg[\sum_{i \in [n]} \sum_{j \in [n]} (x_i-y_i)(x_j-y_j)\bigg]\\ 
&= \sum_{i \in \in [n]} \E[(x_i-y_i)^2] + \sum_{i \neq j \in [n]} \E[(x_i-y_i)(x_j-y_j)]\\ 
&\stackrel{(*)}{=} n \cdot \E[(x-y)^2] +(n^2-n) \cdot \E[x-y]^2\\ 
&\geq n \cdot \E[(x-y)^2]\\ 
&= \frac{n}{2} \cdot (\E[y^2 ~\mid~ x = 0] + \E[(1-y)^2 ~\mid~ x = 1])\\ 
&=  \frac{n}{2} \cdot \Big( \sum_{y \in T_0} \cD_0(y) y^2 + \sum_{y \in T_1} \cD_1(y) (1-y)^2 \Big) \\ 
&=  \frac{n}{2} \cdot \sum_{y \in T} ( \cD_0(y) y^2 + \cD_1(y) (1-y)^2 )\\  
&\geq \frac{n}{8} \cdot \sum_{y \in T} \min\{\cD_0(y), \cD_1(y)\},
\end{align*}
where the equality (*) uses the i.i.d. property of the user inputs and the local randomizers. Inequality~(\ref{eq:sum_min_ub}) now follows.

Let $\delta = 1/n^c$ for any positive constant $c$, and let $\epsilon$ be any value satisfying
$\epsilon \le \ln{n} - \ln(d \cdot \log{n})$ for some positive constant $d$. We next show that the protocol $\cM'$ violates $(\epsilon, \delta)$-DP whenever $\alpha < f \cdot \log{n}$, where $f$ is a positive constant that depends on $c$ and $d$. To do so, we define $S_0$ to be the set of all values of $y \in T_0$ for which $\cD_0(y) > \cD_1(y)$ and similarly define $S_1$ as the set of all values of $y \in T_1$ for which $\cD_1(y) > \cD_0(y)$. By definition, $S_0$ and $S_1$ are disjoint. Let $\kappa := (8\alpha)/n$. We next argue that $\cD_0(S_0) \geq 1 - \kappa$. By symmetry, it would follow that $\cD_1(S_1) \geq 1 - \kappa$. Assume for the sake of contradiction that $\cD_0(S_0) < 1- \kappa$. Then, $\cD_0(T \setminus S_0) > \kappa$. In this case, we have that
\begin{align*}
\sum_{y \in T} \min\{\cD_0(y), \cD_1(y)\} &\geq \sum_{y \in (T \setminus S_0)} \min\{\cD_0(y), \cD_1(y)\}\\ 
&\geq \sum_{y \in (T \setminus S_0)} \cD_0(y) \\
&= \cD_0(T \setminus S_0)\\ 
&> \kappa \\
&= \frac{8\alpha}{n},
\end{align*}
which would contradict~\eqref{eq:sum_min_ub}.

To show that the protocol $\cM'$ violates $(\epsilon, \delta)$-DP, we consider two neighboring length-$n$ binary input sequences: $\bx = (0,0,\dots,0)$ and $\bx' = (1,0,\dots,0)$. 
To show that $\cM’$ violates $(\epsilon, \delta)$-DP, it suffices to show that:
$$ \Pr[\cR(\bx) \subseteq S_0] > e^{\epsilon}  \cdot \Pr[\cR(\bx') \subseteq S_0] + \delta.$$
We first note that:
\begin{align*}
\Pr[\cR(\bx) \subseteq S_0] & = \cD_0(S_0)^n \\
& \geq (1-\kappa)^n \\
&= \bigg(1-\frac{8 \alpha}{n}\bigg)^n \\
& \geq e^{-8 \alpha -O(\frac{\alpha^2}{n})}.
\end{align*}
On the other hand, we have that:
\begin{align*}
\Pr[\cR(\bx') \subseteq S_0] & = \cD_0(S_0)^{n-1} \cdot \cD_1(S_0) \\
& \le \cD_0(S_0)^{n-1} \cdot \cD_1(T \setminus S_1) \\
& \le \cD_0(S_0)^{n-1} \cdot \kappa.
\end{align*}
Thus,
\begin{align*}
\frac{\Pr[\cR(\bx) \subseteq S_0]}{\Pr[\cR(\bx') \subseteq S_0]} & \geq \frac{\cD_0(S_0)^n} {\cD_0(S_0)^{n-1} \kappa} \\
&= \frac{\cD_0(S_0)}{\kappa} \\
& \geq \frac{1-\kappa}{\kappa} \\
&= \frac{n}{8 \alpha} - 1.
\end{align*}
So it suffices to choose $\alpha$ such that $e^{-8 \cdot \alpha -O(\frac{\alpha^2}{n})} > 2 \delta$ and $\frac{n}{8 \alpha} - 1 > 2 e^{\epsilon}$.
For $\delta = 1/n^c$, where $c$ is a positive constant, and for $\epsilon$ satisfying
$\epsilon \le \ln{n} - \ln(d \log{n})$, where $d$ is a positive constant, we can satisfy the last two inequalities and get a violation of $(\epsilon, \delta)$-DP as long as $\alpha < f \log{n}$ where $f$ is a positive constant depending on $c$ and $d$.
\end{proof}

\section{Experiments for Histogram}
\label{app:exp-histogram}

We next evaluate our algorithms for histogram. Once again, we consider the Correlated Distributed Histogram Mechanism\footnote{This is simply Algorithms~\ref{alg:randomizer_hist},~\ref{alg:analyzer_hist} with $\cR, \cA$ being the randomizers and analyzers of the Correlated Distributed Mechanism.} and the Poisson Histogram Mechanism\footnote{This is simply Algorithms~\ref{alg:randomizer_hist},~\ref{alg:analyzer_hist} with $\cR, \cA$ being the randomizers and analyzers of the Poisson Distributed Mechanism.}.

Apart from our algorithms, we also consider three algorithms, each of which can be considered a generalization of the Randomized Response mechanism in the binary case. Each of the three algorithms is parameterized by a ``noise parameter'' $p$. Below we summarize how they work.
\begin{itemize}
\item \textbf{$B$-Randomized Response ($B$-RR).} In $B$-RR, the randomizer outputs the input bin $x$ with probability $1 - p$. With the remaining probability $p$, it simply outputs a random bin from $1, \dots, B$.
\item \textbf{RAPPOR.} In RAPPOR~\cite{erlingsson2014rappor}, the randomizer first encodes the input $x$ using its one-hot encoding $s$, which is simply a $B$-bit string whose $x$th coordinate is the only one with value 1. Then, the randomizer flips each bit of $s$ independently at random with probability $p$, and output the resulting string $\tilde{s}$.
\item \textbf{Fragmented RAPPOR.} While both $B$-RR and RAPPOR are single-message, Fragmented RAPPOR~\cite{erlingsson2020encode}\footnote{We remark what we are using here is what~\cite{erlingsson2020encode} called ``Attributed-fragmented RAPPOR''. They also introduce another fragmentation technique (called ``report fragmentation''). To the best of our knowledge, however, this other fragmentation technique only helps in terms of privacy in their model when there are multiple shufflers, and hence is irrelevant to our setting.} can be thought of as the multiple-message version of RAPPOR. The randomizer works similarly to RAPPOR, except that, instead of outputting the final string $\tilde{s}$ as a single message, it outputs $B$ messages $(1, \tilde{s}_1), \dots, (B, \tilde{s}_B)$. 

While the naive imeplementation of Fragmented RAPPOR results in the randomizer always outputting $B$ messages,~\cite{erlingsson2020encode} noted that we can instead sends only the coordinates of $\tilde{s}$ that are 1. In this case, the expected number of messages sent per user becomes $(1 - p) + p
(B - 1) = 1 + p(B - 2)$. In other words, the expected number of messages overhead is $p(B - 2)$.

A slightly different algorithm was independently proposed in~\cite{anon-power}. In our experiments, the algorithm of~\cite{anon-power} and Fragmented RAPPOR produce essentially the same results (less than 1\% difference in errors and numbers of messages overhead) and hence we do not specifically discuss the former further here.
\end{itemize}

In all three algorithms, the analyzer just output the unbiased estimator for the count of each bucket.

We remark that, to the best of our knowledge, there is no known efficient algorithm that can accurately compute parameters for the three algorithms above. Due to this, we use ``optimistic'' estimates for the parameters, which means that the errors/messages overhead seen in plots before for them could be better than the true numbers. We stress that, while we use ``optimistic'' estimates for these three algorithms, we use very accurate, ``pessimistic'' estimates for our own algorithms. The detailed explanation on how these parameters are computed is given in Appendix~\ref{app:param-computation}.

Unlike our algorithms, the $\ell_{\infty}$ error of Randomized Respose, RAPPOR and Fragmented RAPPOR are data-dependent. For the experiments, we use two datasets from IPUMS which are available online~\cite{sobek2010integrated}. The remainder of this section is divided based on the two datasets, which will be explained in more details below.

For each selection of parameters $\eps, \delta, n, B$ and a dataset specified below, we run 100 repetitions of each algorithm. We then record the RMSE and the average $\ell_{\infty}$ errors over these repetitions. In all experiments we run below, the errors (both RMSE and $\ell_{\infty}$) of RAPPOR is significantly larger than other algorithms; specifically, RAPPOR's errors are always larger than $2.5\times$ that of RR which is the second-worst algorithm in terms of errors. As a result, we do not include RAPPOR in the plots for readability.

We also remark that, in all the plots below, it will be the case that Fragmented RAPPOR and Poisson are essentially the same, both in terms of errors and expected number of messages overhead, for any reasonably small values of $\delta$. (This is because of the same reason as the binary case where Poisson and binary RR roughly coincide.) Due to this, we will only refer to the Poisson Mechanism in the discussions below; it should be understood that any statement that applies to the Poisson Mechanism also essentially applies to Fragmented RAPPOR. For large values of $\delta$, it turns out Poisson Mechanism is better than Fragmented RAPPOR, both in terms of errors and number of messages overhead (see Figures~\ref{fig:err-city} and~\ref{fig:msg-city}).

\subsection{IPUMS City Dataset.}
The first dataset we consider for histogram computation is the city distribution for the US population in the $1940$ census. This dataset was used for evaluating private histogram algorithms in the shuffled model by \cite{wang2019practical}. As in the previous work, we discard data points with an unidentified city. Doing so, we get $n = 60,313,201$ reports and $B = 915$ cities.

Similar to our experiments in binary summation, we study the effect of varying $\eps, \delta$ on the errors and the number of messages overhead per user. 

\paragraph{Errors.}
The plots for $\ell_\infty$ errors and RMSEs as we vary $\eps$ and $\delta$ are included in Figure~\ref{fig:err-city}. The general trends of the errors for Central, Correlated Distributed and Poisson are very similar to that of the RMSEs in our binary summation experiments (Figure~\ref{fig:err-bin}), with the $B$-RR algorithm always incurring noticably larger errors. One interesting difference between the histogram case here and the binary case is that the gaps between Central/Correlated Distributed and Poisson are smaller for histograms. In fact, when $\delta$ is large (e.g. $\delta \geq 10^{-4}$ for $\eps = 1$), the Poisson Mechanism even incurs slightly lower $\ell_\infty$ errors compared to the Correlated Distributed Mechanism. (See the top-left plot in Figure~\ref{fig:err-bin}.) However, once $\delta$ becomes sufficiently small, the expected behaviour is observed (i.e. Poisson incurs noticeably more error compared to Correlated Distributed).

\begin{figure*}[h!]
\centering
\includegraphics[width=0.46\textwidth]{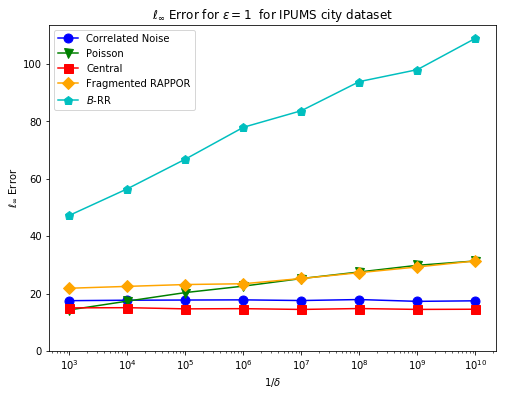}
\includegraphics[width=0.46\textwidth]{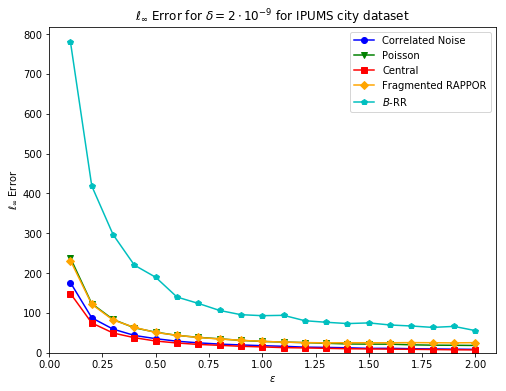}
\includegraphics[width=0.46\textwidth]{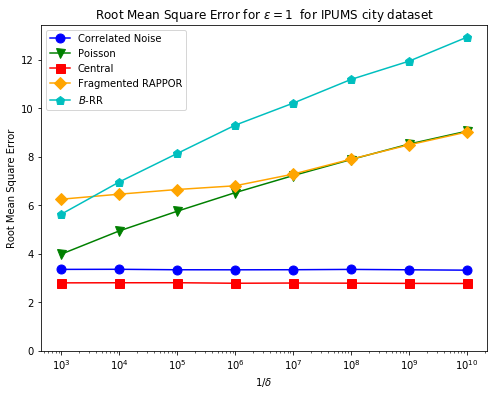}
\includegraphics[width=0.46\textwidth]{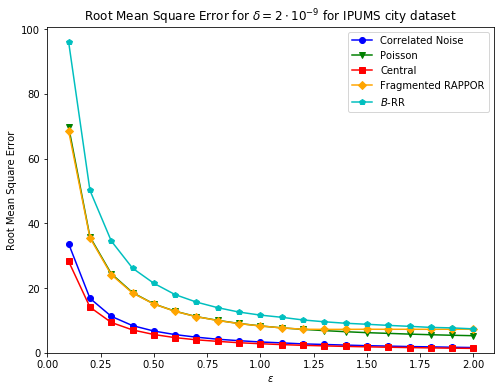}
\caption{RMSEs and $\ell_{\infty}$ errors of different mechanisms on IPUMS city dataset.}
\label{fig:err-city}
\end{figure*}

\paragraph{Communication Complexity.}
We next present our plots for the expected number of additional messages sent for each user in Figure~\ref{fig:msg-city}. The general trends are exactly the same as those shown in Figure~\ref{fig:msg-bin} for binary summation. We also note here that, even in the rather extreme case where $\eps = 0.1$ and $\delta = 2 \cdot 10^{-9}$, the expected number of messages for Correlated Distributed is only 0.181 and for Poisson is only 0.074, both of which are quite small. For moderate value of $\eps = 1$, the numbers dropped to 0.021 and 0.010 respectively.

\begin{figure*}[h!]
\centering
\includegraphics[width=0.46\textwidth]{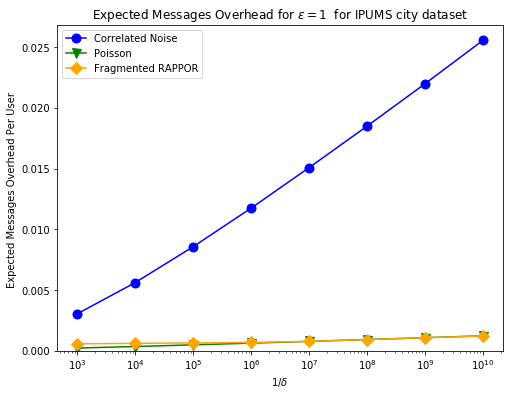}
\includegraphics[width=0.46\textwidth]{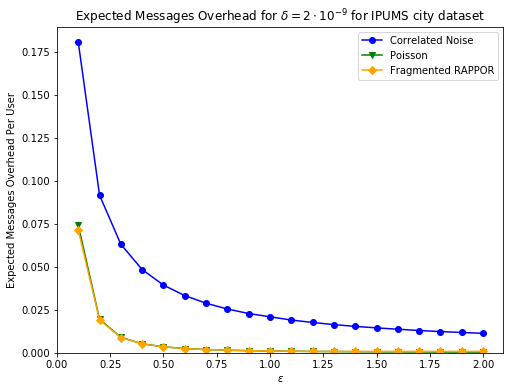}
\caption{Expected additional number of messages sent by each user for IPUMS city dataset}
\label{fig:msg-city}
\end{figure*}

\subsection{IPUMS House Value Dataset.}
Since in the previous dataset the buckets correspond to a categorical feature (namely, the city), it cannot be naturally used to assess the performance of algorithms in terms of a varying number $B$ of buckets. To do so, we also consider the house value distribution of the US population in the 1940 census. After discarding the data points with unavailable house values, we get $n = 14,958,304$ reports with house values between $1$ and $9,750,975$. We then divide the range $[0, 100,000)$ into buckets of equal width and use the width as a knob to vary the number $B$ of buckets in our experiments. More specifically, for a given $B$, we divide the interval $[0, 100,000)$ equally into $B - 1$ buckets, and we put all reports with values at least $100,000$ into the last bucket. (There are only 6009 such reports.)

\paragraph{Errors.} The errors as $B$ varies from 200 to 5000 are shown in Figure~\ref{fig:err-house}. For $\ell_{\infty}$ error, theory predicts that the error for RR increases as $\Omega(B^{1/4})$ in the regime of parameters we consider (see e.g.~\cite{anon-power}) whereas the errors of Central, Correlated Mechanism, and Poisson only increased by $O(\log B)$. This is clearly reflected in the left plot in Figure~\ref{fig:err-house}. On the other hand, the RMSE remains essentially constants as even $B$ varies; see the right plot in Figure~\ref{fig:err-house}.

\paragraph{Communication Complexity.} The number of communication overhead scales linearly in terms of $B$, as shown in Figure~\ref{fig:msg-house}.

\begin{figure*}[h!]
\centering
\includegraphics[width=0.46\textwidth]{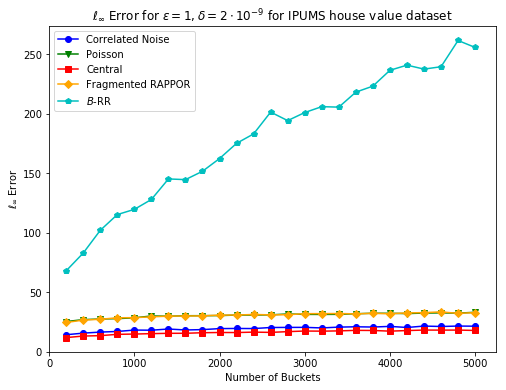}
\includegraphics[width=0.46\textwidth]{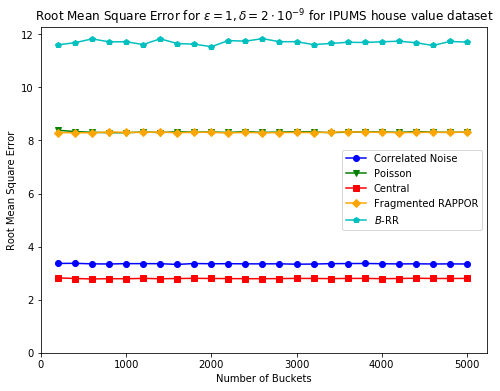}
\caption{RMSEs and $\ell_{\infty}$ errors of different mechanisms on IPUMS house value dataset.}
\label{fig:err-house}
\end{figure*}

\begin{figure*}[h!]
\centering
\includegraphics[width=0.46\textwidth]{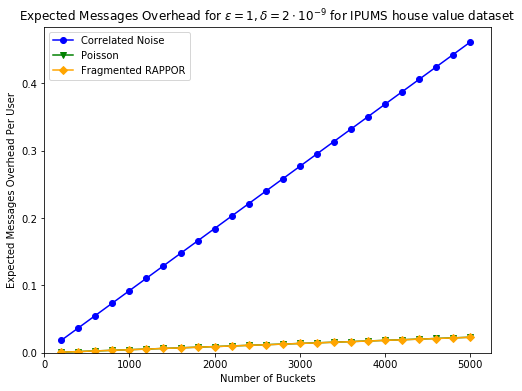}
\caption{Expected additional number of messages sent by each user for IPUMS house value dataset}
\label{fig:msg-house}
\end{figure*}

\section{Parameters Computation for Binary Summation Protocols}
\label{app:bin-param-computation}

\paragraph{Poisson Mechanism.} It is simple to see that the $\Poi(\lambda)$ mechanism becomes more private as $\lambda$ increases\footnote{This just means that if the mechanism was $(\eps, \delta)$-DP, it remains so after $\lambda$ is increased.}. We use binary search to find the smallest $\lambda$ such that the $\Poi(\lambda)$ mechanism is $(\eps, \delta)$-DP. This check can be done efficiently by checking the condition in Lemma~\ref{lem:sum-dp-hockey-stick}. Note that, by selecting the smallest possible $\lambda$, we are minimizing both the MSE and the expected number of messages sent, since $\E[\Poi(\lambda)] = \Var(\Poi(\lambda)) = \lambda$.


\paragraph{Correlated Distributed Noise Mechanism.}
We pick $\cD^1 = \cD^2 = \NB(1, e^{-\eps_1})$ similar to the proof of Theorem~\ref{th:bin_agg_nearly_one_bit}. Here we choose $\eps_1$ such that the root MSE (RMSE) of the protocol is 20\% more than that of the (central) $\DLap(\eps)$ Mechanism.  Once we pick this $\eps_1$ (and hence $\cD^1, \cD^2$), we attempt to find $r, p$ such that, when we set $\cD^3 = \NB(r, p)$, the $(\cD^1, \cD^2, \cD^3)$-Correlated Distributed Mechanism is $(\eps, \delta)$-DP and that it minimizes the expected number of messages sent (or equivalently minimizes $\E[\NB(r, p)] = \frac{rp}{1 - p}$). Now, for a specific $r$, finding the smallest $p = p^*(r)$ such that the Correlated Distributed Mechanism is $(\eps, \delta)$-DP is simple and, similar to Poisson, can be done via a binary search over $p$; note that here we have to compute the expression~\eqref{eq:divergence-correlated-noise}, instead of the expression in Lemma~\ref{lem:sum-dp-hockey-stick}) used for Poisson Mechanism.

On the other hand, we do not know how to efficiently compute $\min_{r \in \R^+} \frac{rp^*(r)}{1 - p^*(r)}$. Hence, we resort to generic optimizers from the \texttt{scipy} package to attempt to find this.  Since we are not guaranteed to find the optimum, it is possible that the optimum number of messages can be even smaller than shown below.

\paragraph{Randomized Response.} Similar to Poisson Mechanism, we use binary search on $p$; for a fixed $p$, we can efficiently check whether the protocol is $(\eps, \delta)$-DP using Lemma~\ref{thm:dp-hockey-stick}.

\section{Parameters Computation for Histogram Protocols}
\label{app:param-computation}

In this section, we detail how the parameters are calculated for the histogram experiments in Appendix~\ref{app:exp-histogram}.

\subsection{Correlated Distributed Histogram Mechanism}

From the proof of Corollary~\ref{cor:histograms_single_bit_restated}, it suffices to select $\cD^1, \cD^2, \cD^3$ so that the $(\cD^1, \cD^2, \cD^3)$-Correlated Distributed Mechanism is $(\eps/2, \delta/2)$-DP.
Similar to the binary summation case, we start by picking $\cD^1 = \cD^2 = \NB(1, e^{-\eps_1})$ where $\eps_1$ is selected so that the RMSE of the protocol is 20\% more than that of $\DLap(\eps/2)$. Once again, we use a similar approach as before to attempt to find $r, p$ such that, when setting $\cD^3 = \NB(r, p)$, the $(\cD^1, \cD^2, \cD^3)$-Correlated Distributed Mechanism is $(\eps/2, \delta/2)$-DP and the expected number of additional messages send is as small as possible.

\subsection{Poisson Histogram Mechanism}

To accurately compute the parameter $\lambda$ needed for the $\Poi(\lambda)$ Histogram Mechanism to be $(\eps, \delta)$-DP. We first prove the following exact characterization of this condition; note that here we use $\cD \otimes \cD'$ to denote the product distribution between $\cD$ and $\cD'$.

\begin{lemma} \label{lem:poisson-hist-dp}
For any $\eps, \delta, \lambda \geq 0$, the $\Poi(\lambda)$ Histogram Mechanism is $(\eps, \delta)$-DP in the shuffled model if and only if
\begin{align}
d_{\eps}(\Poi(\lambda) \otimes \Poi(\lambda) \| (1 + \Poi(\lambda)) \otimes (-1 + \Poi(\lambda))) \leq \delta
\label{eq:dp-poisson-histogram}
\end{align}
\end{lemma}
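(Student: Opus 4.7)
The plan is to apply Lemma~\ref{thm:dp-hockey-stick}, which reduces $(\eps, \delta)$-DP of the mechanism to the condition $\max_{\bx \sim \bx'} d_\eps(\cM(\bx) \| \cM(\bx')) \leq \delta$, and then to show that this maximum equals the left-hand side of \eqref{eq:dp-poisson-histogram}.

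First I would describe the analyzer's view of the $\Poi(\lambda)$ Histogram Mechanism explicitly. In this mechanism, each user $i$ samples, independently for each bucket $j \in [B]$, a random variable $Z_j^{(i)} \sim \Poi(\lambda/n)$, and sends $\mathbf{1}[x_i = j] + Z_j^{(i)}$ messages tagged with the bucket index $j$. After shuffling, the resulting multiset is equivalent to the vector of per-bucket counts. By infinite divisibility of Poisson (so $\sum_i Z_j^{(i)} \sim \Poi(\lambda)$) and independence of the noise samples across buckets, the analyzer's view on input $\bx$ is distributed as $(C_1(\bx) + N_1, \ldots, C_B(\bx) + N_B)$, where $C_j(\bx)$ is the true count of bucket $j$ under $\bx$ and $N_1, \ldots, N_B \sim \Poi(\lambda)$ are independent.

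Next, take any pair $\bx \sim \bx'$, where user $k$'s input changes from $a$ to $b$. If $a = b$, the two output distributions coincide and the divergence is $0$. Otherwise, $a \neq b$: we have $C_a(\bx') = C_a(\bx) - 1$ and $C_b(\bx') = C_b(\bx) + 1$, while $C_j(\bx') = C_j(\bx)$ for every $j \notin \{a, b\}$. Hence the marginals on the $B - 2$ unaffected coordinates coincide, contributing a common product factor $Q$ to both joint distributions. Since the hockey stick divergence factorizes over a shared independent component, i.e. $d_\eps(P \otimes Q \| P' \otimes Q) = d_\eps(P \| P')$ (by interchanging the sum over the $Q$-coordinate in Definition~\ref{def:hockey_stick}), the problem reduces to bounding the divergence on coordinates $(a, b)$ alone.

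After shifting both distributions by the same constant vector $(-C_a(\bx), -C_b(\bx))$, which preserves the hockey stick divergence, the two distributions on coordinates $(a, b)$ become $\Poi(\lambda) \otimes \Poi(\lambda)$ and $(-1 + \Poi(\lambda)) \otimes (1 + \Poi(\lambda))$ respectively. Swapping the two (independent) coordinates---which again preserves the divergence---yields $(1 + \Poi(\lambda)) \otimes (-1 + \Poi(\lambda))$, matching \eqref{eq:dp-poisson-histogram}. Since this common divergence is attained for every neighboring pair with $a \neq b$ (independent of the specific choice of $a, b$) and the $a = b$ case gives $0$, the maximum over $\bx \sim \bx'$ equals exactly the left-hand side of \eqref{eq:dp-poisson-histogram}, giving both directions of the iff. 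The only subtlety worth flagging is the factorization identity across the unaffected buckets, which hinges on the per-bucket independence of the noise, a feature specific to Poisson noise sampled independently for each coordinate.
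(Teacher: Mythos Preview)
Your proposal is correct and follows essentially the same approach as the paper: both reduce $(\eps,\delta)$-DP to the hockey-stick divergence via Lemma~\ref{thm:dp-hockey-stick}, identify the analyzer's view with the true histogram plus independent $\Poi(\lambda)$ noise per bucket, and then use independence across coordinates to factor out the $B-2$ unaffected buckets, leaving the two-coordinate divergence in~\eqref{eq:dp-poisson-histogram}. The paper carries out the factorization by an explicit sum over $\by \in \Z^B$, whereas you invoke the identity $d_\eps(P \otimes Q \| P' \otimes Q) = d_\eps(P \| P')$ together with shift- and coordinate-swap invariance; these are just two presentations of the same computation.
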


\begin{proof}
Let $f_{hist}: [B]^n \to (\N \cup \{0\})^B$ denote the function that computes a histogram, i.e., $f_{hist}(x_1, \dots, x_n) = (\sum_{j \in [n]} \bone[x_j = i])_{i \in [B]}$. In a similar vein as Observation~\ref{obs:dp-simple}, it is simple to see that the view (after shuffling) of the $\Poi(\lambda)$ Histogram Mechanism is exactly the same as if we apply the ($B$-dimensional) central $\Poi(\lambda)$ Mechanism to $f_{hist}$. For brevity, let $\cM$ denote the latter mechanism.

To calculate the DP parameters for the mechanism, consider two neighboring data sets $\bx = (x_1, \dots, x_n)$ and $\bx' = (x_1, \dots, x_{n - 1}, x'_n)$. Due to symmetry, we may assume that $x_n = 1$ and $x'_n = 2$. For notational convenience, let $\ba$ denote the $f_{hist}(\bx)$, i.e. the histogram constructed from input $\bx$. We have 
\begin{align*}
&d_{\eps}(\cM(\bx) \| \cM(\bx')) \\
&= \sum_{\by \in \Z^B} \left[\Pr[\cM(\bx) = \ba + \by] - e^\eps \Pr[\cM(\bx') = \ba + \by]\right]_+ \\
&= \sum_{\by \in \Z^B} [\Pr[Y_1 = y_1, \dots, Y_{B} = y_{B}] - \\ 
&\quad e^\eps \Pr[Y_1 = y_1 + 1, Y_2 = y_2 - 1, Y_3 = y_3, \dots, Y_B = y_B]]_+.
\end{align*}
Now, since $Y_1, \dots, Y_B$ are independent $\Poi(\lambda)$ random variables, we can further rewrite the above as
\begin{align*}
&\sum_{\by \in \Z^B}\Pr[Y_3 = y_3, \dots, Y_B = y_B] \cdot [\Pr[Y_1 = y_1, Y_2 = y_2] \\
&\qquad  \qquad \qquad \qquad - e^\eps \Pr[Y_1 = y_1 + 1, Y_2 = y_2 - 1]]_+  \\
&= \sum_{y_1, y_2 \in \Z} [\Pr[Y_1 = y_1, Y_2 = y_2] \\
& \qquad - e^\eps \Pr[Y_1 = y_1 + 1, Y_2 = y_2 - 1]]_+  \\
&= d_{\eps}\left(\Poi(\lambda) \otimes \Poi(\lambda) \| (1 + \Poi(\lambda)) \otimes (-1 + \Poi(\lambda))\right).
\end{align*}
From Lemma~\ref{thm:dp-hockey-stick}, we have completed our proof.
\end{proof}

With the above lemma in mind, we simply binary search on $\lambda$ with the check being condition~\eqref{eq:dp-poisson-histogram}. Once again, we remark that the left hand side of~\eqref{eq:dp-poisson-histogram} can be computed to arbitrary accuracy quite efficiently.

\subsection{Fragmented RAPPOR}
\label{subsec:fragmented-rappor-param-computation}

Unlike the parameters for our own protocols, we will use ``optimistic'' parameters for the remaining protocols, which means that the errors and expected numbers of messages overhead shown in our plots might be smaller than the actual values. For Fragmented RAPPOR, it is not hard to see that the algorithm becomes more private as $p$ increases for any $p \in [0, 1/2]$. Hence, we may use binary search to find the noise parameter~$p$. To do so, we only need a subroutine that, for a given $p$, decides whether the mechanism is $(\eps, \delta)$-DP in the shuffled model. Our algorithm will be ``optimistic''. In the sense that, if the algorithm says NO, then the mechanism is not $(\eps, \delta)$-DP. However, if our algorithm says YES, then the mechanism may still \emph{not} be $(\eps, \delta)$-DP.

The checking task above further reduce to the following: given $p, \eps$, find a lower bound on $\delta$. Specifically, we can prove a lemma similar to Lemma~\ref{lem:poisson-hist-dp}; the main difference is that the implication is one-way instead of two-way (``if and only if'') as in Lemma~\ref{lem:poisson-hist-dp}. However, this weaker implication is already sufficient to give a lower bound on $\delta$.

\begin{lemma}
For any $p \in (0, 1)$ and any positive integer $n$, let $\cD_{n, p}$ denote the distribution $\Ber(1 - p) + \Bin(n - 1, p)$, where $\Ber(1 - p)$ denote the Bernoulli distribution with success probability $1 - p$. For any $\eps, \delta > 0$, if Fragmented RAPPOR is $(\eps, \delta)$-DP in the shuffled model for $n$ users and $B \geq 3$ buckets, then
\begin{align} \label{eq:dp-fragmented-rappor}
d_{\eps}(\cD_{n, p} \otimes \Bin(n, p) \| \Bin(n, p) \otimes \cD_{n, p}) \leq \delta.
\end{align}
\end{lemma}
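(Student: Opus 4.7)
The plan is to analyze exactly what the shuffler reveals in Fragmented RAPPOR and then reduce the DP condition to a statement about a bivariate distribution. Since each of the $n$ users sends exactly one message $(i, \tilde s_j^{(i)})$ for every bucket $i \in [B]$, the shuffled output is equivalent to the vector $(N_1, \dots, N_B)$, where $N_i$ denotes the total number of messages of the form $(i, 1)$. Crucially, RAPPOR flips each coordinate of the one-hot encoding independently, so conditioned on the dataset the $N_i$'s are mutually independent, with $N_i \sim \Bin(a_i, 1-p) + \Bin(n - a_i, p)$, where $a_i$ is the number of users whose input is $i$.

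The next step is to choose a specifically crafted pair of neighboring datasets that isolates exactly two coordinates. Using the assumption $B \geq 3$, I would let $\bx$ be the dataset in which $x_n = 1$ and $x_j = 3$ for all $j < n$, and $\bx'$ the dataset in which $x'_n = 2$ and $x'_j = 3$ for $j < n$. Then $a_1 = 1, a_2 = 0$ under $\bx$ and $a_1 = 0, a_2 = 1$ under $\bx'$, while $a_i$ agrees on both datasets for $i \geq 3$. Plugging into the formula above gives $N_1 \sim \cD_{n, p}$ and $N_2 \sim \Bin(n, p)$ under $\bx$, and the symmetric pairing under $\bx'$; moreover $(N_3, \dots, N_B)$ has the same distribution under both, and is independent of $(N_1, N_2)$.

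Finally, I would conclude using the product-factorization identity $d_{\eps}(\cD \otimes \cE \| \cD' \otimes \cE) = d_{\eps}(\cD \| \cD')$, which follows directly from Definition~\ref{def:hockey_stick} by summing out the common factor $\cE$. Combining this with Lemma~\ref{thm:dp-hockey-stick}, the $(\eps, \delta)$-DP assumption for the pair $(\bx, \bx')$ implies
\[
d_{\eps}\bigl(\cD_{n, p} \otimes \Bin(n, p) \,\|\, \Bin(n, p) \otimes \cD_{n, p}\bigr) \leq \delta,
\]
as claimed. There is no genuine obstacle here: the whole argument rests on correctly identifying that the shuffled view collapses to the coordinate-wise counts $(N_i)$ and that these counts are independent across buckets, after which choosing $(\bx, \bx')$ to concentrate the differing user's contribution in just two buckets makes the product structure manifest.
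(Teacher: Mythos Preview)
Your proposal is correct and is essentially the paper's own argument: choose neighboring datasets that differ only in the last user's bucket (buckets $1$ versus $2$), with all other users in a fixed third bucket, identify the shuffled view with the independent counts $(N_1,\dots,N_B)$, and factor out the common coordinates to reduce to the bivariate hockey-stick divergence. The paper uses $\bx = (B,\dots,B,1)$ and $\bx' = (B,\dots,B,2)$ rather than your filler bucket $3$, but this is immaterial since any bucket distinct from $1$ and $2$ works (which is exactly where the hypothesis $B \geq 3$ enters).
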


The proof is very similar to Lemma~\ref{lem:poisson-hist-dp} with $\bx = (B, \dots, B, 1)$ and $\bx' = (B, \cdots, B, 2)$, and when we restrict to only the first two coordinates. We do not repeat the full argument here. We only note that the difference in quantifier comes because here we choose $\bx, \bx'$ ourselves, whereas the previous proof of Lemma~\ref{lem:poisson-hist-dp} works for any neighboring $\bx, \bx'$.


We note that the left hand side in~\eqref{eq:dp-fragmented-rappor} can be computed in $O(n^2)$ by writing it as
\begin{align} \label{eq:dp-fragmented-rappor-expanded}
&d_{\eps}(\cD_{n, p} \otimes \Bin(n, p) \| \Bin(n, p) \otimes \cD_{n, p}) \\
&= \sum_{i=0}^n \sum_{j=0}^n [\cD_{n, p}(i) \Bin(j; n, p) - e^{\eps} \cdot \Bin(i; n, p) \cD_{n, p}(j)]_+,
\end{align}
where we use $\Bin(i; n, p)$ to denote the probability mass of $\Bin(n, p)$ at $i$. We can now compute $d_{\eps}(\cD_{n, p} \otimes \Bin(n, p) \| \Bin(n, p) \otimes \cD_{n, p})$ in $O(n^2)$ time, by enumerating $i, j \in \{0, \dots, n\}$ and compute the inner term. (Note that we can precompute factorials so that computing each $\Bin(\cdot ; n, p), \cD_{n, p}(\cdot)$ takes only $O(1)$ time.)

In our settings of parameters (e.g. where $n \geq 6 \times 10^7$ in the case of city dataset), this $O(n^2)$ algorithm is too slow. To overcome this, we apply pruning techniques. Specifically, instead of considering all $i \in \{0, \dots, n\}$, we only consider $i \in [p n - \tau, p n + \tau]$ for some small number $\tau$. Since we are only dropping non-negative terms, this still gives us a valid lower bound on $\delta$. Furthermore, due to standard concentration bounds, it can easily be seen that taking $\tau = O(\sqrt{n} \cdot  \poly\log(n/\delta))$ suffices to compute the sum~\eqref{eq:dp-fragmented-rappor-expanded} to within an error of say 0.0001$\delta$. By applying a similar prunning to $j$, we end up with an algorithm that runs in time $O(n \cdot \poly\log(n/\delta))$, which suffices for our purposes.

\subsection{$B$-Randomized Response and RAPPOR}

For $B$-RR, it is simple to see that the mechanism becomes more private as $p$ increases for any $p \in [0, 1]$. The same holds for RAPPOR for $p \in [0, 1/2]$.  
Once again, we use a similar approach as in the previous subsection to give optimistic estimates for the noise parameters of $B$-RR and RAPPOR.
As discussed in Appendix~\ref{app:exp-histogram}, even these optimistic errors are already noticably larger than those of the other protocols considered.

Recall form the previous subsection that we only need to provide the following subroutine: given $p, \eps$, find a lower bound on $\delta$. 
We will describe a generic algorithm for such a task in the next two subsections. After that, we will describe how to initiate the algorithms specifically for $B$-Randomized Response and RAPPOR.

\subsubsection{Lower Bound on $\delta$ for Single Message Randomizers}

In this subsection, we give a generic lower bound on $\delta$ given $\eps$ for any single-message mechanism $\cM$ in the shuffled model. Suppose that the set of input of each user is $[B]$. For every $b \in [B]$, we use $\cD_b$ to denote the distribution of the output message of the randomizer when the input is $b$.

To derive this lower bound, we consider two neighboring databases $\bx = (1, \cdots, 1, x)$ and $\bx' = (1, \cdots, 1, x')$ where $x \neq x'$ are from $\{2, \dots, B\}$. We then compute the $\eps$-hockey stick divergence of $\cM(\bx)$ and $\cM(\bx')$. Due to Lemma~\ref{thm:dp-hockey-stick}, this is a lower bound on $\delta$ for which $\cM$ is $(\eps, \delta)$-DP. 

An advantage in taking vectors $\bx, \bx'$ as specified above is that their hockey stick divergence turns out to have a reasonably simple formula:

\begin{lemma} \label{lem:lower-bound-single}
For $\bx = (1, \dots, 1, x)$ and $\bx' = (1, \dots, 1, x')$, we have 
\begin{align} \label{eq:lb-hockey-stick}
d_{\eps}(\cM(\bx) || \cM(\bx')) = \frac{1}{n} \E\left[\left[\sum_{i=1}^n U_i\right]_+\right],
\end{align}
where $U_1, \dots, U_n$ are i.i.d. random variables where $U_i$ is sampled as follows. Sample an outcome (i.e. a possible output message) $o$ according to the distribution $\cD_1$ and let
\begin{align*}
U_i = \frac{\cD_x(o) - e^{\eps} \cdot \cD_{x'}(o)}{\cD_1(o)}.
\end{align*}
\end{lemma}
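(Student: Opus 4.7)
The plan is to compute the probability mass of $\cM(\bx)$ at an ordered output tuple and then plug into the definition of hockey stick divergence, so that the combinatorial $\frac{1}{n}$ emerges naturally and the remaining sum factors as an expectation under $\cD_1^{\otimes n}$.

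First I would note that because $\cM(\bx)$ and $\cM(\bx')$ are both invariant under permutation of the output coordinates (the shuffler applies a uniform random permutation), the hockey stick divergence is unchanged whether one sums over multisets or over ordered $n$-tuples. I will work with ordered $n$-tuples since this avoids annoying multinomial factors. For $\bx = (1,\dots,1,x)$, the randomizer of the $n$-th user draws a message from $\cD_x$ while the other $n-1$ draw independently from $\cD_1$, after which the $n$ messages are uniformly permuted. Decomposing by which final position $i \in [n]$ receives the $\cD_x$-message (each occurring with probability $1/n$), one obtains for any ordered output $(o_1, \dots, o_n)$
\begin{equation*}
\Pr[\cM(\bx) = (o_1, \dots, o_n)] \;=\; \frac{1}{n} \sum_{i=1}^n \cD_x(o_i) \prod_{j \neq i} \cD_1(o_j) \;=\; \frac{1}{n}\,\Bigl(\prod_{j=1}^n \cD_1(o_j)\Bigr) \sum_{i=1}^n \frac{\cD_x(o_i)}{\cD_1(o_i)},
\end{equation*}
with the convention that terms where $\cD_1(o_i)=0$ are handled by reverting to the unfactored form (such configurations have total mass zero whenever two coordinates lie outside $\supp(\cD_1)$, and otherwise only the corresponding single summand survives). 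The analogous formula holds for $\cM(\bx')$ with $\cD_x$ replaced by $\cD_{x'}$.

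Next I would substitute these two formulas into the definition of the $e^\eps$-hockey stick divergence:
\begin{equation*}
d_\eps(\cM(\bx) \,\|\, \cM(\bx')) \;=\; \sum_{(o_1, \dots, o_n)} \Bigl[\Pr[\cM(\bx) = (o_1,\dots,o_n)] - e^\eps \Pr[\cM(\bx') = (o_1,\dots,o_n)]\Bigr]_+.
\end{equation*}
Since the common factor $\frac{1}{n}\prod_{j} \cD_1(o_j)$ is non-negative, it pulls outside the $[\cdot]_+$ bracket, and the inner expression collapses to $\sum_{i=1}^n \frac{\cD_x(o_i) - e^\eps \cD_{x'}(o_i)}{\cD_1(o_i)} = \sum_{i=1}^n U_i$, giving
\begin{equation*}
d_\eps(\cM(\bx) \,\|\, \cM(\bx')) \;=\; \frac{1}{n} \sum_{(o_1,\dots,o_n)} \prod_{j=1}^n \cD_1(o_j) \cdot \Bigl[\sum_{i=1}^n U_i\Bigr]_+.
\end{equation*}
Finally, recognizing $\prod_j \cD_1(o_j)$ as the joint p.m.f.\ of $n$ i.i.d.\ draws $o_1,\dots,o_n \sim \cD_1$, the sum becomes exactly $\frac{1}{n}\,\E\bigl[[\sum_{i=1}^n U_i]_+\bigr]$, as claimed, with the $U_i$ independent since the $o_i$ are.

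The only real subtlety — and it is minor — is the handling of outputs $o$ with $\cD_1(o)=0$; I would address this up front by restricting to $o_j \in \supp(\cD_1)$ (the complementary event has probability zero under the $\cD_1^{\otimes n}$ sampling that appears in the final expectation), and observing that for ordered tuples violating this, the factored identity above still coincides with the unfactored form, so the manipulation above is valid term by term.
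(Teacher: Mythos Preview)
Your proposal is correct and follows essentially the same approach as the paper: compute the output probability as a $\cD_1^{\otimes n}$-weight times $\frac{1}{n}\sum_i \cD_x(o_i)/\cD_1(o_i)$, plug into the hockey-stick definition, pull out the common nonnegative factor, and read off the expectation. The only difference is cosmetic: the paper sums over multisets (introducing an $n!$ that is absorbed into $\Pr[\cM(\bone)=\bo]$), whereas you work directly with ordered tuples, which is arguably cleaner and sidesteps any multiplicity bookkeeping.
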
 

The proof of Lemma~\ref{lem:lower-bound-single} is essentially the same as that of Lemma 5.3 from~\cite{BalleBGN19}, except that we replace the ``blanket distribution'' there with the output distribution with 1 as an input $\cD_1$. Nonetheless, we give the full proof for completeness below.

\begin{proof}[Proof of Lemma~\ref{lem:lower-bound-single}]
Throughout the proof, we use $\bone$ to denote the all-zero vector.

Consider any multiset of messages $\bo = \{o_1, \dots, o_n\}$. The probability that this multiset $\bo$ is the output of $\cM(\bx)$ can be rearranged as follows:
\begin{align*}
&\Pr[\cM(\bx) = \bo] \\
&= \sum_{\pi: [n] \to [n]} \cD_1(o_{\pi(1)}) \cdots \cD_1(o_{\pi(n - 1)}) \cD_x(o_{\pi(n)}) \\
&= \sum_{\pi: [n] \to [n]} \cD_1(o_1) \cdots \cD_1(o_n) \frac{\cD_x(o_{\pi(n)})}{\cD_1(o_{\pi(n)})} \\
&= n! \cdot \cD_1(o_1) \cdots \cD_1(o_n) \cdot \left(\frac{1}{n} \sum_{i=1}^n \frac{\cD_x(o_i)}{\cD_1(o_i)}\right) \\
&= \Pr[\cM(\bone) = \bo] \cdot \left(\frac{1}{n} \sum_{i=1}^n \frac{\cD_x(o_i)}{\cD_1(o_i)}\right).
\end{align*}
Similarly, we have
\begin{align*}
\Pr[\cM(\bx') = \bo] = \Pr[\cM(\bone) = \bo] \cdot \left(\frac{1}{n} \sum_{i=1}^n \frac{\cD_{x'}(o_i)}{\cD_1(o_i)}\right).
\end{align*}
As a result,
\begin{align*}
&d_{\eps}(\cM(\bx) || \cM(\bx')) \\
&= \sum_{\bo} \left[\Pr[\cM(\bx) = \bo] - e^{\eps} \cdot \Pr[\cM(\bx') = \bo] \right]_+ \\
&= \sum_{\bo} \frac{\Pr[\cM(\bone) = \bo]}{n} \cdot \left[\sum_{i=1}^n \frac{\cD_{x}(o_i) - e^{\eps} \cdot \cD_{x'}(o_i)}{\cD_1(o_i)}\right]_+ \\
&= \E_{\bo \sim \cM(\bone)}\left[\sum_{\bo} \frac{1}{n} \cdot \left[\sum_{i=1}^n \frac{\cD_{x}(o_i) - e^{\eps} \cdot \cD_{x'}(o_i)}{\cD_1(o_i)}\right]_+\right] \\
&= \frac{1}{n} \E_{o_1, \cdots, o_n \sim \cD_1}\left[\left[\sum_{i=1}^n \frac{\cD_{x}(o_i) - e^{\eps} \cdot \cD_{x'}(o_i)}{\cD_1(o_i)}\right]_+\right] \\
&= \frac{1}{n} \E\left[\left[\sum_{i=1}^n U_i\right]_+\right]. \qedhere
\end{align*}
\end{proof}

\subsubsection{Efficiently Approximating Expected Positive Part of Sum of i.i.d. Random Variables}
\label{subsec:nonneg-iid-sum}

Our task now becomes how to compute the right hand side term of~\eqref{eq:lb-hockey-stick}. While this seems hard in general, the $U_i$'s that we will use below have very specific structures. Specifically, its support is only of size three; that is,
\begin{align*}
U_i =
\begin{cases}
d_1 & \text{ with probability } p_1, \\
d_2 & \text{ with probability } p_2, \\
d_3 & \text{ with probability } p_3.
\end{cases}
\end{align*}
It turns out that, in this case, the problem is trivially solvable in $O(n^2)$ time, because the desired non-negative part of sum $\E\left[\left[\sum_{i=1}^n U_i\right]_+\right]$ can be written in the following form:
\begin{align*}
\sum_{a_1=0}^n \Bin(a_1; n, p_1) \cdot \sum_{a_2=0}^{n - a_1} \Bin\left(a_2; n - a_1, \frac{p_2}{p_2 + p_3}\right) \cdot
\nonumber \\ [a_1d_1 + a_2d_2 + (n - a_1 - a_2)d_3]_+. 
\end{align*}


Although the trivial computation of the above expression requires $O(n^2)$ time, we can apply pruning techniques similar to those in Section~\ref{subsec:fragmented-rappor-param-computation} to reduce the running time to $O(n \cdot \poly\log(n/\delta))$, which suffices for our purposes.

\subsubsection{Parameters for $B$-Randomized Response}

For $B$-RR, we do not use Lemma~\ref{lem:lower-bound-single} directly on the messages, but we first group the messages into three groups (1) the message $x$, (2) the message $x'$ and (3) all other messages. In other words, we ``merge'' all messages that are neither $x$ nor $x'$ into a single outcome. Since merging messages does not increase\footnote{This is because such a merging is a form of post-processing.} hockey-stick divergence, we may compute the hockey stick divergence between $\cM(\bx)$ and $\cM(\bx')$ after such a merging to get a lower bound on $\delta$. In this case, it is not hard to check that we have
\begin{align*}
U_i = 
\begin{cases}
\frac{B}{p} \left((1 - p + \frac{p}{B}) - e^\eps \cdot \frac{p}{B}\right) & \text{ with probability } \frac{p}{B}, \\
\frac{B}{p} \left(\frac{p}{B} - e^\eps \cdot (1 - p + \frac{p}{B})\right) & \text{ with probability } \frac{p}{B}, \\
\frac{p(B - 2)}{B - 2p} \cdot (1 - e^{\eps}) & \text{ w.p. } 1 - \frac{2p}{B}, \\
\end{cases}
\end{align*}
where the first value corresponds to when the output is $x$, the second to when the output is $x'$, and the third to when the output is neither $x$ nor $x'$.

We use the algorithm from the previous subsection to compute a lower bound on $\E\left[\left[\sum_{i=1}^n U_i\right]_+\right]$, which from Lemma~\ref{lem:lower-bound-single} gives us a lower bound on $\delta$.

\subsubsection{Parameters for RAPPOR}

Similarly, for RAPPOR, we first group the messages into three types: (i) the string has the $x$th bit set to one, the $x'$th bit set to zero and the first bit set to zero, (ii) the string has the $x'$th bit set to one, the $x$th bit set to zero and the first bit set to zero, and (ii) all other messages. We remark here that the grouping here is necessary as otherwise there are 7 possible values that the random variable $U_i$ can take, which would result in an algorithm that is too slow (even after pruning). With this grouping, it is simple to check that our $U_i$ has the following distribution:
\begin{align*}
U_i =
\begin{cases}
\left(\frac{1 - p}{p}\right)^2 - e^{\eps} & \text{ w.p. } p^2(1 - p), \\
1 - e^{\eps} \left(\frac{1 - p}{p}\right)^2  & \text{ w.p. } p^2(1 - p), \\
\left(1 - e^{\eps}\right) \cdot \frac{1 - (1 - p)^3 - p^2(1 - p)}{1 - 2p^2(1- p)} & \text{ w.p. } 1 - 2p^2(1- p), \\
\end{cases}
\end{align*}
where the three values corresponding to the three groups respectively.

Once again, we then use the algorithm from Subsection~\ref{subsec:nonneg-iid-sum} to compute a lower bound on $\E\left[\left[\sum_{i=1}^n U_i\right]_+\right]$, which from Lemma~\ref{lem:lower-bound-single} gives us a lower bound on $\delta$.

\end{document}